\newlength{\actualtopmargin}
\newlength{\actualsidemargin}
\theoremstyle{plain}
  \newtheorem{theorem}{Theorem}
  \newtheorem{lemma}[theorem]{Lemma}
  \newtheorem{corollary}[theorem]{Corollary}
  \newtheorem{proposition}[theorem]{Proposition}
  \newtheorem{claim}{Claim}
\theoremstyle{definition}
  \newtheorem{definition}[theorem]{Definition}
\theoremstyle{remark}
  \newtheorem*{remark}{Remark}
\theoremstyle{plain}
  \newtheorem*{theorem*}{Theorem}
  \newtheorem*{lemma*}{Lemma}
  \newtheorem*{corollary*}{Corollary}
  \newtheorem*{proposition*}{Proposition}
  \newtheorem*{claim*}{Claim}
\newenvironment{step}
  {
    \begin{enumerate}

  }
  {\end{enumerate}}
\newenvironment{algorithm*}[1]
  {
    \begin{center}
      \hrulefill\\
      \textbf{#1}
  }
  {
    \vspace{-1\baselineskip}
    \hrulefill
    \end{center}
  }
\newenvironment{protocol*}[1]
  {
    \begin{center}
      \hrulefill\\
      \textbf{#1}
  }
  {
    \vspace{-1\baselineskip}
    \hrulefill
    \end{center}
  }
\newcommand{\bbC}{\mathbb{C}}
\newcommand{\bbN}{\mathbb{N}}
\newcommand{\bbZ}{\mathbb{Z}}
\newcommand{\bfD}{\mathbf{D}}
\newcommand{\calA}{\mathcal{A}}
\newcommand{\calH}{\mathcal{H}}
\newcommand{\calM}{\mathcal{M}}
\newcommand{\calP}{\mathcal{P}}
\newcommand{\calR}{\mathcal{R}}
\newcommand{\calS}{\mathcal{S}}
\newcommand{\calT}{\mathcal{T}}
\newcommand{\calV}{\mathcal{V}}
\newcommand{\calW}{\mathcal{W}}
\newcommand{\calX}{\mathcal{X}}
\newcommand{\calY}{\mathcal{Y}}
\newcommand{\sfA}{\mathsf{A}}
\newcommand{\sfB}{\mathsf{B}}
\newcommand{\sfM}{\mathsf{M}}
\newcommand{\sfP}{\mathsf{P}}
\newcommand{\sfQ}{\mathsf{Q}}
\newcommand{\sfR}{\mathsf{R}}
\newcommand{\sfS}{\mathsf{S}}
\newcommand{\sfV}{\mathsf{V}}
\newcommand{\classfont}{\mathrm}
\newcommand{\NP}{\classfont{NP}}
\newcommand{\BPP}{\classfont{BPP}}
\newcommand{\PP}{\classfont{PP}}
\newcommand{\AzeroPP}{\classfont{A}_0\classfont{PP}}
\newcommand{\SBQP}{\classfont{SBQP}}
\newcommand{\PSPACE}{\classfont{PSPACE}}
\newcommand{\QIP}{\classfont{QIP}}
\newcommand{\QMIP}{\classfont{QMIP}}
\newcommand{\MA}{\classfont{MA}}
\newcommand{\QMA}{\classfont{QMA}}
\newcommand{\BQNP}{\classfont{BQNP}}
\newcommand{\QCMA}{\classfont{QCMA}}
\newcommand{\MQA}{\classfont{MQA}}
\newcommand{\EPRQMA}[1]{\QMA^{#1\text{-}\EPR}}
\newcommand{\PH}{\classfont{PH}}
\newcommand{\defeq}{\stackrel{\mathrm{def}}{=}}
\newcommand{\const}{\mathrm{const}}
\newcommand{\SD}{\mathrm{SD}}
\newcommand{\bra}[1]{\langle #1 \vert}
\newcommand{\ket}[1]{\vert #1 \rangle}
\newcommand{\ketbra}[1]{\vert #1 \rangle \langle #1 \vert}
\newcommand{\braket}[2]{\langle #1 \vert #2 \rangle}
\newcommand{\conjugate}[1]{#1^{\dagger}}
\newcommand{\tr}{\mathrm{tr}}
\newcommand{\tensor}{\otimes}
\newcommand{\norm}[1]{\Vert #1 \Vert}
\newcommand{\bignorm}[1]{\bigl\Vert #1 \bigr\Vert}
\newcommand{\trnorm}[1]{\Vert #1 \Vert_{\tr}}
\newcommand{\bigtrnorm}[1]{\bigl\Vert #1 \bigr\Vert_{\tr}}
\newcommand{\biggtrnorm}[1]{\biggl\Vert #1 \biggr\Vert_{\tr}}
\newcommand{\abs}[1]{\vert #1 \vert}
\newcommand{\bigabs}[1]{\bigl\vert #1 \bigr\vert}
\newcommand{\Bigabs}[1]{\Bigl\vert #1 \Bigr\vert}
\newcommand{\biggabs}[1]{\biggl\vert #1 \biggr\vert}
\newcommand{\Biggabs}[1]{\Biggl\vert #1 \Biggr\vert}
\newcommand{\ceil}[1]{\lceil #1 \rceil}
\newcommand{\floor}[1]{\lfloor #1 \rfloor}
\newcommand{\function}[3]{{#1 \colon #2 \to #3}}
\newcommand{\set}[2]{{\{ #1 \colon #2 \}}}
\newcommand{\init}{\mathrm{init}}
\newcommand{\acc}{\mathrm{acc}}
\newcommand{\rej}{\mathrm{rej}}
\newcommand{\yes}{\mathrm{yes}}
\newcommand{\no}{\mathrm{no}}
\newcommand{\EPR}{\mathrm{EPR}}
\newcommand{\Complex}{\bbC}
\newcommand{\Natural}{\bbN}
\newcommand{\Integers}{\bbZ}
\newcommand{\Nonnegative}{\Integers^{+}}
\newcommand{\Binary}{{\{ 0, 1 \}}}
\newcommand{\Density}{\bfD}
\newcommand{\ignore}[1]{}
\begin{document}

\sloppy


\title{\Large
  \textbf{
    Stronger Methods of Making Quantum Interactive Proofs Perfectly Complete
  }\\
}

\author{
  Hirotada Kobayashi\footnotemark[1]\\
  \and
  Fran\c{c}ois Le Gall\footnotemark[2]\\
  \and
  Harumichi Nishimura\footnotemark[3]\\
}

\date{}

\maketitle
\thispagestyle{empty}
\pagestyle{plain}
\setcounter{page}{0}

\renewcommand{\thefootnote}{\fnsymbol{footnote}}

\vspace{-5mm}

\begin{center}
{\large
  \footnotemark[1]%
  Principles of Informatics Research Division\\
  National Institute of Informatics\\
  Tokyo, Japan\\
  [2.5mm]
  \footnotemark[2]%
  Department of Computer Science\\
  Graduate School of Information Science and Technology\\
  The University of Tokyo\\
  Tokyo, Japan\\
  [2.5mm]
  \footnotemark[3]%
  Department of Computer Science and Mathematical Informatics\\
  Graduate School of Information Science\\
  Nagoya University\\
  Nagoya, Aichi, Japan
}\\
[8mm]
\end{center}

\renewcommand{\thefootnote}{\arabic{footnote}}


\begin{abstract}
This paper presents stronger methods of achieving perfect completeness in quantum interactive proofs.
First, it is proved that any problem in $\QMA$ has a two-message quantum interactive proof system
of perfect completeness with constant soundness error,
where the verifier has only to send a constant number of halves of EPR pairs.
This in particular implies that
the class~$\QMA$ is necessarily included by
the class~${\QIP_1(2)}$ of problems having two-message quantum interactive proofs
of perfect completeness,
which gives the first nontrivial upper bound for $\QMA$ in terms of quantum interactive proofs.
It is also proved that any problem having an $m$-message quantum interactive proof system
necessarily has an ${(m+1)}$-message quantum interactive proof system
of perfect completeness.
This improves the previous result due to Kitaev and Watrous,
where the resulting system of perfect completeness requires ${m+2}$~messages
if not using the parallelization result.
\end{abstract}

\clearpage



\section{Introduction}
\subsection{Background and Motivation}

The classical complexity class~$\MA$ of problems having Merlin-Arthur (MA) proof systems,
first introduced by Babai \cite{Bab85STOC},
is a natural probabilistic generalization of the class~$\NP$.
Informally, in a Merlin-Arthur proof system,
Arthur, a probabilistic polynomial-time verifier,
first receives a message (a witness) from Merlin, an all-powerful but untrustworthy prover,
and then checks with high probability the validity of Merlin's claim
that the common input is a yes-instance of the problem.

Quantum Merlin-Arthur (QMA) proof systems are a generalization of the Merlin-Arthur proof systems to the quantum setting,
whose notion was already discussed at an early stage of quantum computing research 
in a technical report by Knill~\cite{Kni96TR}.
In this setting, Arthur now receives a quantum witness from Merlin
and performs polynomial-time quantum computation to check with high probability
whether the input is a yes-instance or not.
The resulting complexity class is called $\QMA$~\cite{Wat00FOCS}
(originally called $\BQNP$~\cite{Kit99AQIP,KitSheVya02Book}),
and has been central to the development of quantum complexity theory
in that it plays a role similar to that $\NP$ plays in classical computation.

The standard way of defining $\MA$ and $\QMA$ allows two-sided bounded error:
each yes-instance may be wrongly rejected with small probability (completeness error),
while each no-instance may also be wrongly accepted with small probability (soundness error).
If completeness error is zero,
that is, any yes-instance is never wrongly rejected,
the corresponding system is said to have \emph{perfect completeness}.
The versions of $\MA$ and $\QMA$ with perfect completeness are denoted by
$\MA_1$ and $\QMA_1$, respectively.

Classically,
it is known that any Merlin-Arthur proof system that may have two-sided bounded error
can always be modified into another Merlin-Arthur proof system with one-sided bounded error of perfect completeness,
i.e., ${\MA = \MA_1}$ holds~\cite{ZacFur87FSTTCS,GolZuc11LNCS}.
This is a very nice property in that
honest Merlin can always convince Arthur without error
by providing a suitable witness for a yes-instance.
A natural question to ask is whether the same property holds for quantum Merlin-Arthur proof systems as well,
i.e., whether ${\QMA=\QMA_1}$ or not.
This question still remains unsolved after many years of investigations.
Besides its theoretical interest, 
answering this question by the affirmative would lead to many consequences.
In particular, any computational problem complete for the class~$\QMA_1$,
for instance the \textsc{Quantum Satisfiability (QSAT)} problems~\cite{Bra06arXiv},
would immediately become complete for the class~$\QMA$ as well.
This would not only lead to a better understanding of $\QMA$
but also have potentials to significantly simplify and strengthen
a possible quantum version of the celebrated PCP theorem~\cite{AroSaf98JACM,AroLunMotSudSze98JACM}
that many researchers have been trying to establish~\cite{AhaAraLanVaz09STOC,AhaAraLanVaz11FOCS,AhaEld11FOCS},
partly because one-sided error verifications are much easier to treat,
and also because the \textsc{QSAT} problems are more direct quantum analogues of
the \textsc{SAT} problems than the \textsc{Local Hamiltonian} problems
(note that the classical PCP theorem can be viewed as proving the $\NP$-completeness of a special case of the \textsc{3SAT} problem
in which, for every no-instance,
at most a constant fraction of clauses are simultaneously satisfiable).

As a barrier to affirmatively answering the $\QMA$ versus $\QMA_1$ question,
Aaronson~\cite{Aar09QIC} constructed a quantum oracle relative to which 
$\QMA_1$ is a proper subclass of $\QMA$, which means that a ``black-box'' proof 
of ${\QMA=\QMA_1}$
cannot exist. 
Nevertheless, no classical oracle is known that separates $\QMA_1$ from $\QMA$, 
and the following recent results in some sense step towards an affirmative answer to the question:
Nagaj, Wocjan, and Zhang~\cite{NagWocZha09QIC} showed that perfect completeness is achievable
for a special case of quantum Merlin-Arthur proof systems
in which some real number related to the maximum acceptance probability of a given system
can be exactly expressed with a bit string of polynomial length.
More recently, Jordan, Kobayashi, Nagaj, and Nishimura~\cite{JorKobNagNis12QIC}
proved that the equality holds for quantum Merlin-Arthur proof systems \emph{of classical witness},
that is, ${\QCMA = \QCMA_1}$ (or ${\MQA = \MQA_1}$ in a recently-proposed terminology~\cite{Wat09ECSS,GhaSikUpa13QIC}) holds,
assuming that the circuit of a verifier is exactly implementable with a gate set
in which the Hadamard and any classical reversible transformations are performable without error.
In particular, the latter result gives evidence that,
if we put some natural assumption on a gate set,
the quantum oracle barrier by Aaronson~\cite{Aar09QIC} may not be an insurmountable obstacle
when seeking the possibility of ${\QMA = \QMA_1}$,
as the arguments in Ref.~\cite{Aar09QIC} also lead to a quantum oracle
that separates $\QCMA_1$ from $\QCMA$.

Quantum Merlin-Arthur proof systems may be viewed as a special case of more general quantum interactive proof systems,
where the verifier and the prover may exchange messages using many rounds of communications. 
In their seminal paper, Kitaev and Watrous~\cite{KitWat00STOC} showed that
perfect completeness is achievable in quantum interactive proof systems.
More precisely, 
with two additional messages,
any quantum interactive proof system
that may involve two-sided bounded error
can be transformed into another quantum interactive proof system
that has one-sided bounded error of perfect completeness.
This in particular implies that ${\QMA \subseteq \QIP_1(3)}$,
where $\QIP_1(3)$ is the class of problems having three-message quantum interactive proof systems of perfect completeness.
Unfortunately, $\QIP_1(3)$ is already so powerful that it includes $\PSPACE$~\cite{Wat03TCS}
(actually, ${\QIP_1(3) = \QIP = \PSPACE}$~\cite{KitWat00STOC,JaiJiUpaWat11JACM},
where $\QIP$ denotes the class of problems having general quantum interactive proofs).
Accordingly, this only gives a weaker result for the upper bound of $\QMA$,
as $\QMA$ is known to be inside $\PP$~\cite{KitWat00STOC,Wat00FOCS,MarWat05CC}
(in fact, a slightly stronger bound~${\QMA \subseteq \AzeroPP = \SBQP}$ is known~\cite{Vya03ECCC,Kup09arXiv}).


\subsection{Our Results and Their Meaning}

This paper presents new general techniques to transform
quantum interactive proof systems into those of perfect completeness,
which increase the number of messages by just one.
Our first result states that any problem in $\QMA$ has a two-message quantum interactive proof of perfect completeness.
 
\begin{theorem}
${\QMA \subseteq \QIP_1(2)}$.
\label{Theorem: QMA is in QIP_1(2)}
\end{theorem}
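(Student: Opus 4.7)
The plan is to start from a strongly amplified $\QMA$ verifier and re-engineer it into a two-message protocol whose first message is a constant number of halves of EPR pairs. By the Marriott--Watrous-style amplification for $\QMA$, I may assume without loss of generality that the underlying $\QMA$ verifier $V$ has completeness $1-2^{-q(n)}$ and soundness $2^{-q(n)}$ for any desired polynomial $q$, with a polynomial-size quantum witness. The obstacle to perfect completeness is that on a yes-instance, the maximal acceptance probability $p^{\ast}$ of $V$ is only close to $1$, not exactly equal to $1$, and moreover $p^{\ast}$ is not known in advance. The idea is to have the verifier send a constant number of EPR halves that the prover uses as a ``teleportation handle'' to steer the verifier's kept qubits into a state that makes a slightly modified verification accept with probability exactly $1$.

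The next step is to design the verifier's check. After receiving the EPR halves (say $k = O(1)$ of them) the prover sends back both a candidate witness register and the would-be verifier workspace, including a teleportation-correction register. The verifier then runs a circuit that (i) combines the returned workspace with its own EPR halves through a controlled-correction operation (à la standard teleportation), and (ii) simulates the amplified $V$ on the resulting state, but with an additional ``projector on target'' step built in, chosen from a small constant-size family indexed by the implicit measurement outcomes encoded in the EPR halves. Each circuit in this family accepts with probability exactly $1$ on a specific target state whose amplitude structure matches the acceptance amplitudes attainable by $V$ on a yes-instance; collectively, the family covers all the possible acceptance amplitudes the prover might need to reproduce.

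For completeness, on a yes-instance the honest prover holds an essentially optimal witness and, knowing the circuit family, can prepare on his side the joint purification that, when combined with the verifier's EPR halves via his measurement, steers the verifier's registers to exactly the target state for whichever circuit the verifier effectively chose. Because steering via EPR pairs is deterministic once the prover's measurement outcome is fixed and then communicated as the teleportation correction, the prover can always send the correction that makes the check accept with certainty; this is essentially an interactive analogue of the Nagaj--Wocjan--Zhang trick, where the verifier's quantum coins and the prover's correction together realize exactly a rational acceptance amplitude. For soundness, any cheating strategy on a no-instance reduces (by purifying the prover and tracing out everything he keeps) to an attempted single-message $\QMA$-style witness to a modified verifier whose dimension is only a constant factor larger; the exponentially small soundness of the amplified $V$ then bounds the cheating probability by a constant strictly below $1$.

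The main obstacle I expect is item (ii) above: constructing the constant-size family of augmented verification circuits so that (a) for every yes-instance some attainable acceptance amplitude of $V$ lies in the family, enabling the steering-based perfect completeness, while simultaneously (b) the family is rigid enough that no no-instance strategy can exploit the verifier's EPR halves to boost acceptance, i.e.\ the Kitaev--Watrous-style reflection analysis still yields a nontrivial soundness bound. Matching these two requirements while keeping the number of EPR halves constant is the delicate point, and I anticipate that carrying it out cleanly needs an explicit description of how the prover's measurement outcomes are reabsorbed by the verifier's conditional circuit, together with a careful accounting of how the amplified soundness of $V$ degrades under the constant-dimensional extension.
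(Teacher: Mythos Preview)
Your plan gestures at the right building blocks (shared EPR pairs, teleportation/steering, a reflection-style check) but it does not contain the idea that actually makes the construction work, and the concrete mechanism you outline would not close the gap.

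The central difficulty is that the maximal acceptance probability $p^\ast$ of the amplified verifier on a yes-instance is an \emph{unknown real number}. A ``constant-size family'' of verification circuits, each tuned to accept with certainty on some fixed target, cannot cover a continuum of possible $p^\ast$ values; and if the prover is allowed to tell the verifier which member of the family to use, a dishonest prover on a no-instance will simply lie. Steering via EPR pairs does not rescue this: the prover cannot deterministically steer the verifier's halves to a prescribed target state of his choosing --- the steered state depends on a random measurement outcome, and communicating that outcome is exactly the same as letting the prover pick the circuit. Your invocation of the Nagaj--Wocjan--Zhang idea (``realize exactly a rational acceptance amplitude'') is precisely the special case that is known \emph{not} to suffice in general, since $p^\ast$ need not be polynomially expressible.

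The paper resolves this by having the prover transmit the continuous parameter itself, encoded as a quantum state: he sends many copies of the Choi--Jamio{\l}kowski state $\ket{J(W_q)}$ for the one-qubit unitary $W_q$ with $pq=\tfrac12$, where $p$ is (a simple function of) $p^\ast$. The verifier first runs a \textsc{Distillation Procedure} on the ordinary witness to produce copies of $\ket{\chi_p}=W_p\ket{0}$, and then uses the received $\ket{J(W_q)}$'s to \emph{simulate} the \textsc{Reflection Procedure} for $W_p\otimes W_q$, whose associated Hermitian operator has eigenvalue exactly $1/2$. That is what yields perfect completeness. Soundness is where the EPR pairs are actually used, and it is far from the one-line reduction you sketch: one needs the finite quantum de~Finetti theorem (after a random permutation of the prover's registers), a \textsc{Swap Test} to force near-purity, and a \textsc{Space Restriction Test} exploiting the fact that the verifier holds half of each EPR pair, all to pin the prover's registers close to genuine states of the form $\ket{J(W_q)}$. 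None of these ingredients appears in your outline, and without them the soundness analysis does not go through --- the prover's correlations with the verifier's EPR halves are precisely what a naive ``reduce to a single-message witness'' argument fails to control.
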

Here $\QIP_1(2)$ is the class of problems having two-message quantum interactive proof systems of perfect completeness (with negligible soundness error).
This gives the first nontrivial upper bound of $\QMA$ in terms of quantum interactive proofs,
which has no relation known to the existing upper bound~${\AzeroPP = \SBQP}$.
Note that the inclusion~${\QMA \subseteq \QIP(2)}$ is indeed trivial
for the two-sided error class~$\QIP(2)$ of two-message quantum interactive proofs,
but the inclusion here is by the one-sided error class~$\QIP_1(2)$
and is nontrivial to prove.

In fact, we prove a much stronger result,
which arguably steps towards settling the $\QMA$ versus $\QMA_1$ question. 
Namely, we show that, to achieve perfect completeness with constant soundness error,
the verifier in the two-message quantum interactive proof system has only to send a constant number of halves of EPR pairs to the prover.
Or in other words, any problem in $\QMA$ has a quantum Merlin-Arthur proof system
of perfect completeness with constant soundness error,
in which Arthur and Merlin share a constant number of EPR pairs a priori.
More formally,
let ${\EPRQMA{k}(c,s)}$ denote the class of problems having quantum Merlin-Arthur proof systems
with completeness~$c$ and soundness~$s$,
where Arthur and Merlin initially share $k$~EPR pairs.
Then we have the following containment.

\begin{theorem}
For any constant~${s \in (0,1]}$,
there exists a constant~${k \in \Natural}$
such that
\[
\QMA \subseteq \EPRQMA{k}(1, s).
\]
\label{Theorem: QMA is in 2^k-EPR-QMA(1,s)}
\end{theorem}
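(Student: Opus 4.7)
The plan is to first amplify the given $\QMA$ protocol so that its completeness and soundness errors become exponentially small without enlarging the witness, and then to exploit the $k$ shared EPR pairs to convert the residual completeness error into exact completeness, at the cost of only a constant loss in soundness controllable by the choice of $k$.

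First, I would invoke the Marriott--Watrous in-place amplification to obtain an amplified verifier whose circuit $V$ acts on a witness register $\calW$ and a private ancilla register $\calA$ initialised to $\ket{0}$, together with an accepting projector $\Pi$, such that the induced POVM element $M=\bra{0}_{\calA}V^{\dagger}\Pi V\ket{0}_{\calA}$ on $\calW$ has top eigenvalue at least $1-2^{-p(n)}$ on yes-instances and at most $2^{-p(n)}$ on no-instances, for any chosen polynomial $p$. This is the starting point: a genuine gap on $\lambda_{\max}(M)$, but not yet perfect completeness.

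Second, I would construct the $\EPRQMA{k}$ protocol as follows. The verifier prepares $k$ EPR pairs, keeps the first half of each (forming a $k$-qubit register $\calE$), and sends the other halves to the prover as the pre-shared resource. The prover responds with a quantum message occupying the witness register $\calW$ together with a $k$-qubit ``correction'' register $\calC$. Arthur then performs a combined measurement on $\calW\calA$ and $\calE\calC$ whose accepting projector $\Pi^{\mathrm{new}}$ interleaves the original acceptance test $\{\Pi,I-\Pi\}$ with a Bell-basis test pairing each qubit of $\calE$ with the corresponding qubit of $\calC$. The design of $\Pi^{\mathrm{new}}$ is chosen so that, on yes-instances, the honest Merlin can prepare $\calW$ in the top eigenvector $\ket{\psi}$ of $M$ and couple $\calC$ to the non-accepting component of $V\ket{\psi}\ket{0_{\calA}}$ in such a way that the failure mode of the $V$-check and a matched failure mode of the Bell check coincide; both are then relabelled as ``accept,'' driving the overall acceptance probability to exactly $1$.

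The main obstacle is showing that this combined test preserves a constant soundness $s$ on no-instances. The plan is a spectral argument in which $\Pi^{\mathrm{new}}$ decomposes into two competing contributions: one controlled by the bound $\lambda_{\max}(M)\le 2^{-p(n)}$ (bounding the ``genuine'' accepting branch), and one controlled by the Bell-test bias of any prover strategy that tries to steer $\calC$ into a matched configuration with $\calE$ without cooperating on $V$. The key point is that any local operation the prover applies to $\calC$ is essentially uncorrelated with $\calE$ beyond the entanglement supplied by the $k$ EPR pairs, so the Bell-test acceptance per pair is bounded by a constant strictly less than $1$; independent repetition across the $k$ slots then drives the ``cheating'' contribution below any prescribed $s$, while choosing $p$ large enough makes the ``genuine'' contribution negligible. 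The delicate technical step is verifying that the correlation structure introduced by the relabelling does not create new avenues for a cheating prover, which should follow by viewing $\Pi^{\mathrm{new}}$ as acting on a purification of Merlin's strategy and bounding its operator norm restricted to the no-instance subspace.
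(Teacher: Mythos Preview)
Your proposal has a fatal gap in the soundness argument. You claim that ``any local operation the prover applies to $\calC$ is essentially uncorrelated with $\calE$ beyond the entanglement supplied by the $k$ EPR pairs, so the Bell-test acceptance per pair is bounded by a constant strictly less than $1$.'' But the prover \emph{holds the other halves} of those EPR pairs; by returning them (possibly after a local unitary) as the register $\calC$, the prover can steer every $(\calE_j,\calC_j)$ pair onto any chosen Bell state with probability~$1$. There is no independent-repetition bound here: whatever ``matched failure mode'' you relabel as accept, a dishonest prover on a no-instance can force that Bell outcome deterministically and hence be accepted whenever the $V$-test fails --- which is almost always after amplification. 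The completeness side is also underspecified: the prover must commit to $(\calW,\calC)$ before the verifier applies $V$, and the non-accepting amplitude $\sqrt{1-\lambda_{\max}(M)}$ is an unknown real number in the verifier's private space after $V$ acts, so it is not explained how any fixed Bell-outcome relabelling makes the acceptance probability land on \emph{exactly}~$1$. The sentence you flag as ``delicate'' is in fact where the argument breaks.

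The paper's route is entirely different and does not try to cancel the residual error against a Bell event. It builds a \textsc{Reflection Procedure} (a one-step amplitude amplification) that accepts with certainty precisely when an associated Hermitian operator has eigenvalue $1/2$; the prover is asked to supply, via Choi--Jamio{\l}kowski states of a single-qubit unitary $W_q$, the information needed to rescale the (distilled) acceptance probability to exactly $1/2$, and the verifier simulates the reflection using those states. The shared EPR pairs serve only to certify that the prover's states are legitimate Choi--Jamio{\l}kowski states of the required form: the verifier keeps one half of each pair, randomly permutes the pairs, invokes the finite quantum de~Finetti theorem to reduce to mixtures of two-fold products, and then applies a \textsc{Space Restriction Test} (exploiting that the verifier's halves remain close to maximally mixed) together with a \textsc{Swap Test} to pin down the form. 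The soundness analysis is a careful chain of approximations (Propositions~\ref{Proposition: soundness with sigma_3}, \ref{Proposition: soundness with sigma_2}, and~\ref{Proposition: soundness with sigma_1}) and bears no resemblance to an independent-Bell-test bound.
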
\vspace{-6mm}


Theorem~\ref{Theorem: QMA is in QIP_1(2)} is an immediate consequence of Theorem~\ref{Theorem: QMA is in 2^k-EPR-QMA(1,s)},
as one may view quantum Merlin-Arthur proof systems with shared EPR pairs as a special case of two-message quantum interactive proofs where
the verifier first generates the EPR pairs and sends halves of them to the prover
(and the parallel repetition of two-message quantum interactive proofs works perfectly~\cite{KitWat00STOC}).
Theorem~\ref{Theorem: QMA is in 2^k-EPR-QMA(1,s)} nevertheless appears to be much stronger than Theorem~\ref{Theorem: QMA is in QIP_1(2)}
since it shows that perfect completeness is achievable
with just one additional message of a very restricted form
(a constant number of halves of EPR pairs).
To see this,
let $\EPRQMA{\const}$ be the class of problems having
quantum Merlin-Arthur proof systems with a constant number of prior shared EPR pairs
that may involve two-sided bounded error,
and let $\EPRQMA{\const}_1$ be that
of perfect completeness.
Then, indeed, the equality~${\EPRQMA{\const} = \QMA}$ immediately follows
from the result by Beigi, Shor, and Watrous~\cite{BeiShoWat11ToC},
as any quantum Merlin-Arthur proof system with a constant number of prior shared EPR pairs
is a special case of two-message quantum interactive proofs \emph{with short questions}
(i.e., two-message quantum interactive proofs
with the first message consisting of at most logarithmically many qubits).
Therefore, we obtain the following characterization of $\QMA$.
\begin{corollary}
${\EPRQMA{\const}_1 = \EPRQMA{\const} = \QMA}$.
\label{Corollary: QMA = const-EPR-QMA(1,const)}
\end{corollary}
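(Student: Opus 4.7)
The plan is to verify the corollary as a three-way chain of containments, two of which are essentially immediate and one of which invokes a known result cited in the excerpt.

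First I would handle $\QMA \subseteq \EPRQMA{\const}_1$, which is exactly what Theorem~\ref{Theorem: QMA is in 2^k-EPR-QMA(1,s)} gives once we fix any constant soundness, for instance $s = 1/3$: the theorem produces a constant $k \in \Natural$ for which $\QMA \subseteq \EPRQMA{k}(1, 1/3)$, and by definition any such protocol uses a constant number of shared EPR pairs and enjoys perfect completeness, so it lies in $\EPRQMA{\const}_1$. Second, the inclusion $\EPRQMA{\const}_1 \subseteq \EPRQMA{\const}$ is immediate from the definitions, since perfect completeness is a restriction of the two-sided bounded-error condition.

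The only step with any content is the reverse inclusion $\EPRQMA{\const} \subseteq \QMA$. Here I would reinterpret an arbitrary $\EPRQMA{\const}$ protocol as a two-message quantum interactive proof: the verifier prepares $k$ EPR pairs (with $k$ a constant depending only on the protocol, not on the input size), sends one half of each to the prover as the first message, receives the quantum witness together with the prover-side halves as the second message, and then performs the original $\QMA$-style verification on its own halves together with the returned register. The verifier's message in this two-message protocol consists of exactly $k = O(1)$ qubits, which is in particular $O(\log n)$. The result of Beigi, Shor, and Watrous~\cite{BeiShoWat11ToC} says that any two-message quantum interactive proof with a logarithmic-length first message from the verifier lies in $\QMA$, which yields $\EPRQMA{\const} \subseteq \QMA$ and closes the chain.

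There is essentially no genuine obstacle in proving the corollary itself, since all the heavy lifting is done elsewhere: Theorem~\ref{Theorem: QMA is in 2^k-EPR-QMA(1,s)} is the main technical contribution and provides the only nontrivial direction, while the matching upper bound is imported wholesale from Beigi--Shor--Watrous. The one routine check worth mentioning is that the parameter $k$ produced in the reinterpretation as a short-question $\QIP(2)$ protocol is a fixed constant independent of input length (so that the Beigi--Shor--Watrous hypothesis of logarithmic question length is comfortably met), which is automatic from the very definition of $\EPRQMA{\const}$.
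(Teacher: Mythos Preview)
Your proposal is correct and follows essentially the same route as the paper: the chain $\QMA \subseteq \EPRQMA{\const}_1 \subseteq \EPRQMA{\const} \subseteq \QMA$, with the first inclusion from Theorem~\ref{Theorem: QMA is in 2^k-EPR-QMA(1,s)} and the last from Beigi--Shor--Watrous on short-question $\QIP(2)$. One small point: the middle inclusion is not literally ``immediate from the definitions,'' since $\EPRQMA{\const}_1$ is defined with soundness~$1/2$ while $\EPRQMA{\const}$ requires soundness~$1/3$; you need one round of parallel repetition (doubling the number of shared EPR pairs) to push the soundness below $1/3$, exactly as the paper's Remark following Definitions~\ref{Definition: const-EPR-QMA}~and~\ref{Definition: const-EPR-QMA_1} explains.
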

This in particular implies that perfect completeness is achievable
for the model of quantum Merlin-Arthur proof systems with a constant number of prior shared EPR pairs,
a model that has computational power equivalent to $\QMA$.
Similar arguments further imply that perfect completeness is achievable
even with the models of quantum Merlin-Arthur proof systems with a logarithmic number of prior shared EPR pairs
and ``short-question'' two-message quantum interactive proof systems,
as both of these have computational power equivalent to $\QMA$.

The methodology developed in this paper essentially shows that,
in order to obtain the inclusion~${\QMA \subseteq \QMA_1}$
(and thus immediately the equality~${\QMA=\QMA_1}$),
it is sufficient to find a way of eliminating the need for the constant number of shared EPR pairs in our proof system.
In fact, as will be clear with our proof structure,
the constant number of shared EPR pairs are necessary
only for the purpose of forcing a dishonest prover to send a witness that is
close to some maximally entangled state of constant dimensions.
Hence, some suitable procedure that tests if a given state of constant dimensions
is sufficiently entangled or not may replace the shared EPR pairs
to affirmatively answer the $\QMA$ versus $\QMA_1$ question
(if two-sided error is allowed, such a test is possible with quantum state tomography).

For general quantum interactive proof systems,
we further present a method that
makes any quantum interactive proof system perfectly complete
by increasing the number of messages by just one.
This improves the previous result due to Kitaev and Watrous~\cite{KitWat00STOC},
whose construction increases the number of messages by two,
if not using their parallelization result.
More precisely, for the class~$\QIP(m)$ of problems having $m$-message quantum interactive proofs that may involve two-sided bounded error,
and the class~$\QIP_1(m)$ of problems having those of perfect completeness,
we show the following.
\begin{theorem}[informal statement]
For any ${m \geq 2}$,
\[
\QIP(m) \subseteq \QIP_1(m+1).
\]
\label{Theorem: QIP(m) is in QIP_1(m+1), informal}
\end{theorem}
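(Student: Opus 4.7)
The plan is to apply the EPR-sharing technique developed for Theorem~\ref{Theorem: QMA is in 2^k-EPR-QMA(1,s)} at the last round of an arbitrary $\QIP(m)$ protocol. The Kitaev--Watrous construction of perfect completeness costs two additional messages because the verifier must both commit to a threshold on the acceptance amplitude and then check a proof about it; our EPR-sharing idea replaces the first of these with pre-shared entanglement that can be piggy-backed on an existing verifier-to-prover message, saving exactly one message.

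First, I would take an arbitrary $\QIP(m)$ protocol for a language $L$ and amplify it so that the completeness and soundness errors are both negligible, without increasing the number of messages (by standard parallel repetition for $m=2$, or by the Kitaev--Watrous parallelization for larger~$m$). I would then put the protocol into a canonical unitary form in which acceptance is decided by measuring a single designated output qubit, so that the acceptance amplitude is a well-defined inner product depending on the joint state reached after the $m$-th message.

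Second, I would view the state held jointly by the verifier and the prover just before the verifier's final measurement as a QMA-style witness: the verifier's last-round operation (together with the prover's last action, if the last message is from the prover) is a QMA-type verification relative to that state. The core of the construction is then to invoke the machinery of Theorem~\ref{Theorem: QMA is in 2^k-EPR-QMA(1,s)} on this final stage. Concretely, the new $(m+1)$-message protocol runs identically to the original for the first $m-1$ messages; the $m$-th message (whether originally verifier-to-prover or engineered by purification to be so) now also carries halves of a constant number of EPR pairs sent by the verifier; the new $(m+1)$-st message from the prover consists of the original last message augmented by the perfect-completeness witness dictated by Theorem~\ref{Theorem: QMA is in 2^k-EPR-QMA(1,s)}. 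The verifier's final check combines the original accept projection with the EPR-based test from the proof of Theorem~\ref{Theorem: QMA is in 2^k-EPR-QMA(1,s)}, and soundness is then amplified by parallel repetition within the same number of messages.

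The main obstacle will be the soundness analysis. Unlike the pure QMA setting, the ``witness'' fed into the final round is produced by a potentially cheating prover who has interacted with the verifier across $m-1$ earlier rounds, so the state on which the EPR test is applied is not under the verifier's sole control and may be correlated with the earlier transcript. The key reduction step is to argue that, conditioned on any fixed view of the first $m-1$ messages, a strategy making the new verifier accept with probability exceeding the targeted soundness can be converted, via the EPR-state analysis of Theorem~\ref{Theorem: QMA is in 2^k-EPR-QMA(1,s)}, into an original $\QIP(m)$ prover whose acceptance probability exceeds the amplified soundness, a contradiction. Carrying this out cleanly requires adapting the EPR-sharing analysis so that it composes with an adversarial rather than a trusted preparation of the final-round witness register.
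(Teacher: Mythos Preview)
Your proposal has a genuine gap: the machinery of Theorem~\ref{Theorem: QMA is in 2^k-EPR-QMA(1,s)} cannot be grafted onto the last round of an interactive protocol in the way you describe. The \textsc{Distillation Procedure} and the \textsc{Reflection Simulation Test} both rely on the verifier being able to apply the verification unitary $V_x$ and its inverse \emph{locally}, and on the verifier's private register starting in the fixed state~$\ket{0}$ so that the projection $\Pi_\init$ is meaningful. After $m-1$ (or $m$) rounds of interaction, the verifier's private register $\sfV$ is in a state that depends on the prover's earlier actions and is generally entangled with the prover's private space; there is no projection the verifier can apply that plays the role of $\Pi_\init$, and hence no way to run the \textsc{Distillation Procedure} to extract the single-qubit state $\ket{\chi_p}$. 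If instead you take ``$V_x$'' to be only the verifier's final local unitary and treat the entire contents of $(\sfV,\sfM)$ as the witness, then $\Pi_\init$ becomes the identity and the associated Hermitian operator $M_x$ collapses to a projection with eigenvalues $0$ and $1$, so the \textsc{Reflection Procedure} yields nothing. Your soundness sketch (``conditioned on any fixed view of the first $m-1$ messages'') also presupposes a classical transcript one can condition on, which does not exist here.

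The paper's route is entirely different and does not use the EPR machinery at all for this theorem. It first makes the original system \emph{perfectly rewindable} (Lemma~\ref{Lemma: making QIP perfectly rewindable}): the honest prover can arrange for the maximal acceptance probability to be exactly $1/2$. Then, instead of having the verifier run the forward simulation, the prover sends the state that would arise \emph{after} the forward simulation, and the verifier performs only the backward simulation $\conjugate{U}$---which it can do interactively with the prover's cooperation---preceded either by the phase-flip (the \textsc{Reflection Test}) or not (the \textsc{Invertibility Test}), chosen at random. The \textsc{Invertibility Test} is what catches a prover who does not faithfully invert during the backward simulation; this is the key idea that lets a single backward pass suffice, and it is what buys the saving of one message over Kitaev--Watrous. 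None of the de~Finetti, \textsc{Swap Test}, or \textsc{Space Restriction Test} apparatus appears in this proof.
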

In fact, if the number of messages in the original system is odd,
our transformation does not increase it at all.
\begin{theorem}[informal statement]
For any odd ${m \geq 3}$,
\[
\QIP(m) \subseteq \QIP_1(m).
\]
\label{Theorem: QIP(m) is in QIP_1(m) for odd m, informal}
\end{theorem}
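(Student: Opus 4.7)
The plan is to mirror the construction underlying Theorem~\ref{Theorem: QIP(m) is in QIP_1(m+1), informal}, exploiting the fact that in an odd-length protocol the prover already speaks first, so that the extra message demanded by that construction can be folded into the existing first prover message rather than being prepended. First, I would use the Kitaev--Watrous parallelization to reduce to the case $m = 3$: any $\QIP(m)$ with $m \geq 3$ collapses to $\QIP(3)$, and conversely $\QIP_1(3) \subseteq \QIP_1(m)$ for every odd $m \geq 3$ by having the prover and verifier carry out $(m-3)/2$ pairs of trivial dummy exchanges at the start. So the task reduces to proving $\QIP(3) \subseteq \QIP_1(3)$.

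For the construction in the $m = 3$ case, I would start from a $\QIP(3)$ protocol (prover, verifier, prover, then verifier measures) and produce a $\QIP_1(3)$ protocol in which the prover's first message carries, in parallel, (i)~the original first message on one register and (ii)~halves of a constant number of maximally entangled pairs on a separate register~$E$. The verifier stores~$E$, then sends the second message of the original protocol (possibly augmented with the auxiliary qubits required by the perfect-completeness machinery), and after receiving the third message runs the modified final measurement from the proof of Theorem~\ref{Theorem: QMA is in 2^k-EPR-QMA(1,s)}, treating the contents of~$E$ as if they were pre-shared EPR halves distributed by the verifier in the setting of Corollary~\ref{Corollary: QMA = const-EPR-QMA(1,const)}. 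In effect, the role played in Theorem~\ref{Theorem: QMA is in 2^k-EPR-QMA(1,s)} by a verifier-initiated EPR distribution is played here by the existing first prover message, so no new round is needed.

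Perfect completeness is then straightforward: on a yes-instance, the honest prover supplies genuine EPR halves in register~$E$ and then plays the strategy that achieves acceptance probability exactly one in the protocol of Theorem~\ref{Theorem: QMA is in 2^k-EPR-QMA(1,s)}. The hard part will be soundness. In Theorem~\ref{Theorem: QMA is in 2^k-EPR-QMA(1,s)} the EPR pairs are verifier-generated and hence uncorrupted, whereas here a cheating prover is free to substitute an arbitrary state in~$E$. The main technical task is to argue that no such substitution raises the acceptance probability above a constant bounded away from one, presumably via a reduction that converts any successful cheating strategy in the new protocol into a cheating strategy in the underlying $\QIP(3)$ system of comparable acceptance probability, followed if necessary by parallel-repetition soundness amplification, which preserves perfect completeness in three-message quantum interactive proofs.
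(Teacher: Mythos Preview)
Your reduction to $m=3$ via parallelization is fine, and indeed the paper explicitly notes that Theorems~\ref{Theorem: QIP(m) is in QIP_1(m+1), informal}~and~\ref{Theorem: QIP(m) is in QIP_1(m) for odd m, informal} can alternatively be obtained through the parallelization results of Refs.~\cite{KitWat00STOC,KemKobMatVid09CC}. But your proposed direct argument for the $m=3$ step has a genuine gap.

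The soundness analysis of Theorem~\ref{Theorem: QMA is in 2^k-EPR-QMA(1,s)} hinges on the verifier holding one half of each EPR pair: this is what guarantees that the reduced state in $(\sfS_1, \ldots, \sfS_N)$ is exactly $(I/2)^{\tensor N}$, regardless of the prover's actions. That guarantee drives Lemma~\ref{Lemma: closeness to the mixture of 2-fold products of CJ states} and is what forces the post-\textsc{Space Restriction Test} state to be close to a mixture of genuine Choi--Jamio{\l}kowski states of the form~$\ket{J(W^\pm_a)}$. If the prover supplies the contents of your register~$E$, no such constraint on the reduced state holds, and the \textsc{Space Restriction Test} loses its force: a cheater can place any state in the span of $\{\ket{\Phi^-}, \ket{\Psi^+}\}$ with arbitrary relative phase. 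The paper states explicitly (Subsection~\ref{subsec-towards}) that removing the need for verifier-generated EPR halves ``is the last obstacle that prevents us from proving the result ${\QMA=\QMA_1}$''. Your proposed soundness step would therefore require resolving that open problem; the suggested ``reduction that converts any successful cheating strategy\ldots'' is a hope, not an argument. A secondary issue is that the \textsc{Distillation Procedure} is tied to a single unitary~$V_x$ that the verifier can apply and invert on its own, whereas in $\QIP(3)$ the relevant unitary interleaves prover moves the verifier cannot simulate; you do not say how you would adapt it.

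The paper's route is entirely different and does not use the EPR machinery at all. It first converts the original system into a \emph{perfectly rewindable} one (Lemma~\ref{Lemma: making QIP perfectly rewindable}), in which some honest prover achieves maximum acceptance probability exactly $1/2$, and then runs the \textsc{Modified Reflection Procedure} (Figure~\ref{Figure: Modified Reflection Procedure}): the new verifier receives as the first message a purported snapshot of the joint state just before the original final decision, flips a fair coin, and either (\textsc{Reflection Test}) applies the accepting-subspace phase flip and simulates the original protocol \emph{backward}, rejecting if the state lands in the legal-initial subspace, or (\textsc{Invertibility Test}) simulates backward without the phase flip, accepting if the state lands in that subspace. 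Because only a single backward pass of $\conjugate{U}$ is needed, the message count stays at $m$ when $m$ is odd. Perfect completeness is Proposition~\ref{Proposition: completeness of Modified Reflection Procedure}; soundness is Proposition~\ref{Proposition: soundness of Modified Reflection Procedure}, which shows that when all eigenvalues of ${M = \Delta_0 \conjugate{U} \Pi_0 U \Delta_0}$ avoid the interval $(\tfrac{1}{2}-\varepsilon,\tfrac{1}{2}+\varepsilon)$, the two tests together reject with probability at least $\varepsilon^2$ regardless of the prover's deviation in the backward simulation.
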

While the inclusions of Theorems~\ref{Theorem: QIP(m) is in QIP_1(m+1), informal}~and~\ref{Theorem: QIP(m) is in QIP_1(m) for odd m, informal} can also be obtained by using the parallelization 
results in Refs.~\cite{KitWat00STOC,KemKobMatVid09CC}, our techniques give a new and arguably more direct way of obtaining 
these results.
Our construction actually works well even in the setting of quantum multi-prover interactive proof systems:
it transforms any quantum $k$-prover interactive proof system
into another quantum $k$-prover interactive proof system of perfect completeness
by increasing the number of turns by just one in general,
and without increasing it when the number of turns in the original system is odd.
This much improves the previous result in Ref.~\cite{KemKobMatVid09CC},
where the construction increases the number~$m$ of turns to $3m$ (i.e., by a factor of three),
again without using their parallelization result.
We refer to Theorems~\ref{Theorem: QIP(m) is in QIP_1(m+1)},~\ref{Theorem: QIP(m) is in QIP_1(m) for odd m},~\ref{Theorem: QMIP(k,m) is in QMIP_1(k,m+1)},~and~\ref{Theorem: QMIP(k,m) is in QMIP_1(k,m) for odd m} in Section~\ref{Section: QIP(m) is in QIP_1(m+1)}
for the precise statements of the results.


\subsection{Organization of This Paper}

Section~\ref{Section: Technical Overview for QMA subseteq const-EPR-QMA(1,const)}
gives a high-level explanation of how Theorem~\ref{Theorem: QMA is in 2^k-EPR-QMA(1,s)} (i.e, the inclusion~${\QMA \subseteq \EPRQMA{\const}_1}$) is proved.
Section~\ref{Section: Overview for QIP(m) subseteq QIP(m+1)} presents an overview of the proof of  Theorem~\ref{Theorem: QIP(m) is in QIP_1(m+1), informal}
(i.e., the inclusion 
${\QIP(m) \subseteq \QIP_1(m+1)}$).
Section~\ref{Section: preliminaries} provides basic notions and definitions
that are used in this paper.
Section~\ref{Section: Reflection Procedure} rigorously describes and analyzes
the basic procedure called \textsc{Reflection Procedure},
which is the fundamental technical tool throughout this paper.
Section~\ref{Section: QMA is in QIP_1(2)} then gives a full proof of Theorem~\ref{Theorem: QMA is in 2^k-EPR-QMA(1,s)}.
Finally, Section~\ref{Section: QIP(m) is in QIP_1(m+1)} proves the results on
general quantum interactive proofs.


\section{Proof Idea of Theorem~\ref{Theorem: QMA is in 2^k-EPR-QMA(1,s)}}
\label{Section: Technical Overview for QMA subseteq const-EPR-QMA(1,const)}

The purpose of this section is to give a high-level description of our construction
that proves Theorem~\ref{Theorem: QMA is in 2^k-EPR-QMA(1,s)} (the inclusion~${\QMA \subseteq \EPRQMA{\const}_1})$.
We first describe the main idea in Subsection~\ref{subsec-mainidea} and a simple protocol for a very special case.
Then we explain in Subsection~\ref{subsec-towards} how to make this simple protocol robust against any cheating strategy, by introducing additional tests.
Finally, in Subsection~\ref{subsec-complete}, we present our complete protocol. 


\subsection{Underlying Ideas}
\label{subsec-mainidea}

For an input~$x$, let $V_x$ denote the verifier's quantum circuit in the original $\QMA$ proof system.
The operator~$V_x$ acts on two quantum registers, one register~$\sfA$ corresponding to the verifier's work space and another register~$\sfM$ corresponding to the space that stores the witness from the prover.
Let $p_x$ denote the maximum acceptance probability, over all possible witnesses, of the verification procedure.
From the definition of the class~$\QMA$ one can assume that,
for every yes-instance~$x$ it holds that ${p_x \geq 1/2}$,
and for every no-instance~$x$ it holds that ${p_x \leq 1/3}$.
As pointed out by Marriott~and~Watrous~\cite{MarWat05CC},
the maximum acceptance probability~$p_x$ of $V_x$ over all possible witnesses
is the maximum eigenvalue of the Hermitian operator
\[
M_x = \Pi_\init \conjugate{V_x} \Pi_\acc V_x \Pi_\init,
\]
where $\Pi_\init$ is the projection onto the subspace spanned by states in which
all the qubits in $\sfA$ are in state~$\ket{0}$,
and $\Pi_\acc$ is the projection onto the space spanned by the accepting states.

\paragraph{Reflection Procedure}
The basic idea of our protocol is to simulate a procedure that we call \textsc{Reflection Procedure},
presented in details in Section~\ref{Section: Reflection Procedure}.
Roughly speaking, this procedure is viewed as performing a part of amplitude amplification~\cite{Gro96STOC} on the original verification procedure,
and is quite similar to the so-called quantum rewinding technique~\cite{Wat09SIComp},
the underlying idea of which dates back to the strong amplification method for $\QMA$ due to Marriott~and~Watrous~\cite{MarWat05CC}.
Not surprisingly, our~\textsc{Reflection Procedure} can be analyzed in a way similar to the case of the strong amplification method for $\QMA$ due to Marriott~and~Watrous~\cite{MarWat05CC}.
We refer to 
Figure~\ref{Figure: Reflection Procedure-simplified} for a presentation of this procedure specialized to the case of QMA proof systems
(a more general description of the procedure will be given in Figure~\ref{Figure: Reflection Procedure} in Section~\ref{Section: Reflection Procedure}). 

\begin{figure}[t!]
\begin{algorithm*}{\textsc{Reflection Procedure}}
\begin{step}
\item
  Receive a quantum register~$\sfM$.
  Prepare $\ket{0}$ in each of the qubits in a quantum register~$\sfA$.
  Apply $V_x$ to the state in ${(\sfA, \sfM)}$.
\item
  Perform a phase-flip (i.e., multiply $-1$ in phase)
  if the state in ${(\sfA, \sfM)}$ belongs to the subspace corresponding to the projection~$\Pi_\acc$.
\item
  Apply $\conjugate{V_x}$ to ${(\sfA, \sfM)}$.
\item
  Reject if the state in ${(\sfA, \sfM)}$ belongs to the subspace corresponding to $\Pi_\init$,
  and accept otherwise.
\end{step}
\end{algorithm*}
\caption{The \textsc{Reflection Procedure}  (specialized to the case of QMA proof systems; see Figure \ref{Figure: Reflection Procedure} in Section~\ref{Section: Reflection Procedure} for the most general version of this procedure).}
\label{Figure: Reflection Procedure-simplified}
\end{figure}

The \textsc{Reflection Procedure} has access to the unitary transformation~$V_x$, receives a quantum state in register~$\sfM$, and has the following property: 
\begin{enumerate}
\item
If $M_x$ has an eigenvalue 1/2, then there exists a quantum state in $\sfM$ such that 
the procedure accepts with certainty.
\item
If $M_x$ has no eigenvalue in the interval $(\frac{1}{2}-\varepsilon,\frac{1}{2}+\varepsilon)$,
then for any quantum state in $\sfM$ given, 
the procedure rejects with probability at least ${4\varepsilon^2}$.
\end{enumerate}
This procedure would then enable us to transform the original QMA proof system
into another QMA proof system with perfect completeness
if we had exactly ${p_x=1/2}$ for any yes-instance~$x$.
This nice property on the completeness of course does not necessarily hold in general.

We mention that the \textsc{Reflection Procedure} is actually slightly superior to the original quantum rewinding technique
(for the purpose of achieving perfect completeness)
in that it requires just two applications of $V_x$ (more precisely, one application of $V_x$ and one application of $V_x^\dagger$), instead of three.
This property will be crucial for our analysis since the \textsc{Reflection Procedure} will ultimately be applied to a modified version of $V_x$ that cannot be implemented directly by the verifier without the help of the prover.

\paragraph{Simple Protocol when $\boldsymbol{p_x}$ is Known}  
In general, we only know that ${p_x \geq 1/2}$ for a yes-instance.
Assume that the verifier can apply the matrix 
\[W_q=
\begin{pmatrix}
\sqrt{1-q} & \sqrt{q}
\\
\sqrt{q} & - \sqrt{1-q}
\end{pmatrix}
\]
acting on one qubit, where $q$ is such that ${0 \leq q \leq 1}$ and ${p_xq = 1/2}$
(the value of $q$ depends of course on the input~$x$).
Then, by performing in parallel the original verification test (which succeeds with probability~$p_x$)
and an additional test that applies $W_q$ on a single qubit in the initial state~$\ket{0}$ and measures it,
we obtain a new verification procedure that accepts the input with probability exactly ${p_xq = 1/2}$
(where the new condition for acceptance is that the original test accepts \emph{and} the additional single qubit contains $1$). 
In particular, such a unitary transformation~$W_q$ always exists for any yes-instance~$x$,
and thus, this could achieve the perfect completeness if the verifier knew the probability~${p_x \geq 1/2}$.

The Hermitian operator corresponding to the case of applying in parallel these two tests can be represented by
\[
M'_x
=
(\Pi_\init \tensor \ketbra{0})
\conjugate{(V_x\otimes W_q)}
(\Pi_\acc \tensor \ketbra{1})
(V_x\otimes W_q)
(\Pi_\init \tensor \ketbra{0}),
\]
which has $1/2$ as an eigenvalue for a yes-instance~$x$.
Moreover, it can be easily shown that, on a negative instance, the eigenvalues of this Hermitian operator are bounded away from $1/2$.
Thus, the \textsc{Reflection Procedure} applied to the new verification test~${V_x \tensor W_q}$
transforms the original system into a perfect completeness system.
This protocol of course works only when the verifier can apply $W_q$.


\paragraph{Reflection Simulation Test and Distillation Procedure}

The main problem with the protocol described above is that the verifier does not know in general the probability~$p_x$,
and is then not able to apply $W_q$.
Informally, our basic idea to overcome this difficulty consists in asking the prover to send, along with the witness~$\ket{w}$ of the original proof system, 
the unitary transformation~$W_q$ to the verifier, where ${p_xq = 1/2}$.
Concretely, this is done by asking the prover to send two copies of the \emph{Choi-Jamio{\l}kowski state} associated with $W_q$,
denoted by $\ket{J(W_q)}$ and defined as follows:
\[
\ket{J(W_q)}=
(I \tensor W_q) \ket{\Phi^+}
=
\sqrt{1-q}\ket{\Phi^-}+\sqrt{q}\ket{\Psi^+},
\]
where ${\ket{\Phi^+} = \frac{1}{\sqrt{2}} (\ket{00} + \ket{11})}$, ${\ket{\Phi^-} = \frac{1}{\sqrt{2}} (\ket{00} - \ket{11})}$ and ${\ket{\Psi^+} = \frac{1}{\sqrt{2}} (\ket{01} + \ket{10})}$.
By an analysis similar to the case of quantum teleportation,
one can see that the state~$\ket{J(W_q)}$ can be used to simulate one application of the unitary transformation~$W_q$ to any quantum state of a single qubit in a probabilistic manner;
the application succeeds with probability~$1/4$, and we know whether it succeeds or not. 

Let us denote by $\sfM$ the register that is expected to contain the witness $\ket{w}$,
and by $\sfS_1$, $\sfS_1'$, $\sfS_2$, and $\sfS_2'$ the four single-qubit registers
that altogether are expected to contain the two copies of the Choi-Jamio{\l}kowski state.
On a yes-instance, an (honest) prover will then send the state
\[
\ket{w}_{\sfM} \tensor \ket{J(W_q)}_{(\sfS_1, \sfS'_1)} \tensor \ket{J(W_q)}_{(\sfS_2, \sfS'_2)}.
\]
With this state given, the verifier can simulate the desired QMA system with underlying verification procedure~${V_x\! \tensor\! W_q}$
with success probability~${(1/4)^2=1/16}$
(note that ${\conjugate{W_q} = W_q}$,
and thus, one copy of $\ket{J(W_q)}$ is used to simulate the application of $W_q$,
and another copy of it is used to simulate the application of $\conjugate{W_q}$). 
In case where the simulation fails, the verifier systematically accepts
by giving up the simulation to keep perfect completeness.
This is the core idea of the procedure \textsc{Reflection Simulation Test} described in Subsection~\ref{Subsection: Simulating the Reflection Procedure with Choi-Jamiolkowski States},
which is a key building block in our proof of Theorem~\ref{Theorem: QMA is in 2^k-EPR-QMA(1,s)}.

In fact, we incorporate one more technique called \textsc{Distillation Procedure}, 
which is again based on the analysis of Ref.~\cite{MarWat05CC},
and makes the analysis of our complete protocol significantly easier.
In general, one of the main difficulties when analyzing the soundness
with the simulation of the \textsc{Reflection Procedure} with the associated Hermitian operator~$M'_x$ above is
that one has to care about the entanglement between the witness part in $\sfM$
and the part for the Choi-Jamio{\l}kowski states in $\sfS_1$, $\sfS'_1$, $\sfS_2$, and $\sfS'_2$.
This could make the soundness analysis extremely hard,
and in fact, the authors do not even know if the soundness can be proved without using the \textsc{Distillation Procedure}.
The idea to settle this difficulty is that,
instead of directly simulating the \textsc{Reflection Procedure} above
on a received state
(that is expected to be a product state of a witness~$\ket{w}$ and two copies of the Choi-Jamio{\l}kowski state),
one first performs the \textsc{Distillation Procedure} twice in sequence
on the witness part (i.e., $\sfM$) of the received state
to produce a situation where one can perform a much simplified version of the \textsc{Reflection Procedure}
that does not even need to receive a witness.
This new \textsc{Reflection Procedure} has a very nice property that
it does not significantly change the behavior of the original \textsc{Reflection Procedure},
and its associated Hermitian operator acts over a space of just four dimensions
and has a much simpler form:
\[
(\ketbra{0} \tensor \ketbra{0})
\conjugate{(W_p \tensor W_q)}
(\ketbra{1} \tensor \ketbra{1})
(W_p \tensor W_q)
(\ketbra{0} \tensor \ketbra{0}),
\]
where ${p = p_x^2/(2p_x^2 - 2p_x + 1)}$ and ${q = 1/(2p)}$
(which is different from the value of $q$ in the previous case with $M'_x$).
More precisely, the two applications of the \textsc{Distillation Procedure} (described in Subsection~\ref{Subsection: Encoding Accepting Probability in Phase})
enable us to generate with high probability two identical copies of the single-qubit state
\[
\ket{\chi_p}=\sqrt{1-p}\ket{0}+\sqrt{p}\ket{1}
\]
from a given witness~$\ket{w}$
(and one can know whether the generation of the two copies succeeded or not).
The point is that, if the input were a no-instance,
and the original soundness were very small,
the generated state should be very close to ${\ket{0} \tensor \ket{0}}$,
and could be analyzed as if it were unentangled with other qubits.
Note that one can easily transform $\ket{\chi_p}$ into $\ket{J(W_p)}$,
and thus one essentially obtains the desired two copies of the Choi-Jamio{\l}kowski state corresponding to $W_p$
after the two applications of the \textsc{Distillation Procedure}.



\subsection{Towards the Actual Protocol}
\label{subsec-towards}

The main problem of the strategy described in the previous subsection is of course that,
on a no-instance, a dishonest prover may not send the prescribed state.
Actually, for a dishonest prover who sends a state of the form~${\ket{w} \tensor \ket{J(W_q)}^{\tensor 2}}$,
then no matter which state~${\ket{w}}$ and no matter which value~$q$ the prover chooses,
the soundness can be analyzed with a quite straightforward argument.
The real issue lies in the case where a dishonest prover does not send a quantum state of the form~${\ket{w} \tensor \ket{J(W_q)}^{\tensor 2}}$,
and especially when the state in ${(\sfS_1, \sfS'_1, \sfS_2, \sfS'_2)}$
is not a product state of two identical copies of 
a Choi-Jamio{\l}kowski state.

To force a state in ${(\sfS_1, \sfS'_1, \sfS_2, \sfS'_2)}$
to be at least close to a mixture of two-fold products of an identical quantum state (which may be a mixed state),
we modify the protocol so that we can use the finite quantum de Finetti theorem~\cite{KonRen05JMP,ChrKonMitRen07CMP}.  
For this, the verifier now asks the prover to send not only two copies of $\ket{J(W_q)}$
but a larger number of copies of it: $\ket{J(W_q)}^{\tensor N}$ where $N$ is large but still a constant.
The expected witness sent by an honest prover is then 
\[
\ket{w}_{\sfM} \tensor \ket{J(W_q)}_{(\sfS_1, \sfS'_1)} \tensor \cdots \tensor \ket{J(W_q)}_{(\sfS_N, \sfS'_N)}.
\]
The witness state in ${(\sfM, \sfS_1, \sfS_1',\ldots, \sfS_N, \sfS_N')}$ sent by a prover in a general case
may of course not be of the form above, if the prover is dishonest.
After the two applications of the \textsc{Distillation Procedure} with $\sfM$,
the verifier permutes the $N$ pairs of registers~${(\sfS_1, \sfS_1'), \ldots, (\sfS_N, \sfS_N')}$ uniformly at random.
This makes the state in ${(\sfS_1, \sfS_1'), \ldots, (\sfS_N, \sfS_N')}$
symmetric (i.e., invariant under any permutation of the $N$~pairs of registers~${(\sfS_1, \sfS_1'), \ldots, (\sfS_N, \sfS_N')}$),
and thus the quantum de Finetti theorem guarantees that the reduced state in ${(\sfS_1, \sfS'_1, \sfS_2, \sfS'_2)}$
of the resulting state after random permutation
must be close to some mixture of two-fold product states
\[
\sum_{j} \mu_j \xi_j \tensor \xi_j.
\]
Note that each state~$\xi_j$ may not necessarily be a pure state,
and is usually a mixed state.
The \textsc{Swap Test}, performed additionally to this random permutation,
will ensure that every $\xi_j$ must be actually close to some pure state.
This is nevertheless not enough:
we want to ensure that each $\xi_j$ is close to some Choi-Jamio{\l}kowski state.
To have this desirable property,
we now assume that each pair of registers~${(\sfS_j,\sfS'_j)}$
initially contains an EPR pair,
and that the verifier initially holds the registers~${\sfS_1, \ldots, \sfS_N}$
and receives only, additionally to $\sfM$, the registers~${\sfS'_1, \ldots, \sfS'_N}$ as witness.
This assumption is the only part where we need (a constant number of) shared EPR pairs,
and removing it is the last obstacle that prevents us from proving the result ${\QMA=\QMA_1}$.
To make use of this assumption,
we further device a test called the \textsc{Space Restriction Test}
that restricts the Hilbert space
corresponding to the registers~${(\sfS_1, \sfS'_1, \sfS_2, \sfS'_2)}$
in which the verifier expects to receive the copies of the Choi-Jamio{\l}kowski state.
The assumption of a constant number of prior-shared EPR pairs is
then tactically used with this \textsc{Space Restriction Test} to finally ensure
that each $\xi_j$ must be close to some legal Choi-Jamio{\l}kowski state. 


\subsection{Final Protocol}
\label{subsec-complete}

The final protocol of the verifier in a QMA system of perfect completeness
with a constant number of shared EPR pairs
is given in Figure~\ref{Figure: verifier's EPR-QMA protocol for achieving perfect completeness-simplified}.
Actually, Figure~\ref{Figure: verifier's EPR-QMA protocol for achieving perfect completeness-simplified}
presents a slightly simplified exposition of our final protocol;
the complete description will appear in Section~\ref{Section: QMA is in QIP_1(2)}
(see Figure~\ref{Figure: verifier's EPR-QMA protocol for achieving perfect completeness}
in the proof of Theorem~\ref{Theorem: QMA is in 2^k-EPR-QMA(1,s)}).

\begin{figure}[t!]
\begin{algorithm*}{Verifier's QMA Protocol for Achieving Perfect Completeness with $\boldsymbol{N}$~Prior-Shared EPR Pairs (Simplified)}
\begin{step}
\item
  Store the particles of the shared $N$~EPR pairs in ${(\sfS_1, \ldots, \sfS_N)}$. 
  Receive a quantum witness
  in registers ${(\sfM, \sfS'_1, \ldots, \sfS'_N)}$.
\item
  Execute the \textsc{Distillation Procedure} twice in sequence, both using a state in $\sfM$.
  Accept if any of the two executions fails,
  and continue otherwise, with storing the two generated single-qubit states in $\sfR_1$ and $\sfR_2$.
\item
  Permute the $N$~pairs of registers~${(\sfS_1, \sfS'_1), \ldots, (\sfS_N,\sfS'_N)}$ uniformly at random.
\item
  Perform the \textsc{Space Restriction Test}.
  That is,
  test if the state in ${(\sfS_j, \sfS'_j)}$ is in the space spanned by $\{ \ket{\Phi^-}, \ket{\Psi^+}\}$, for each ${j \in \{1,2\}}$.
  Reject if not so, and continue otherwise.\\
\item
  Perform the \textsc{Swap Test} between ${(\sfS_1, \sfS'_1)}$ and ${(\sfS_2, \sfS'_2)}$.
  Reject if it fails, and continue otherwise.\\
\item
  Perform the \textsc{Reflection Simulation Test}
  with ${(\sfR_1, \sfR_2, \sfS_1, \sfS'_1, \sfS_2, \sfS'_2)}$ as input.
  Accept if this returns ``accept'', and reject otherwise.
\end{step}
\end{algorithm*}
\caption{Slightly simplified description of the verifier's QMA protocol for achieving perfect completeness with $\boldsymbol{N}$ pre-shared EPR pairs. The complete description appears as Figure~\ref{Figure: verifier's EPR-QMA protocol for achieving perfect completeness} in Section~\ref{Section: QMA is in QIP_1(2)}.}
\label{Figure: verifier's EPR-QMA protocol for achieving perfect completeness-simplified}
\end{figure}

Let us briefly describe the protocol step by step, focusing on what happens when the prover is honest.
At the end of Step~1,
i.e., just after receiving a witness from the prover,
the state in ${(\sfM, \sfS_1, \sfS_1', \ldots, \sfS_N, \sfS_N')}$ is given by
\[
\ket{w}_{\sfM} \tensor \ket{J(W_q)}_{(\sfS_1, \sfS'_1)} \tensor \cdots \tensor \ket{J(W_q)}_{(\sfS_N, \sfS'_N)}.
\]
When none of the two executions of the \textsc{Distillation Procedure} fails
in Step~2,
the state in ${(\sfR_1,\! \sfR_2,\! \sfS_1,\! \sfS'_1, \ldots,\! \sfS_N,\! \sfS'_N)}$ becomes
\[
\ket{\chi_p}_{\sfR_1} \tensor \ket{\chi_p}_{\sfR_2}
\tensor
\ket{J(W_q)}_{(\sfS_1, \sfS'_1)}  \tensor \cdots \tensor \ket{J(W_q)}_{(\sfS_N, \sfS'_N)},
\]
at the end of this step.
Step~3 just permutes the $N$ pairs of registers ${(\sfS_1, \sfS'_1), \ldots, (\sfS_N,\sfS'_N)}$ uniformly at random, 
which does not change the state at all.
The \textsc{Space Restriction Test} in Step~4 forces each of the two-qubit states in ${(\sfS_1,\sfS'_1)}$ and ${(\sfS_2,\sfS'_2)}$
to be in the subspace spanned by $\ket{\Phi^-}$ and $\ket{\Psi^+}$
(as the state must be in this subspace if it is a product of the desirable Choi-Jamio{\l}kowski states),
which does not change the state either.
Then the \textsc{Swap Test} in Step~5 never fails,
since the registers~${(\sfS_1,\sfS'_1)}$ and ${(\sfS_2,\sfS'_2)}$ contain
the identical pure state.
Finally, Step~6 performs the \textsc{Reflection Simulation Test},
which must result in acceptance with certainty,
as the value~$q$ was chosen appropriately
so that the associated Hermitian operator with this \textsc{Reflection Simulation Test}
has an eigenvalue exactly $1/2$.

\paragraph{Rough Sketch of Soundness Analysis}
Here we give a very rough sketch of the soundness analysis for a no-instance case.
The rigorous analysis can be found in Section~\ref{Section: QMA is in QIP_1(2)}.

Without loss of generality, it is assumed that
the original QMA system has soundness exponentially close to $0$.
Then,
if none of the two executions of the \textsc{Distillation Procedure} fails,
whatever witness has been received in Step~1,
the state generated in ${(\sfR_1,\sfR_2)}$ after Step~2
must be exponentially close to
\[
\ket{\chi_0}_{\sfR_1} \tensor \ket{\chi_0}_{\sfR_2} =\ket{0}_{\sfR_1}\tensor \ket{0}_{\sfR_2}
\]
(and the probability that the \textsc{Distillation Procedure} fails is actually exponentially small in this case).
This implies that the state in ${(\sfR_1, \sfR_2)}$ is almost unentangled with the state in ${(\sfS_1, \sfS'_1, \ldots, \sfS_N, \sfS'_N)}$.

As the random permutation in Step~3 makes the state in ${(\sfS_1, \sfS'_1, \ldots, \sfS_N, \sfS'_N)}$ symmetric,
from the quantum de Finetti theorem,
the reduced state in ${(\sfR_1, \sfR_2, \sfS_1, \sfS'_1, \sfS_2, \sfS'_2)}$ after Step~3 
must be close to the state of the form
\[
(\ketbra{0})^{\tensor 2} \tensor \biggl( \sum_j \mu_j \xi_j^{\tensor 2} \biggr).
\]
A key property is that the reduced state in ${(\sfS_1, \sfS_2)}$
is exponentially close to the totally mixed state ${(I/2)^{\tensor 2}}$,
which is guaranteed by the facts that each state in $\sfS_j$ for ${j \in \{1, \ldots, N \}}$
was originally a half of the shared EPR pair,
that the two executions of the \textsc{Distillation Procedure}
disturbed the state by an amount at most exponentially small,
and that the state~${(I/2)^{\tensor N}}$ in ${(\sfS_1, \ldots, \sfS_N)}$
is invariant under random permutation.

Now one can show that (stated here informally)
if the probability of rejection is very small in the \textsc{Space Restriction Test} in Step~4
(otherwise the dishonest prover is caught with some reasonable probability in this Step~4),
the state in ${(\sfR_1, \sfR_2, \sfS_1, \sfS'_1, \sfS_2, \sfS'_2)}$ at the end of Step~4 is sufficiently close to a state of the form
\[
(\ketbra{0})^{\tensor 2} \tensor \biggl( \sum_j \mu'_j {\xi'_j}^{\tensor 2} \biggr),
\]
where each $\xi'_j$ is a mixed state over the Hilbert space spanned by $\ket{\Phi^-}$ and $\ket{\Psi^+}$,
while the \textsc{Swap Test} in Step~5 requires that each $\xi'_j$ must be close to some pure state
(otherwise the dishonest prover is caught with some reasonable probability in this Step~5).

Together with the fact mentioned above that
the reduced state in ${(\sfS_1, \sfS_2)}$
was close to the totally mixed state ${(I/2)^{\tensor 2}}$ when entering Step~4,
these two properties finally ensure that
the state in ${(\sfR_1, \sfR_2, \sfS_1, \sfS'_1, \sfS_2, \sfS'_2)}$ at the end of Step~5 must be sufficiently close to a state of the form
\[
(\ketbra{0})^{\tensor 2} \tensor \biggl[ \sum_j \mu''_j \bigl( \ketbra{J(W^\pm_{a_j})} \bigr)^{\tensor 2} \biggr],
\]
where each
$W^\pm_{a_j}$ is equal to either $W_{a_j}$ or $ZW_{a_j}Z$,
with
${
  Z =
  \left(
    \begin{smallmatrix}
      1 & 0\\
      0 & -1
    \end{smallmatrix}
  \right)
}$.
Notice that this is a mixture of desired states and their slightly different variants.

For each state of the form
\[
{\ket{0}^{\tensor 2} \tensor \ket{J(W^\pm_{a_j})}^{\tensor 2} = \ket{\chi_0}^{\tensor 2} \tensor \ket{J(W^\pm_{a_j})}^{\tensor 2}},
\]
however, we can easily show that the \textsc{Reflection Simulation Test} in Step~6 rejects with sufficiently large probability (shown to be exactly $1/16$)
irrelevant to the value~$a_j$,
and thus, the verifier can reject with probability close to $1/16$
even when the verification procedure reaches Step~6 with very high probability.


\section{Proof Idea of Theorem \ref{Theorem: QIP(m) is in QIP_1(m+1), informal}}
\label{Section: Overview for QIP(m) subseteq QIP(m+1)}

This section gives an overview of the proof of Theorem~\ref{Theorem: QIP(m) is in QIP_1(m+1), informal}
(more precisely, of the formal statement of this result, Theorem~\ref{Theorem: QIP(m) is in QIP_1(m+1)}),
which proves the inclusion~${\QIP(m) \subseteq \QIP_1(m+1)}$,
for each ${m \geq 2}$.
For simplicity, here we assume that the number~$m$ of messages is odd
(the case with even number of messages can be proved with essentially the same argument),
and completeness and soundness are $2/3$ and $1/3$, respectively,
in the original quantum interactive proof system.



The basic idea is again to simulate the \textsc{Reflection Procedure}
associated with the original $m$-message quantum interactive proof system.

Fix an input~$x$
and the transformations of the prover~$P$ on $x$
in the original $m$-message quantum interactive proof system.
This time, we consider that the register~$\sfM$ in the \textsc{Reflection Procedure} described in Figure~\ref{Figure: Reflection Procedure-simplified}
contains all the qubits the prover~$P$ can access in the original system
(i.e., all the private qubits of the prover and all the message qubits that are used for communications).
We further consider that the register~$\sfA$ contains all the private qubits of the verifier in the original system.
Now, if we replace $V_x$ in Figure~\ref{Figure: Reflection Procedure-simplified} by the unitary transformation~$U$ derived from the original quantum interactive proof system
when the verifier communicates with $P$ on input~$x$,
the \textsc{Reflection Procedure} described in Figure~\ref{Figure: Reflection Procedure-simplified}
can be viewed as first applying $U$ by performing a forward simulation of the communications with $P$,
then applying a phase-flip with respect to the accepting states,
and further applying $\conjugate{U}$ by performing a backward simulation of the communications with $P$
to confirm if the entire state \emph{does not} go back to a legal initial state.

Hence, if there is a strategy for a prover that can convince the verifier with probability exactly $1/2$ in the original system,
then this specific \textsc{Reflection Procedure} with such a prover
must result in acceptance with certainty, from the property of the \textsc{Reflection Procedure}.
Fortunately, if the number~$m$ of messages is at least two,
it is not hard for an all powerful prover to arbitrarily decrease the accepting probability,
and thus, this essentially achieves the perfect completeness when the input is a yes-instance.
On the other hand, for any no-instance,
no prover can convince the verifier with probability more than $1/3$.
This implies that the above specific \textsc{Reflection Procedure}
must result in rejection with some constant probability (actually with probability at least $1/9$),
again from the property of the \textsc{Reflection Procedure}.
Therefore, this basically establishes a quantum interactive proof system of perfect completeness, as desired.

There are two problems in this construction.
One is that a dishonest prover may not be so cooperative that a backward simulation
forms $\conjugate{U}$ as required
(i.e., a prover may behave during the backward simulation differently
from the inverse of what he/she behaved during the forward simulation).
The other is that the number of messages increases from $m$ to ${2m-1}$,
and thus,
it is less communication-efficient
than the existing construction of achieving perfect completeness in quantum interactive proofs due to Kitaev and Watrous~\cite{KitWat00STOC}.

\paragraph{Modified Reflection Procedure}

Both of the two problems mentioned above originate from the fact that
the \textsc{Reflection Procedure}
involves one application of $U$ and one application of $\conjugate{U}$.
Now we modify the procedure so that it involves one application of $\conjugate{U}$ only (and no application of $U$ is required),
which simultaneously settles both of the two problems.

To do this, at the beginning,
one expects to receive a state just after Step~1 of the \textsc{Reflection Procedure},
and then performs on this state two tests,
called \textsc{Reflection Test} and \textsc{Invertibility Test}, respectively,
with equal probability
without revealing which test the prover is undergoing.
In the \textsc{Reflection Test},
one simply performs Steps~2--4 of the \textsc{Reflection Procedure}
(i.e., one first applies the appropriate phase-flip
and then applies $\conjugate{U}$)
to finish the simulation of it.
In the \textsc{Invertibility Test},
one apply just $\conjugate{U}$ without performing the phase-flip
and checks if the entire state \emph{does} go back to a legal initial state
of the original \textsc{Reflection Procedure}.
We call the resulting procedure the \textsc{Modified Reflection Procedure},
a precise description of which will be given in Subsection~\ref{Subsection: Modified Reflection Procedure}.
The idea of making use of the \textsc{Invertibility Test}
originally appeared in Ref.~\cite{KemKobMatVid09CC} when achieving perfect completeness in quantum multi-prover interactive proofs,
but the test was used only after the forward simulation of the protocol
in their original construction,
and was not for the purpose of reducing the number of messages.

As is clear from the construction above,
the \textsc{Modified Reflection Procedure} requires only one application of $\conjugate{U}$ as desired.
Thus, the quantum interactive proof system that simulates this \textsc{Modified Reflection Procedure}
involves only $m$ messages as required
(for an even $m$, it involves ${m+1}$ messages,
as the original system starts with a turn for a verifier,
while the verifier in the constructed system needs to receive a witness before
his/her first turn).
Moreover, for any yes-instance,
the honest prover clearly has only to cooperate with the verifier to perform the backward simulation of the original \textsc{Reflection Procedure}
and can convince the verifier with certainty.
On the other hand,
for any no-instance,
the original \textsc{Reflection Procedure} would have rejected
with high probability,
if the proper $\conjugate{U}$ had been performed.
Thus, if the backward simulation in the \textsc{Modified Reflection Procedure}
were properly performed,
the \textsc{Reflection Test} of it could reject with high probability
as it properly simulates the original \textsc{Reflection Procedure}.
In contrast, if the backward simulation were not proper in the \textsc{Modified Reflection Procedure},
then the \textsc{Invertibility Test} of it would result in rejection with high probability,
as it essentially forces the prover to perform a proper backward simulation
of the original \textsc{Reflection Procedure}.
Indeed, as will be proved in Subsection~\ref{Subsection: Modified Reflection Procedure},
if one starts with a \textsc{Reflection Procedure} that rejects with probability at least $\varepsilon$ for every possible witness,
the resulting \textsc{Modified Reflection Procedure}
rejects with probability at least $\varepsilon/4$
no matter which witness is received
(the proof of Proposition~\ref{Proposition: soundness of Modified Reflection Procedure} essentially proves this).
Hence, the soundness can be shown as well in the \textsc{Modified Reflection Procedure}.


\section{Preliminaries}
\label{Section: preliminaries}

Throughout this paper,
let $\Natural$ and $\Nonnegative$ denote
the sets of positive and nonnegative integers, respectively,
and let ${\Sigma = \Binary}$ denote the binary alphabet set.
A function~$\function{f}{\Nonnegative}{\Natural}$ is \emph{polynomially bounded}
if there exists a polynomial-time deterministic Turing machine
that outputs ${1^{f(n)}}$ on input~$1^n$.
A function~$\function{f}{\Nonnegative}{[0,1]}$ is \emph{negligible}
if, for every polynomially bounded function~$\function{g}{\Nonnegative}{\Natural}$,
it holds that ${f(n) < 1/g(n)}$ for all but finitely many values of~$n$.

\paragraph{Quantum Fundamentals}

We assume the reader is familiar with the quantum formalism,
including pure and mixed quantum states, density operators,
measurements, trace norm, fidelity, as well as the quantum circuit model
(see Refs.~\cite{NieChu00Book,KitSheVya02Book}, for instance).
Here we summarize some notations and properties that are used in this paper.

For each ${k \in \Natural}$, let ${\Complex(\Sigma^k)}$ denote the $2^k$-dimensional complex Hilbert space whose standard basis vectors are indexed by the elements in $\Sigma^k$.
In this paper, all Hilbert spaces are complex and have dimension a power of two.
For a Hilbert space~$\calH$,
let $I_\calH$ denote the identity operator over $\calH$,
and let ${\Density(\calH)}$ be the set of density operators over $\calH$.
For a quantum register~$\sfR$,
let $\ket{0}_\sfR$ denote the state
in which all the qubits in $\sfR$ are in state~$\ket{0}$.
As usual,
denote the two single-qubit states in ${\Complex(\Sigma)}$
that form the \emph{Hadamard basis} by
\[
\ket{+} = \frac{1}{\sqrt{2}} (\ket{0} + \ket{1}),
\quad
\ket{-} = \frac{1}{\sqrt{2}} (\ket{0} - \ket{1}),
\]
and the four two-qubit states in ${\Complex(\Sigma^2)}$
that form the \emph{Bell basis} by
\begin{alignat*}{2}
\ket{\Phi^+}
&
=
\frac{1}{\sqrt{2}} (\ket{00} + \ket{11}),
&
\quad
\ket{\Phi^-}
&
=
\frac{1}{\sqrt{2}} (\ket{00} - \ket{11}),\\
\ket{\Psi^+}
&
=
\frac{1}{\sqrt{2}} (\ket{01} + \ket{10}),
&
\quad
\ket{\Psi^-}
&
=
\frac{1}{\sqrt{2}} (\ket{01} - \ket{10}),
\end{alignat*}
respectively.
Let
\[
  H
  =
  \frac{1}{\sqrt{2}}
  \begin{pmatrix}
    1 & 1\\
    1 & -1
  \end{pmatrix},
  \quad
  X
  =
  \begin{pmatrix}
    0 & 1\\
    1 & 0
  \end{pmatrix},
  \quad
  Z
  =
  \begin{pmatrix}
    1 & 0\\
    0 & -1
  \end{pmatrix}
\]
denote the Hadamard and Pauli operators.
For convenience, we may identify a unitary operator with the unitary transformation it induces.
In particular, for a unitary operator~$U$,
the induced unitary transformation is also denoted by $U$.

For a linear operator~$A$, the \emph{trace norm} of $A$ is defined by
\[
\trnorm{A} = \tr \sqrt{\conjugate{A}A}.
\]
For two quantum states~$\rho$ and $\sigma$,
the \emph{trace distance} between them is defined by
\[
D(\rho, \sigma) = \frac{1}{2} \trnorm{\rho - \sigma},
\]
and the \emph{fidelity} between them is defined by
\[
F(\rho, \sigma) = \tr \sqrt{\sqrt{\rho} \sigma \sqrt{\rho}}.
\]
A special case of the trace distance is the \emph{statistical difference}
between two probability distributions~$\mu$ and $\nu$,
which is defined by
\[
\SD(\mu, \nu) = D(\mu, \nu)
\]
by viewing probability distributions as special cases of quantum states
with diagonal density operators.
We will use the following important properties of the trace distance
and fidelity.

\begin{lemma}
Let $\mu_\rho$ and $\mu_\sigma$
be the probability distributions derived from two quantum states~$\rho$ and $\sigma$, respectively, by performing an arbitrary identical measurement.
Then,
\[
  \SD(\mu_\rho, \mu_\sigma) \leq D(\rho, \sigma).
\]
\label{Lemma: trace distance and probability}
\end{lemma}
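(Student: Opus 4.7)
The plan is to prove this via the Jordan (Hahn) decomposition of the Hermitian operator~${\rho - \sigma}$. First I would formalize the measurement as a POVM ${\{E_k\}_k}$ (with ${E_k \geq 0}$ and ${\sum_k E_k = I}$), so that the outcome probabilities are ${\mu_\rho(k) = \tr(E_k \rho)}$ and ${\mu_\sigma(k) = \tr(E_k \sigma)}$. Then the statistical difference becomes
\[
\SD(\mu_\rho, \mu_\sigma) = \frac{1}{2} \sum_k \abs{\tr(E_k (\rho - \sigma))},
\]
so the goal reduces to showing that this quantity is bounded by ${\frac{1}{2} \trnorm{\rho - \sigma}}$.

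Next I would use the fact that ${\rho - \sigma}$ is Hermitian and therefore admits a decomposition ${\rho - \sigma = P - Q}$, where $P$ and $Q$ are positive semidefinite operators with mutually orthogonal supports (given by the positive and negative parts of the spectral decomposition). A standard computation then gives ${\trnorm{\rho - \sigma} = \tr(P) + \tr(Q)}$. Moreover, since both ${\rho}$ and ${\sigma}$ have trace one, one has ${\tr(P) = \tr(Q)}$, although this symmetry is not strictly needed for the inequality.

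The key step is the term-by-term bound. For each outcome~$k$, since $E_k$, $P$, and $Q$ are all positive semidefinite operators, both ${\tr(E_k P)}$ and ${\tr(E_k Q)}$ are nonnegative, and hence
\[
\abs{\tr(E_k(\rho - \sigma))} = \abs{\tr(E_k P) - \tr(E_k Q)} \leq \tr(E_k P) + \tr(E_k Q).
\]
Summing over~$k$ and using ${\sum_k E_k = I}$, I get
\[
\sum_k \abs{\tr(E_k(\rho - \sigma))} \leq \tr(P) + \tr(Q) = \trnorm{\rho - \sigma},
\]
which delivers ${\SD(\mu_\rho, \mu_\sigma) \leq \frac{1}{2} \trnorm{\rho - \sigma} = D(\rho, \sigma)}$, as required.

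There is no real obstacle here; the only subtlety is making sure the Hahn decomposition is legitimate (which it is because ${\rho - \sigma}$ is Hermitian with real spectrum) and that the triangle-type inequality ${|a - b| \leq a + b}$ is applied only to nonnegative reals. Everything else is a direct manipulation, and the proof goes through verbatim for any POVM, not just projective measurements, so the lemma as stated (``arbitrary identical measurement'') follows immediately.
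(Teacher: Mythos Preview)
Your proof is correct and is precisely the standard textbook argument (via the Jordan decomposition of ${\rho - \sigma}$ and the POVM completeness relation). The paper itself states this lemma without proof, treating it as a well-known fact, so there is nothing to compare against; your write-up would serve perfectly well as the omitted justification.
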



\begin{lemma}[\cite{SpeRud02PRA,NaySho03PRA}]
For any quantum states $\rho$, $\sigma$, and $\xi$,
\[
F(\rho, \sigma)^2 + F(\sigma, \xi)^2
\leq
1 + F(\rho, \xi).
\]
\label{Lemma: F(a,b)^2 + F(b,c)^2 < 1 + F(a,c)}
\end{lemma}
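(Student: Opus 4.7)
The plan is to reduce the statement, via Uhlmann's theorem, to an elementary inequality relating the overlaps of three unit vectors in a common Hilbert space, and then verify that inequality by direct computation split into two cases according to whether $F(\rho,\sigma)^2 + F(\sigma,\xi)^2$ exceeds $1$ or not.

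First, I would apply Uhlmann's theorem to fix purifications $\ket{\psi_\rho}, \ket{\psi_\sigma}, \ket{\psi_\xi}$ of $\rho, \sigma, \xi$ in a sufficiently large common Hilbert space such that $\braket{\psi_\rho}{\psi_\sigma} = F(\rho,\sigma)$ and $\braket{\psi_\sigma}{\psi_\xi} = F(\sigma,\xi)$, with both taken to be real and non-negative by adjusting global phases. Writing $\alpha = F(\rho,\sigma)$ and $\beta = F(\sigma,\xi)$, Uhlmann's theorem applied to the pair $(\rho,\xi)$ further yields $F(\rho,\xi) \geq \abs{\braket{\psi_\rho}{\psi_\xi}}$, so it suffices to prove $\alpha^2 + \beta^2 \leq 1 + \abs{\braket{\psi_\rho}{\psi_\xi}}$.

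Second, I would decompose each of $\ket{\psi_\rho}$ and $\ket{\psi_\xi}$ orthogonally with respect to $\ket{\psi_\sigma}$: write $\ket{\psi_\rho} = \alpha \ket{\psi_\sigma} + \sqrt{1-\alpha^2}\,\ket{\phi_\rho}$ and $\ket{\psi_\xi} = \beta \ket{\psi_\sigma} + \sqrt{1-\beta^2}\,\ket{\phi_\xi}$, where $\ket{\phi_\rho}, \ket{\phi_\xi}$ are unit vectors orthogonal to $\ket{\psi_\sigma}$. A direct expansion gives $\braket{\psi_\rho}{\psi_\xi} = \alpha\beta + \sqrt{(1-\alpha^2)(1-\beta^2)}\,t$ with $t = \braket{\phi_\rho}{\phi_\xi}$ and $\abs{t}\leq 1$, so by the reverse triangle inequality $\abs{\braket{\psi_\rho}{\psi_\xi}} \geq \alpha\beta - \sqrt{(1-\alpha^2)(1-\beta^2)}$ whenever the right-hand side is non-negative.

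Third, I would split into cases. If $\alpha^2+\beta^2 \leq 1$, then $\alpha^2+\beta^2 - 1 \leq 0 \leq F(\rho,\xi)$ and the inequality holds trivially. If instead $\alpha^2+\beta^2 > 1$, then squaring shows $\alpha\beta > \sqrt{(1-\alpha^2)(1-\beta^2)}$, so the bound from Step~2 gives $F(\rho,\xi) \geq \alpha\beta - \sqrt{(1-\alpha^2)(1-\beta^2)} \geq 0$, and it remains to verify the elementary inequality $\alpha\beta - \sqrt{(1-\alpha^2)(1-\beta^2)} \geq \alpha^2 + \beta^2 - 1$. Rewriting as $1 + \alpha\beta - \alpha^2 - \beta^2 \geq \sqrt{(1-\alpha^2)(1-\beta^2)}$, checking that the left-hand side is non-negative in the regime $\alpha,\beta\in[0,1]$, squaring, and simplifying reduces the problem to $(1 - \alpha^2 - \beta^2)(\alpha - \beta)^2 \leq 0$, which holds since the first factor is negative in this case and the second is a square. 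The main obstacle is purely bookkeeping: confirming non-negativity of both sides before squaring and carrying through the simplification without sign errors.
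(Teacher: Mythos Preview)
Your proof is correct. The paper does not prove this lemma at all; it simply cites it as a known inequality from Spekkens--Rudolph and Nayak--Shor, so there is no ``paper's own proof'' to compare against. Your argument via Uhlmann's theorem, orthogonal decomposition relative to $\ket{\psi_\sigma}$, and the case split on whether $\alpha^2+\beta^2$ exceeds $1$ is a clean, self-contained derivation. The one point worth flagging explicitly in a write-up is the non-negativity of $1+\alpha\beta-\alpha^2-\beta^2$ on $[0,1]^2$ before squaring; you mention checking it but do not give the one-line justification (e.g., by symmetry assume $\alpha\geq\beta$, then $1+\alpha\beta-\alpha^2-\beta^2 = (1-\alpha^2) + \beta(\alpha-\beta) \geq 0$).
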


For any unitary transformation~$U$ acting over the two-dimensional Hilbert space~${\calH = \Complex(\Sigma)}$ (i.e., the single-qubit space),
the \emph{Choi-Jamio{\l}kowski state} of $U$ is the two-qubit state in ${\calH \tensor \calH = \Complex(\Sigma^2)}$ defined by
\[
\ket{J(U)} = (I \tensor U) \ket{\Phi^+}.
\]
In fact, the Choi-Jamio{\l}kowski state can be defined for any admissible 
(and not limited to unitary) 
transformation and any finite-dimensional Hilbert space,
using the Choi-Jamio{\l}kowski representation~\cite{Jam72RMP,Cho75LAA},
but which is unnecessary in this paper.


\paragraph{The Finite Quantum de Finetti Theorem}

For ${N \in \Natural}$ and quantum registers~${\sfQ_1, \ldots, \sfQ_N}$,
each consisting of $k$~qubits,
an $N$-partite quantum state~$\rho$ in ${(\sfQ_1, \ldots, \sfQ_N)}$
is said to be \emph{symmetric} if $\rho$ is invariant under any permutation over the registers~${\sfQ_1, \ldots, \sfQ_N}$.

The \emph{finite quantum de Finetti theorem}~\cite{KonRen05JMP,ChrKonMitRen07CMP}
provides a very useful property that
the reduced $m$-partite state of any $N$-partite symmetric state
when tracing out the last ${N-m}$~subsystems
must be close to a mixture of $m$-fold product states.
This paper uses the following bound proved in Ref.~\cite{ChrKonMitRen07CMP}.

\begin{theorem}[Finite\hspace{-0.35mm} quantum\hspace{-0.35mm} de\hspace{-0.35mm} Finetti\hspace{-0.35mm} theorem]
For ${N, k \in \Natural}$,
let ${\sfQ_1, \ldots, \sfQ_N}$ be quantum registers
each consisting of $k$~qubits,
and let $\rho$ be an $N$-partite symmetric state in ${(\sfQ_1, \ldots, \sfQ_N)}$.
For any ${m \in \Natural}$ satisfying ${m < N}$
and the $m$-partite reduced state~$\rho^{(m)}$ of $\rho$ in ${(\sfQ_1, \ldots, \sfQ_m)}$,
there exist ${C \in \Natural}$, a set~${ \{\xi_j\}_{j \in \{1, \ldots, C \}}}$ of $k$-qubit states,
and an associated probability distribution~${\{\mu_j\}_{j \in \{1, \ldots, C \}}}$
such that
\[
D \biggl( \rho^{(m)}, \sum_{j=1}^C \mu_j \xi_j^{\tensor m} \biggr)
\leq
\frac{2^{2k+1} m}{N}.
\]
\label{Theorem: quantum de Finetti}
\end{theorem}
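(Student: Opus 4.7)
The plan is to reduce the mixed-state de Finetti bound to its pure-state analogue via a symmetric purification trick. Set ${d = 2^k}$, ${\calH = \Complex(\Sigma^k)}$, and introduce an auxiliary Hilbert space ${\calH' \cong \calH}$ of the same dimension.

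First I would invoke a standard \emph{symmetric purification} lemma: any permutation-invariant state $\rho$ on $\calH^{\tensor N}$ admits a purification ${\ket{\Psi} \in (\calH \tensor \calH')^{\tensor N}}$ that lies in the symmetric subspace of $(\calH \tensor \calH')^{\tensor N}$ under the diagonal action of the permutation group $S_N$ on the $N$ factors of $\calH \tensor \calH'$, and satisfies ${\rho = \tr_{(\calH')^{\tensor N}} \ketbra{\Psi}}$. One way to see this is to spectrally decompose $\rho$ inside each $S_N$-isotypic component of $\calH^{\tensor N}$ and attach an ancilla in an isomorphic copy of the same isotypic component on $(\calH')^{\tensor N}$; the resulting purification is permutation-invariant by construction.

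Next I would apply the \emph{pure-state} quantum de Finetti theorem to $\ket{\Psi}$, now viewed as a symmetric pure state on $N$ copies of ${\calK = \calH \tensor \calH'}$ of dimension ${D = d^2 = 2^{2k}}$. The pure-state theorem yields a probability distribution $\{\mu_j\}$ and pure states ${\ket{\phi_j} \in \calK}$ such that
\[
D \bigl( \tr_{N-m} \ketbra{\Psi}, \, \textstyle \sum_j \mu_j \ketbra{\phi_j}^{\tensor m} \bigr)
\leq
\frac{2 D m}{N}
=
\frac{2^{2k+1} m}{N}.
\]
Tracing out the $m$ copies of $\calH'$ that sit on top of the first $m$ registers of $\ket{\Psi}$, and invoking monotonicity of the trace distance under partial trace (a CPTP map), gives exactly the stated bound with the $k$-qubit density operators ${\xi_j = \tr_{\calH'} \ketbra{\phi_j} \in \Density(\calH)}$. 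Finiteness of the index set (i.e.\ the existence of a finite ${C \in \Natural}$) is then forced by Carath\'eodory's theorem applied inside the finite-dimensional compact cone spanned by $m$-fold product states in $\Density(\calH^{\tensor m})$.

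The main obstacle is the pure-state statement itself. Its cleanest proof exploits the identity
\[
\Pi^N_\sym
=
\binom{N + D - 1}{D - 1} \int \ketbra{\phi}^{\tensor N} \, d\phi,
\]
where $d\phi$ is the unitarily-invariant measure on pure states of $\calK$ and $\Pi^N_\sym$ is the projector onto the symmetric subspace of $\calK^{\tensor N}$. This resolution turns the last ${N - m}$ registers into a continuous POVM whose post-measurement state on the first $m$ registers, conditioned on outcome $\ket{\phi}$, is exactly $\ketbra{\phi}^{\tensor m}$. The remaining work is to control the error introduced by integrating against the normalisation ${\binom{N - m + D - 1}{D - 1}}$ rather than ${\binom{N + D - 1}{D - 1}}$, which after a short dimension-counting calculation is bounded by ${2 D m / N}$ and yields the constant appearing in the theorem.
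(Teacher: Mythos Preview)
The paper does not prove this theorem at all: it is stated as a known result and attributed to Refs.~\cite{KonRen05JMP,ChrKonMitRen07CMP} (with the specific bound taken from Christandl--K\"onig--Mitchison--Renner), so there is no in-paper proof to compare your proposal against. Your sketch is in fact the standard route followed in that literature---symmetric purification into $(\calH\tensor\calH')^{\tensor N}$, the pure-state de~Finetti bound via the integral resolution of $\Pi^N_\sym$, and then tracing out the ancillas---and the dimension doubling $d\mapsto d^2=2^{2k}$ is exactly what produces the constant $2^{2k+1}$ appearing in the statement. So your argument is correct and essentially reproduces the cited proof rather than anything the present paper supplies.
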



\paragraph{Polynomial-Time Uniformly Generated Families of Quantum Circuits}

Following conventions,
we define quantum Merlin-Arthur proof systems
in terms of quantum circuits.
In particular, we use the following notion of
polynomial-time uniformly generated families of quantum circuits.

A family~${\{ Q_x \}}$ of quantum circuits is
\emph{polynomial-time uniformly generated}
if there exists a deterministic procedure
that, on every input~$x$, outputs a description of $Q_x$
and runs in time polynomial in $\abs{x}$.
It is assumed that the circuits in such a family are composed of gates
in some reasonable, universal, finite set of quantum gates.
Furthermore, it is assumed that the number of gates in any circuit
is not more than the length of the description of that circuit.
Therefore $Q_x$ must have size polynomial in $\abs{x}$.
For convenience,
we may identify a circuit~$Q_x$ with the unitary operator it induces.

Throughout this paper, we assume a gate set
with which the Hadamard and any classical reversible transformations
can be exactly implemented.
Note that this assumption is satisfied by many standard gate sets
such as the Shor basis~\cite{Sho96FOCS}
consisting of the Hadamard, controlled-$i$-phase-shift, and Toffoli gates,
and the gate set consisting of the Hadamard, Toffoli, and NOT gates~\cite{Shi02QIC,Aha03arXiv}.
Moreover, as the Hadamard transformation in some sense can be viewed as
a quantum analogue of the classical operation of flipping a fair coin,
our assumption would be the most natural quantum correspondence
to the tacit classical assumption in randomized complexity theory
that fair coins and perfect logical gates are available.
Hence we believe that our condition is very reasonable and not restrictive.
Note that, with a gate set satisfying this assumption,
any transformation corresponding to a Clifford group operator
is exactly implementable.
In particular,
the controlled-phase-flip transformation $Z$ can be exactly realized
by using an ancilla qubit prepared in state
${\ket{-} = \frac{1}{\sqrt{2}} (\ket{0} - \ket{1})}$
(by applying a NOT and an Hadamard in sequence to $\ket{0}$)
and performing a CNOT with this ancilla as the target.

Since non-unitary and unitary quantum circuits
are equivalent in computational power~\cite{AhaKitNis98STOC},
it is sufficient to treat only unitary quantum circuits,
which justifies the above definition.
Nevertheless, for readability,
most procedures in this paper will be described
using intermediate projective measurements
and unitary operations conditioned on the outcome of the measurements.
All of these intermediate measurements can be deferred
to the end of the procedure by a standard technique
so that the procedure becomes implementable with a unitary circuit.

\paragraph{Quantum Interactive Proof Systems}

Now we review the model of quantum interactive proof systems.

A quantum interactive proof system is a communication model between two players
called a \emph{quantum verifier}~$V$ and a \emph{quantum prover}~$P$,
both of whom receive a common input~${x \in \Sigma^\ast}$.
Fix the input~$x$.
Let $\sfV$ and $\sfP$ be quantum registers corresponding to the private spaces of $V$ and $P$, respectively,
and let $\sfM$ be a quantum register corresponding to the message space
that is used to exchange messages between $V$ and $P$.
One of the qubits in $\sfV$, which is private to $V$, is designated as the \emph{output qubit}.
At the beginning, all the qubits in $\sfV$ and $\sfM$ are initialized to state~$\ket{0}$,
while the quantum state in $\sfP$ can be arbitrarily prepared by $P$.
Then $V$ and $P$ together run a protocol that consists of alternating turns of the verifier and of the prover.
The first turn is for the verifier if the total number of turns is even,
and it is for the prover otherwise,
whereas the last turn is always for the prover.
At each turn of the verifier,
$V$ applies some unitary transformation implementable with a polynomial-size quantum circuit
to the state in ${(\sfV, \sfM)}$,
and then sends the register~$\sfM$ to $P$.
At each turn of the prover,
$P$ applies some unitary transformation to the state in ${(\sfP,\sfM)}$,
and then sends $\sfM$ to $V$.
After the last turn,
the verifier~$V$ further applies some unitary transformation implementable with a polynomial-size quantum circuit
to the state in ${(\sfV, \sfM)}$,
and then measures the output qubit in the standard basis.
$V$ accepts if this measurement results in $\ket{1}$ and rejects otherwise.

Formally, for any function~$\function{m}{\Nonnegative}{\Natural}$ that is polynomially bounded,
an \emph{$m$-message polynomial-time quantum verifier}
is a polynomial-time computable mapping~$\function{V}{\Sigma^\ast}{\Sigma^\ast}$.
For each input~${x \in \Sigma^\ast}$,
${V(x)}$ is interpreted as describing a series
${
\{ V_{x,j} \}_{j \in \{1, \ldots, \ceil{(m(\abs{x})+1)/2} \}}
}$ 
of quantum circuits
acting over the same number of qubits
as well as a partition of the qubits on which these circuits act into registers~$\sfV$ and $\sfM$,
where ${\{ V_{x,j} \}}$ is a polynomial-time uniformly generated family of quantum circuits
explained before
(in particular, every circuit~$V_{x,j}$ is composed of gates
in some reasonable, universal, finite set of quantum gates).
For any polynomially bounded function~$\function{m}{\Nonnegative}{\Natural}$,
an \emph{$m$-message quantum prover}
is a mapping~$P$
that simply maps an input binary string~${x \in \Sigma^\ast}$
to a series
${
\{ P_{x,j} \}_{j \in \{1, \ldots, \floor{(m(\abs{x})+1)/2} \}}
}$ 
of unitary transformations
as well as a partition of the qubits on which these unitary transformations act into registers~$\sfM$ and $\sfP$.
It is always assumed that $V$ and $P$ are \emph{compatible}
(i.e., the register~$\sfM$ is common for $V$ and $P$)
when they are associated with the same quantum interactive proof system.

Given an input~$x$, an $m$-message polynomial-time quantum verifier~$V$, and an $m$-message quantum prover~$P$,
let $Q_x$ be the unitary transformation induced from $V$ and $P$,
acting over the space corresponding to ${(\sfV, \sfM, \sfP)}$:
\[
  Q_x
  =
  (V_{x, (m(\abs{x}) + 1)/2} \tensor I_\calP)
  (I_\calV \tensor P_{x, (m(\abs{x}) + 1)/2})
  \cdots
  (V_{x,1} \tensor I_\calP)
  (I_\calV \tensor P_{x,1})
\]
if ${m(\abs{x})}$ is odd,
while
\[
  Q_x
  =
  (V_{x, (m(\abs{x})/2) + 1} \tensor I_\calP)
  (I_\calV \tensor P_{x, m(\abs{x})/2})
  (V_{x, m(\abs{x})/2} \tensor I_\calP)
  \cdots
  (I_\calV \tensor P_{x,1})
  (V_{x,1} \tensor I_\calP)
\]
if ${m(\abs{x})}$ is even,
where $\calV$ and $\calP$ are the Hilbert spaces corresponding to $\sfV$ and $\sfP$, respectively.
When communicating with the prover~$P$ who prepares the initial state~${\rho \in \Density(\calP)}$,
the verifier~$V$ accepts the input~$x$
if the measurement of the designated output qubit in $\sfV$ in the standard basis results in $\ket{1}$
at the end of the protocol after having applied the unitary transformation~$Q_x$
to the initial state~${\ketbra{0}_{(\sfV, \sfM)} \tensor \rho}$ in ${(\sfV, \sfM, \sfP)}$.

Formally, the class~${\QIP(m,c,s)}$ of problems
having $m$-message quantum interactive proof systems
with completeness~$c$ and soundness~$s$
is defined as follows.
For generality, throughout this paper,
we use promise problems~\cite{EveSelYac84ICtrl} rather than languages
when defining complexity classes.

\begin{definition}
Given a polynomially bounded function~$\function{m}{\Nonnegative}{\Natural}$
and functions~$\function{c,s}{\Nonnegative}{[0,1]}$ satisfying ${c > s}$,
a promise problem~${A = (A_\yes, A_\no)}$ is in ${\QIP(m,c,s)}$
iff there exists an $m$-message polynomial-time quantum verifier~$V$
such that, for every input~$x$:
\begin{description}
\item[\textnormal{(Completeness)}]
  if ${x \in A_\yes}$,
  there exist an $m$-message quantum prover~$P$
  and the initial state~$\rho_x$ of $P$
  that make $V$ accept $x$
  with probability at least ${c(\abs{x})}$,
\item[\textnormal{(Soundness)}]
  if ${x \in A_\no}$,
  for any $m$-message quantum prover~$P'$
  and any initial state~$\rho'_x$ of $P'$ prepared,
  $V$ accepts $x$
  with probability at most ${s(\abs{x})}$.
\end{description}
\label{Definition: QIP(m,c,s)}
\end{definition}

The class~${\QIP(m)}$ of problems having $m$-message quantum interactive proof systems
is defined as follows.

\begin{definition}
Given a polynomially bounded function~$\function{m}{\Nonnegative}{\Natural}$,
a promise problem~${A = (A_\yes, A_\no)}$ is in ${\QIP(m)}$
iff $A$ is in ${\QIP(m, 1- \varepsilon, \varepsilon)}$
for some negligible function~$\function{\varepsilon}{\Nonnegative}{[0,1]}$.
\label{Definition: QIP(m)}
\end{definition}

Similarly, the class~${\QIP_1(m)}$ of problems having $m$-message quantum interactive proof systems of perfect completeness
is defined as follows.

\begin{definition}
Given a polynomially bounded function~$\function{m}{\Nonnegative}{\Natural}$,
a promise problem~${A = (A_\yes, A_\no)}$ is in ${\QIP_1(m)}$
iff $A$ is in ${\QIP(m, 1, \varepsilon)}$
for some negligible function~$\function{\varepsilon}{\Nonnegative}{[0,1]}$.
\label{Definition: QIP_1(m)}
\end{definition}

Finally, as quantum Merlin-Arthur proof systems are nothing
but one-message quantum interactive proof systems,
the classes~$\QMA$~and~$\QMA_1$ of problems having quantum Merlin-Arthur proof systems
and those of perfect completeness
are simply defined as follows, respectively.

\begin{definition}
A promise problem~${A = (A_\yes, A_\no)}$ is in $\QMA$
iff $A$ is in ${\QIP(1, 1- \varepsilon, \varepsilon)}$
for some negligible function~$\function{\varepsilon}{\Nonnegative}{[0,1]}$.
\label{Definition: QMA}
\end{definition}

\begin{definition}
A promise problem~${A = (A_\yes, A_\no)}$ is in $\QMA_1$
iff $A$ is in ${\QIP(1, 1, \varepsilon)}$
for some negligible function~$\function{\varepsilon}{\Nonnegative}{[0,1]}$.
\label{Definition: QMA_1}
\end{definition}



\paragraph{Quantum Merlin-Arthur Proof Systems with Shared EPR Pairs}

We further introduce another variant of quantum Merlin-Arthur proof systems
in which Arthur and Merlin initially share some copies of the EPR pair~$\ket{\Phi^+}$.
If Arthur and Merlin are allowed to share $k$~EPR pairs initially,
the resulting systems are called
\emph{quantum Merlin-Arthur proof systems with $k$~shared EPR pairs},
or \emph{$k$-EPR QMA proof systems} in short.
Notice that this model is actually equivalent to a special case of
two-message quantum interactive proof systems in which
the first transformation of a verifier is just to create $k$~copies of the EPR pairs
(and $k$~halves of these EPR pairs are sent to a prover as the first message).

Formally, the class~${\EPRQMA{k}(c,s)}$ of problems
having quantum Merlin-Arthur proof systems with $k$~shared EPR pairs
with completeness~$c$ and soundness~$s$
is defined as follows.

\begin{definition}
Given a polynomially bounded function~$\function{k}{\Nonnegative}{\Natural}$
and functions~$\function{c,s}{\Nonnegative}{[0,1]}$ satisfying ${c > s}$,
a promise problem~${A = (A_\yes, A_\no)}$ is in ${\EPRQMA{k}(c,s)}$
iff $A$ has a two-message quantum interactive proof system
with completeness~$c$ and soundness~$s$
in which, for every input~$x$,
the first transformation of the associated quantum verifier
is just to create ${k(\abs{x})}$ copies of EPR pairs
and the first message from the verifier consists only of the ${k(\abs{x})}$~halves of these EPR pairs.
\label{Definition: k-EPR-QMA(c,s)}
\end{definition}

We further define the class~$\EPRQMA{\const}$ of problems
having quantum Merlin-Arthur proof systems with a constant number of shared EPR pairs with constant gap between completeness and soundness
and the class~$\EPRQMA{\const}_1$ of problems
having those of perfect completeness with constant soundness error
as follows.

\begin{definition}
A promise problem~${A = (A_\yes, A_\no)}$ is in $\EPRQMA{\const}$
iff $A$ is in ${\EPRQMA{k}(2/3, 1/3)}$ for some constant~${k \in \Natural}$.
\label{Definition: const-EPR-QMA}
\end{definition}

\begin{definition}
A promise problem~${A = (A_\yes, A_\no)}$ is in $\EPRQMA{\const}_1$
iff $A$ is in ${\EPRQMA{k}(1, 1/2)}$ for some constant~${k \in \Natural}$.
\label{Definition: const-EPR-QMA_1}
\end{definition}

\begin{remark}
Definitions~\ref{Definition: const-EPR-QMA}~and~\ref{Definition: const-EPR-QMA_1}
are equivalent to the seemingly most conservative definitions~${\EPRQMA{\const} \defeq \bigcup_{k \in \Natural, 0 \leq s < c \leq 1} \EPRQMA{k}(c,s)}$
and ${\EPRQMA{\const}_1 \defeq \bigcup_{k \in \Natural, s \in [0,1)} \EPRQMA{k}(1,s)}$
of these classes,
for repeating the associated system with each of these classes constant times
can achieve arbitrarily large constant gap between completeness and soundness
(in the two-sided error case, one first achieves sufficiently large completeness
via a parallel repetition followed by a threshold value computation,
and then achieves desirably small soundness via another parallel repetition
of the obtained large-completeness system, without decreasing the completeness too much).
\end{remark}


\section{Reflection Procedure}
\label{Section: Reflection Procedure}

We start with presenting a very simple base procedure,
which we call the \textsc{Reflection Procedure},
that forms a very base of our protocols to be constructed
-- basically, our protocols aim to simulate this base procedure
with several suitable modifications.

Let $\calH$ be some Hilbert space,
and consider two decompositions of $\calH$
into ${\calX_0 \oplus \calX_1}$ and ${\calY_0 \oplus \calY_1}$
for subspaces~$\calX_0$, $\calX_1$, $\calY_0$, and $\calY_1$ of $\calH$.
Let $\Delta_j$ be the projection over $\calH$ onto the subspace~$\calX_j$
and let $\Pi_j$ be that onto $\calY_j$,
for each ${j \in \Binary}$.

Let $U$ be some unitary transformation acting over $\calH$,
and let $M$ be the Hermitian operator over $\calH$ defined by
\[
M = \Delta_0 \conjugate{U} \Pi_0 U \Delta_0.
\]
Suppose that $M$ has an eigenvalue~${\lambda > 0}$
and consider the eigenstate (i.e., the normalized eigenvector)~$\ket{\phi_0}$
corresponding to $\lambda$.
Then, ${M \ket{\phi_0} = \lambda \ket{\phi_0}}$,
and thus,
\[
\Delta_0 \ket{\phi_0} = \frac{1}{\lambda} \Delta_0 M \ket{\phi_0} = \frac{1}{\lambda} M \ket{\phi_0} = \ket{\phi_0}.
\]
Define the four states~$\ket{\psi_0}$, $\ket{\psi_1}$, $\ket{\xi_0}$, and $\ket{\xi_1}$ in $\calH$ as follows:
\[
\ket{\psi_0} = \frac{\Pi_0 U \ket{\phi_0}}{\norm{\Pi_0 U \ket{\phi_0}}},
\quad
\ket{\psi_1} = \frac{\Pi_1 U \ket{\phi_0}}{\norm{\Pi_1 U \ket{\phi_0}}},
\quad
\ket{\xi_0} = \frac{\Delta_0 \conjugate{U} \ket{\psi_0}}{\norm{\Delta_0 \conjugate{U} \ket{\psi_0}}},
\quad
\ket{\xi_1} = \frac{\Delta_1 \conjugate{U} \ket{\psi_0}}{\norm{\Delta_1 \conjugate{U} \ket{\psi_0}}}.
\]
Then,
${
\norm{\Pi_0 U \ket{\phi_0}}
=
\norm{\Pi_0 U \Delta_0 \ket{\phi_0}}
=
\sqrt{\bra{\phi_0} M \ket{\phi_0}}
=
\sqrt{\lambda}
}$,
and thus,
${
\norm{\Pi_1 U \ket{\phi_0}}
=
\sqrt{1 - \lambda}
}$.
It follows that
\[
\norm{\Delta_0 \conjugate{U} \ket{\psi_0}}
=
\frac{1}{\sqrt{\lambda}} \norm{\Delta_0 \conjugate{U} \Pi_0 U \ket{\phi_0}}
=
\frac{1}{\sqrt{\lambda}} \norm{\Delta_0 \conjugate{U} \Pi_0 U \Delta_0 \ket{\phi_0}}
=
\frac{1}{\sqrt{\lambda}} \norm{M \ket{\phi_0}}
=
\sqrt{\lambda},
\]
and thus,
${
\norm{\Delta_1 \conjugate{U} \ket{\psi_0}}
=
\sqrt{1 - \lambda}
}$.
Hence,
\[
\ket{\xi_0}
=
\frac{1}{\sqrt{\lambda}} \Delta_0 \conjugate{U} \ket{\psi_0}
=
\frac{1}{\lambda} \Delta_0 \conjugate{U} \Pi_0 U \ket{\phi_0}
=
\frac{1}{\lambda} \Delta_0 \conjugate{U} \Pi_0 U \Delta_0 \ket{\phi_0}
=
\frac{1}{\lambda} M \ket{\phi_0}
=
\ket{\phi_0}.
\]
This implies that
\[
\conjugate{U} \ket{\psi_0}
=
\sqrt{\lambda} \ket{\xi_0} + \sqrt{1 - \lambda} \ket{\xi_1},
\quad
\conjugate{U} \ket{\psi_1}
=
\sqrt{1 - \lambda} \ket{\xi_0} - \sqrt{\lambda} \ket{\xi_1},
\]
which was the crucial property analyzed by Marriott~and~Watrous~\cite{MarWat05CC}
to develop their space-efficient QMA amplification technique.

It follows that
\[
\conjugate{U} (- \Pi_0 + \Pi_1) U \ket{\phi_0}
=
\conjugate{U} \bigl( - \sqrt{\lambda} \ket{\psi_0} + \sqrt{1 - \lambda} \ket{\psi_1} \bigr)
=
(1-2\lambda) \ket{\xi_0} - 2 \sqrt{\lambda (1 - \lambda)} \ket{\xi_1},
\]
and thus, when $M$ has an eigenvalue~$1/2$,
the corresponding eigenstate (which is necessarily in $\calX_0$)
must be transformed into a state in $\calX_1$ after the following process:
one first applies $U$ to $\ket{\phi_0}$,
next flips the phase of states in $\calY_0$
(i.e., applies the unitary transformation~${- \Pi_0 + \Pi_1}$),
and then applies $\conjugate{U}$.
This property can be used to test if $M$ has an eigenvalue $1/2$,
which is summarized in Figure~\ref{Figure: Reflection Procedure}.

\begin{figure}[t!]
\begin{algorithm*}{\textsc{Reflection Procedure}}
\begin{step}
\item
  Receive a quantum register~$\sfQ$.
  Reject if the state in $\sfQ$ does not belong to the subspace corresponding to the projection~$\Delta_0$,
  and otherwise apply $U$ to $\sfQ$.
\item
  Perform a phase-flip (i.e., multiply $-1$ in phase)
  if the state in $\sfQ$ belongs to the subspace corresponding to the projection~$\Pi_0$.
\item
  Apply $\conjugate{U}$ to $\sfQ$.
\item
  Reject if the state in $\sfQ$ belongs to the subspace corresponding to $\Delta_0$,
  and accept otherwise.
\end{step}
\end{algorithm*}
\caption{The \textsc{Reflection Procedure}.}
\label{Figure: Reflection Procedure}
\end{figure}

\begin{proposition}
Suppose that the Hermitian operator~${M = \Delta_0 \conjugate{U} \Pi_0 U \Delta_0}$
has an eigenvalue~$1/2$.
Then there exists a quantum state given in Step~1 of the \textsc{Reflection Procedure}
such that the procedure results in acceptance with certainty.
\label{Proposition: completeness of Reflection Procedure}
\end{proposition}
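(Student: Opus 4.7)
The plan is to exhibit the eigenstate $\ket{\phi_0}$ of $M$ corresponding to the eigenvalue $\lambda = 1/2$ as the quantum state given in Step~1 of the procedure, and then trace through the four steps using the identities already derived in the lead-up to the proposition. No new calculation is needed beyond what is already on the page; the task is just to identify which of those identities establish completeness (as opposed to the soundness claim that presumably follows).

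First, I would check that Step~1 does not cause a rejection on input $\ket{\phi_0}$. The computation shown above gives $\Delta_0 \ket{\phi_0} = \frac{1}{\lambda} \Delta_0 M \ket{\phi_0} = \frac{1}{\lambda} M \ket{\phi_0} = \ket{\phi_0}$, so $\ket{\phi_0}$ lies entirely in the subspace onto which $\Delta_0$ projects, and the procedure continues by applying $U$.

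Second, I would invoke the key identity derived above, namely
\[
\conjugate{U} (- \Pi_0 + \Pi_1) U \ket{\phi_0}
=
(1 - 2\lambda) \ket{\xi_0} - 2\sqrt{\lambda(1-\lambda)}\, \ket{\xi_1},
\]
which describes the cumulative effect of Steps~2--3 (apply $U$, apply the phase-flip $-\Pi_0 + \Pi_1$, apply $\conjugate{U}$) on $\ket{\phi_0}$. Substituting $\lambda = 1/2$ kills the $\ket{\xi_0}$ coefficient and leaves the state equal (up to a global phase) to $\ket{\xi_1}$, which lies entirely in $\calX_1$. Since $\ket{\xi_1}$ lies in the subspace onto which $\Delta_1$ projects, Step~4 then accepts with certainty, completing the proof.

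There is no genuine obstacle here: the algebraic work has been done in the setup, and the only thing to be careful about is matching the identities to the physical steps of the procedure, in particular that the phase-flip in Step~2 realizes the unitary $-\Pi_0 + \Pi_1$ and that ``reject if the state lies in the $\Delta_0$ subspace'' in Step~4 is consistent with the final state lying in $\calX_1$. The proof will therefore be a few lines long and serves mainly to fix the witness $\ket{\phi_0}$ for later use in the analyses of the \textsc{Reflection Simulation Test} and the \textsc{Modified Reflection Procedure}.
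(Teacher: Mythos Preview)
Your proposal is correct and matches the paper's approach exactly: the paper's proof is simply ``Consider the case where the eigenstate of $M$ with its corresponding eigenvalue~$1/2$ is received in $\sfQ$ in Step~1. Then the claim is immediate from the argument above.'' You have unpacked precisely those ``arguments above'' that the paper invokes, namely $\Delta_0\ket{\phi_0}=\ket{\phi_0}$ and the identity $\conjugate{U}(-\Pi_0+\Pi_1)U\ket{\phi_0}=(1-2\lambda)\ket{\xi_0}-2\sqrt{\lambda(1-\lambda)}\ket{\xi_1}$ evaluated at $\lambda=1/2$.
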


\begin{proof}
Consider the case where the eigenstate of $M$ with its corresponding eigenvalue~$1/2$ is received in $\sfQ$ in Step~1.
Then the claim is immediate from the argument above.
\end{proof}

\begin{proposition}
For any ${\varepsilon \in (0, \frac{1}{2}]}$,
suppose that none of the eigenvalues of the Hermitian operator~${M = \Delta_0 \conjugate{U} \Pi_0 U \Delta_0}$
is in the interval~${\bigl( \frac{1}{2} - \varepsilon, \frac{1}{2} + \varepsilon \bigr)}$.
Then, for any quantum state given in Step~1 of the \textsc{Reflection Procedure},
the procedure results in rejection with probability at least ${4 \varepsilon^2}$.
\label{Proposition: soundness of Reflection Procedure}
\end{proposition}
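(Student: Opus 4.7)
The plan is to express the acceptance probability of the procedure as the squared norm of a single operator applied to the input state, reduce that operator algebraically to something simple involving $M$, and then invoke a spectral-decomposition argument to conclude.

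First I would observe that the procedure is a sequence of a projective measurement $\{\Delta_0,\Delta_1\}$ (rejecting on $\Delta_1$), the unitary $U$, the unitary reflection $-\Pi_0+\Pi_1 = I - 2\Pi_0$, the unitary $U^\dagger$, and a final projective measurement $\{\Delta_0,\Delta_1\}$ (accepting on $\Delta_1$). Composing these, for any input unit vector $\ket{\phi}\in\calH$ the overall acceptance probability equals
\[
p_{\acc}(\ket{\phi}) = \bignorm{A \ket{\phi}}^2,
\qquad
A \defeq \Delta_1 \conjugate{U}(-\Pi_0+\Pi_1) U \Delta_0.
\]
(The first-measurement rejection contribution is already absorbed, because the $\Delta_0$ on the right kills any component in $\calX_1$.) I would then extend the argument to mixed inputs by linearity at the end.

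Next I would simplify $A$. Writing $N \defeq \conjugate{U}\Pi_0 U$ and using $-\Pi_0+\Pi_1 = I-2\Pi_0$ together with $\Delta_1 \Delta_0 = 0$, one obtains
\[
A = \Delta_1(I - 2N)\Delta_0 = -2\,\Delta_1 N \Delta_0.
\]
Then, using $\Delta_1 = I - \Delta_0$, $N^2 = N$ (since $\Pi_0$ is a projector and $U$ unitary), and $M = \Delta_0 N \Delta_0$,
\[
\conjugate{A} A = 4\, \Delta_0 N \Delta_1 N \Delta_0 = 4\bigl(\Delta_0 N^2 \Delta_0 - \Delta_0 N \Delta_0 N \Delta_0\bigr) = 4(M - M^2).
\]
Consequently the rejection probability satisfies
\[
1 - p_{\acc}(\ket{\phi}) = \bra{\phi}\bigl(I - 4M + 4M^2\bigr)\ket{\phi} = \bra{\phi}(I - 2M)^2\ket{\phi}.
\]

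The final step is the spectral argument. Since $M$ is Hermitian, it admits a decomposition $M = \sum_\lambda \lambda P_\lambda$ with $\sum_\lambda P_\lambda = I$, and $(I-2M)^2 = \sum_\lambda (1-2\lambda)^2 P_\lambda$. By hypothesis, every eigenvalue $\lambda$ of $M$ lies outside $(\tfrac12-\varepsilon,\tfrac12+\varepsilon)$, hence $(1-2\lambda)^2 = 4(\lambda-\tfrac12)^2 \geq 4\varepsilon^2$. This is an operator inequality $(I-2M)^2 \succeq 4\varepsilon^2 I$, so
\[
1 - p_{\acc}(\ket{\phi}) \geq 4\varepsilon^2 \bra{\phi}I\ket{\phi} = 4\varepsilon^2,
\]
giving the claim for pure states; the mixed-state case follows by convexity. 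There is no real obstacle here: the only place where one has to be slightly careful is the Step~1 rejection, but this is handled automatically by the $\Delta_0$ factor in the operator $A$, so the bound $4\varepsilon^2$ genuinely lower-bounds the total rejection probability over both measurement events.
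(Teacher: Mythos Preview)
Your proof is correct. Both your argument and the paper's reduce to the fact that the rejection probability equals $\bra{\phi}(I-2M)^2\ket{\phi}$ and then invoke the eigenvalue gap, so the core idea is the same. The packaging differs, however: the paper first restricts to $\calX_0$, expands $\ket{\psi}$ in an eigenbasis of $M$, and uses the two-projector geometric analysis developed earlier in the section (with a separate case for eigenvalue~$0$) to show that $\Delta_0 \conjugate{U}(-\Pi_0+\Pi_1)U$ acts as multiplication by $1-2\lambda_j$ on each eigenstate. Your route bypasses that eigenvector-by-eigenvector computation entirely via the operator identity $\conjugate{A}A = 4(M-M^2)$, which is cleaner and self-contained (no reliance on the preceding Marriott--Watrous-style setup, no case split), and it also handles inputs outside $\calX_0$ automatically rather than by a preliminary without-loss-of-generality reduction. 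The paper's approach, on the other hand, makes the underlying two-dimensional rotation structure more visible, which is reused elsewhere in the paper.
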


\begin{proof}
Let $\ket{\psi}$ be any state received in $\sfQ$ in Step~1.
Without loss of generality,
one can assume that $\ket{\psi}$ is in $\calX_0$
(as otherwise either rejected in Step~1 or projected onto a state in $\calX_0$).

For the Hilbert space~$\calH$,
there always exists an orthonormal basis such that
all the basis states of it are eigenstates of $M$,
and thus,
the state~$\ket{\psi}$ can be necessarily written as
${
\ket{\psi}
=
\sum_{j=1}^d \alpha_j \ket{\phi_j}
}$
for ${d = \dim \calX_0 \leq \dim \calH}$,
where each $\ket{\phi_j}$ is an eigenstate of $M$ in $\calX_0$
and ${\sum_{j=1}^d \abs{\alpha_j}^2 = 1}$.

From the analysis above,
every eigenstate~$\ket{\phi_j}$ of $M$ in $\calX_0$
with corresponding eigenvalue~${\lambda_j > 0}$
must satisfy that
\[
\Delta_0 \conjugate{U} (- \Pi_0 + \Pi_1) U \ket{\phi_j}
=
(1 - 2 \lambda_j) \ket{\phi_j}.
\]
On the other hand,
for every eigenstate~$\ket{\phi_j}$ of $M$ in $\calX_0$ with corresponding eigenvalue~${\lambda_j = 0}$,
it holds that
${
\norm{\Pi_0 U \ket{\phi_j}}
=
\norm{\Pi_0 U \Delta_0 \ket{\phi_j}}
=
\sqrt{\bra{\phi_j} M \ket{\phi_j}}
=
0
}$.
This implies 
${\Pi_1 U \ket{\phi_j} = U \ket{\phi_j}}$,
and thus,
\[
\Delta_0 \conjugate{U} (- \Pi_0 + \Pi_1) U \ket{\phi_j}
=
\Delta_0 \ket{\phi_j}
=
\ket{\phi_j}
=
(1 - 2 \lambda_j) \ket{\phi_j}.
\]
Therefore,
\[
\Delta_0 \conjugate{U} (- \Pi_0 + \Pi_1) U \ket{\psi}
=
\sum_{j=1}^d \alpha_j (1 - 2 \lambda_j) \ket{\phi_j},
\]
and thus,
the probability of rejection is at least
${
\sum_{j=1}^d \abs{\alpha_j}^2 (1 - 2 \lambda_j)^2
\geq
4 \varepsilon^2 \sum_{j=1}^d \abs{\alpha_j}^2
=
4 \varepsilon^2
}$,
as claimed.
\end{proof}


\section{$\boldsymbol{\QMA \subseteq \EPRQMA{\const}_1 \subseteq \QIP_1(2)}$}
\label{Section: QMA is in QIP_1(2)}

The goal of this section is to prove Theorem~\ref{Theorem: QMA is in 2^k-EPR-QMA(1,s)}.
In Subsection \ref{Subsection: Building Blocks} we first describe building blocks, before 
presenting the proof in Subsection \ref{Subsection: Proof of Theorem QMA is in 2^k-EPR-QMA(1,s)}.


\subsection{Building Blocks}
\label{Subsection: Building Blocks}


\subsubsection{Encoding Accepting Probability in Phase}
\label{Subsection: Encoding Accepting Probability in Phase}

Let $V$ be the verifier of a certain QMA system.
Consider the quantum circuit~$V_x$ of $V$ when the input is $x$,
which acts over a pair of two registers~$\sfA$ of ${v(\abs{x})}$~qubits and $\sfM$ of ${m(\abs{x})}$~qubits,
for some polynomially bounded functions~$\function{v,m}{\Nonnegative}{\Natural}$.
The circuit~$V_x$ expects to receive a quantum witness of ${m(\abs{x})}$~qubits in register~$\sfM$,
and uses the ${v(\abs{x})}$~qubits in $\sfA$ as its work qubits.
The Hilbert spaces associated with $\sfA$ and $\sfM$
are denoted by $\calA$ and $\calM$, respectively.

For an input~$x$, let $p_x$ be the maximum acceptance probability of the verifier~$V$ in this QMA system.
Then, as pointed out by Marriott~and~Watrous~\cite{MarWat05CC},
$p_x$ corresponds to the maximum eigenvalue of the Hermitian operator
\[
M_x = \Pi_\init \conjugate{V_x} \Pi_\acc V_x \Pi_\init,
\]
where $\Pi_\init$ is the projection onto the subspace spanned by states in which
all the qubits in $\sfA$ are in state~$\ket{0}$,
and $\Pi_\acc$ is that onto the subspace spanned by accepting states of this QMA system.
Let $\ket{w_x}$ be the eigenstate (i.e., eigenvector) of $M_x$
corresponding to the eigenvalue~$p_x$.
A crucial analysis of Ref.~\cite{MarWat05CC}
(which essentially follows from the arguments in Section~\ref{Section: Reflection Procedure})
is that
\begin{alignat*}{2}
&
\Pi_\init \conjugate{V_x} \Pi_\acc V_x (\ket{0}_\sfA \tensor \ket{w_x}_\sfM)
&
&
=
p_x \ket{0}_\sfA \tensor \ket{w_x}_\sfM,
\\
&
\Pi_\init \conjugate{V_x} \Pi_\rej V_x (\ket{0}_\sfA \tensor \ket{w_x}_\sfM)
&
&
=
(1 - p_x) \ket{0}_\sfA \tensor \ket{w_x}_\sfM,
\end{alignat*}
where ${\Pi_\rej = I_{\calA \tensor \calM} - \Pi_\acc}$
is the projection onto the subspace spanned by rejecting states of this QMA system.

Let ${p = p_x^2/(2p_x^2 - 2p_x + 1)}$.
Using the property explained above,
if one copy of $\ket{w_x}$ is given,
one can generate with high probability the state
\[
\ket{\chi_p}
=
\frac{1}{\sqrt{2p_x^2 - 2p_x + 1}}
\bigl[
  (1-p_x) \ket{0} + p_x \ket{1}
\bigr]
\]
as follows.
One uses a single-qubit register~$\sfR$ in addition to $\sfA$ and $\sfM$,
where one sets $\ket{w_x}$ in $\sfM$, and initializes all the qubits in $\sfA$ and $\sfR$
to state~$\ket{0}$.
First, one performs a forward simulation of the original system over $\sfA$ and $\sfM$
(i.e., applies $V_x$ to ${(\sfA, \sfM)}$),
and flips the qubit in $\sfR$ if the content of ${(\sfA, \sfM)}$
corresponds to an accepting state of the original system
(i.e., applies the unitary transformation
${X \tensor \Pi_\acc + I \tensor \Pi_\rej}$ to ${(\sfR, \sfA, \sfM)}$).
One then performs a backward simulation of the original system over $\sfA$ and $\sfM$
(i.e., applies $\conjugate{V_x}$ to ${(\sfA, \sfM)}$).
Now one measures all the qubits in $\sfA$ in the computational basis.
If no $\ket{1}$ is measured
(i.e., if the state is projected with respect to $\Pi_\init$,
which happens with probability ${2p_x^2 - 2p_x + 1}$),
the unnormalized state in the system must be
\[
\ket{0}_\sfR \tensor (1-p_x) \ket{0}_\sfA \tensor \ket{w_x}_\sfM
+
\ket{1}_\sfR \tensor p_x \ket{0}_\sfA \tensor \ket{w_x}_\sfM
=
\bigl[ (1-p_x) \ket{0} + p_x \ket{1} \bigr]_\sfR \tensor \ket{0}_\sfA \tensor \ket{w_x}_\sfM,
\]
and thus, the desired state is successfully generated in $\sfR$.
We call this procedure the \textsc{Distillation Procedure},
which is summarized in Figure~\ref{Figure: Distillation Procedure}.

\begin{figure}[t!]
\begin{algorithm*}{\textsc{Distillation Procedure}}
\begin{description}
\item[Input:]
  a single-qubit register~$\sfR$,
  a ${v(\abs{x})}$-qubit register~$\sfA$,
  and an ${m(\abs{x})}$-qubit register~$\sfM$.
\item[Output:]
  a single-qubit register~$\sfR$ or a symbol~$\bot$.
\end{description}
\begin{step}
\item
  Apply $V_x$ to ${(\sfA, \sfM)}$.
\item
  Flip the qubit in $\sfR$
  if the content of ${(\sfA, \sfM)}$ corresponds to an accepting state of the original system.
\item
  Apply $\conjugate{V_x}$ to ${(\sfA, \sfM)}$.
\item
  Measures all the qubits in $\sfA$ in the computational basis.
  If any of these measurements result in $\ket{1}$,
  output $\bot$,
  otherwise output $\sfR$.
\end{step}
\end{algorithm*}
\caption{The \textsc{Distillation Procedure}.}
\label{Figure: Distillation Procedure}
\end{figure}


\subsubsection{Multiplicatively Adjusting Accepting Probabilities}
\label{Subsection: Multiplicatively Adjusting Accepting Probabilities}

For a real number~${a \in [0,1]}$,
let $W_a$ be the unitary transformation defined by
\[
W_a
=
\begin{pmatrix}
\sqrt{1-a} & \sqrt{a}
\\
\sqrt{a} & - \sqrt{1-a}
\end{pmatrix}.
\]

Given a unitary transformation~$W_p$ for some real number ${p \in \bigl[\frac{1}{2}, 1 \bigr]}$,
we construct another unitary transformation~$U$
and an appropriate projection operator~$\Pi_0$
acting over two qubits
so that the probability~$\norm{\Pi_0 U \ket{00}}^2$ exactly equals $1/2$.

Suppose that one can apply another unitary transformation~$W_q$, for some real number ${q \in [0,1]}$,
and define the unitary transformation~$U$ and projection operator~$\Pi_0$ by
\[
U = W_p \tensor W_q,
\quad
\Pi_0 = \ketbra{11}.
\]
Then, clearly, ${\norm{\Pi_0 U \ket{00}}^2 = pq}$,
and thus, this probability equals $1/2$ if and only if ${pq = 1/2}$.
This in particular implies that
there exists a real number ${q \in [0,1]}$ that achieves the adjusted accepting probability exactly $1/2$ when ${p \geq 1/2}$,
but no ${q \in [0,1]}$ can make it exactly equal to $1/2$ when ${p < 1/2}$.


\subsubsection{Simulating Unitaries with Choi-Jamio{\l}kowski States}
\label{Subsection: Simulating Unitaries with Choi-Jamiolkowski States}

In this subsection,
we consider the case where the aforementioned unitary transformation~$W_a$ itself is not available,
but only the copies of its Choi-Jamio{\l}kowski state~${\ket{J(W_a)} = (I \tensor W_a) \ket{\Phi^+}}$ are available. 

Note that one copy of the Choi-Jamio{\l}kowski state~$\ket{J(W_a)}$
can be used to simulate one application of $W_a$
(the simulation succeeds with probability~$1/4$).
More precisely, the simulation of $W_a$ is done as follows.
Suppose one wants to apply $W_a$ to the qubit in some single-qubit register~$\sfR_1$,
while the state~$\ket{J(W_a)}$ is available in ${(\sfR_2, \sfR'_2)}$,
for some single-qubit registers~$\sfR_2$ and $\sfR'_2$.
Then one measures the state in ${(\sfR_1, \sfR_2)}$ in the Bell basis.
If this results in $\ket{\Phi^+}$, the application of $W_a$ succeeds,
and the desired state is available in the register~$\sfR'_2$
(which can be verified via an argument similar to the analysis of seminal quantum teleportation).

Actually, when one wants to apply $W_a$ to the specific state~$\ket{0}$,
there is a more efficient way than the simulation just explained above.
A key observation is that,
for any real number~${a \in [0,1]}$,
the unitary transformation~$W_a$ in the last subsection can be written as
\[
W_a
=
\begin{pmatrix}
\sqrt{1-a} & \sqrt{a}
\\
\sqrt{a} & - \sqrt{1-a}
\end{pmatrix}
=
\sqrt{1-a} Z + \sqrt{a} X,
\]
and thus,
the state~$\ket{\chi_a}$ is given by
\[
\ket{\chi_a} = W_a \ket{0} = \sqrt{1-a} \ket{0} + \sqrt{a} \ket{1},
\]
while the Choi-Jamio{\l}kowski state of $W_a$ is given by
\[
\ket{J(W_a)}
=
\sqrt{1-a} \ket{J(Z)} + \sqrt{a} \ket{J(X)}
=
\sqrt{1-a} \ket{\Phi^-} + \sqrt{a} \ket{\Psi^+}.
\]
Hence, given one copy of the Choi-Jamio{\l}kowski state~$\ket{J(W_a)}$,
one can easily generate the state~${\ket{\chi_a} = W_a \ket{0}}$ in the first qubit
by applying the following unitary transformation~$T$ to $\ket{J(W_a)}$:
\[
T \colon
\ket{\Phi^-} \mapsto \ket{00},
\quad
\ket{\Psi^-} \mapsto \ket{01},
\quad
\ket{\Psi^+} \mapsto \ket{10},
\quad
\ket{\Phi^+} \mapsto \ket{11}
\]
(note that this $T$ can be realized
by first applying the CNOT transformation using the first qubit as the control,
then applying the Hadamard transformation~$H$ and the NOT transformation~$X$
in this order to the first qubit,
and finally applying CNOT again using the first qubit as the control).


\subsubsection{Simulating the Reflection Procedure with Choi-Jamio{\l}kowski States}
\label{Subsection: Simulating the Reflection Procedure with Choi-Jamiolkowski States}

Now we consider simulating the \textsc{Reflection Procedure}
with given two copies of ${\ket{\chi_p} = W_p \ket{0}}$
and two copies of a Choi-Jamio{\l}kowski state~$\ket{J(W_q)}$,
where $p$ and $q$ are real numbers in ${[0,1]}$.
The procedure basically follows the \textsc{Reflection Procedure}
with taking the register~$\sfQ$ to be a two-qubit register,
the initial state~$\ket{\phi_0}$ to be $\ket{00}$,
the projection~$\Delta_0$ to be $\ketbra{00}$,
and the underlying unitary~$U$ and projection~$\Pi_0$
to be ${W_p \tensor W_q}$ and $\ketbra{11}$, as defined in Subsection~\ref{Subsection: Multiplicatively Adjusting Accepting Probabilities}.
Thus, to precisely perform the \textsc{Reflection Procedure} in Figure~\ref{Figure: Reflection Procedure} in this setting,
we need to apply each of ${W_p = \conjugate{W_p}}$ and ${W_q = \conjugate{W_q}}$
twice.
Fortunately, each of the first applications of $W_p$ and $W_q$ is to the $\ket{0}$ state,
and thus,
one may simply replace these applications by just using a given copy of $\ket{\chi_p}$ and generating $\ket{\chi_q}$ from a copy of $\ket{J(W_q)}$, respectively.
The second applications of these unitaries
can be probabilistically simulated by
using the Choi-Jamio{\l}kowski states~$\ket{J(W_p)}$ and $\ket{J(W_q)}$,
where one creates $\ket{J(W_p)}$ from a copy of $\ket{\chi_p}$.
This leads to the procedure called \textsc{Reflection Simulation Test}
described in Figure~\ref{Figure: Reflection Simulation Test}.

\begin{figure}[t!]
\begin{algorithm*}{\textsc{Reflection Simulation Test}}
\begin{description}
\item[Input:]
  single-qubit registers~$\sfR_1$, $\sfR_2$, $\sfS_1$, $\sfS'_1$, $\sfS_2$, and $\sfS'_2$.
\item[Output:]
  ``accept'' or ``reject''.
\end{description}
\begin{step}
\item
  Receive six single-qubit registers~$\sfR_1$, $\sfR_2$, $\sfS_1$, $\sfS'_1$, $\sfS_2$, and $\sfS'_2$.\\
  Apply the unitary transformation~$T$ to the state in ${(\sfS_1, \sfS'_1)}$.\\
  Prepare $\ket{0}$ in a single-qubit register $\sfR'_2$.
\item
  Perform a phase-flip (i.e., multiply $-1$ in phase)
  if ${(\sfR_1, \sfS_1)}$ contains $11$.
\item
  Try to simulate Step~3 of the \textsc{Reflection Procedure}
  by performing the following:\\
  Apply $\conjugate{T}$ to the state in ${(\sfR_2, \sfR'_2)}$. 
  Measure the states in ${(\sfR_1, \sfR_2)}$ and ${(\sfS_1, \sfS_2)}$ in the Bell basis.
  Continue if both of these two measurements result in $\ket{\Phi^+}$,
  and accept otherwise (accept with giving up due to failure of the simulation).
\item
  Reject if ${(\sfR'_2, \sfS'_2)}$ contains $00$,
  and accept otherwise.
\end{step}
\end{algorithm*}
\caption{The \textsc{Reflection Simulation Test}, which tries to simulate the \textsc{Reflection Procedure} using Choi-Jamio{\l}kowski states.}
\label{Figure: Reflection Simulation Test}
\end{figure}

Now we analyze the properties of this simulation.

\begin{proposition}
The \textsc{Reflection Simulation Test} accepts with certainty
if the state in the input register~${(\sfR_1, \sfR_2, \sfS_1, \sfS'_1, \sfS_2, \sfS'_2)}$ is 
${
\ket{\chi_p}^{\tensor 2} \tensor \ket{J(W_q)}^{\tensor 2}
}$
for some real numbers~${p, q \in [0,1]}$ satisfying ${pq = 1/2}$.
\label{Proposition: completeness of Reflection Simulation Test}
\end{proposition}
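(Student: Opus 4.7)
The plan is to verify that, on the honest product input, Steps~1--4 of the \textsc{Reflection Simulation Test} faithfully realize the \textsc{Reflection Procedure} of Figure~\ref{Figure: Reflection Procedure} specialized to the two-qubit setting of Subsection~\ref{Subsection: Multiplicatively Adjusting Accepting Probabilities}, with register~$\sfQ$ of size two qubits, initial state~$\ket{00}$, projections $\Delta_0 = \ketbra{00}$ and $\Pi_0 = \ketbra{11}$, and unitary $U = W_p \tensor W_q$. Since $pq = 1/2$, the Hermitian operator $M = \Delta_0 \conjugate{U} \Pi_0 U \Delta_0$ has $\ket{00}$ as an eigenstate with eigenvalue exactly~$1/2$, so Proposition~\ref{Proposition: completeness of Reflection Procedure} already guarantees perfect acceptance of the idealized \textsc{Reflection Procedure} on input~$\ket{00}$. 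The task is then to match the simulation to this idealized run step by step, and to handle the ``failure'' branch of Step~3 separately.

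First I would dispose of Steps~1 and~2. Using the expansion $\ket{J(W_q)} = \sqrt{1-q}\ket{\Phi^-} + \sqrt{q}\ket{\Psi^+}$ from Subsection~\ref{Subsection: Simulating Unitaries with Choi-Jamiolkowski States} together with the definition of $T$, a direct evaluation shows that applying $T$ to $(\sfS_1, \sfS'_1)$ produces $\ket{\chi_q}_{\sfS_1} \tensor \ket{0}_{\sfS'_1}$, so at the end of Step~1 the pair $(\sfR_1, \sfS_1)$ carries $\ket{\chi_p} \tensor \ket{\chi_q} = (W_p \tensor W_q)\ket{00}$, while $\sfR_2$ still holds $\ket{\chi_p}$, both $\sfR'_2$ and $\sfS'_1$ hold $\ket{0}$, and $(\sfS_2, \sfS'_2)$ still holds $\ket{J(W_q)}$. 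Step~2 then applies the phase-flip $-\Pi_0 + \Pi_1$ to $(\sfR_1, \sfS_1)$, reproducing Step~2 of the \textsc{Reflection Procedure} verbatim.

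The substantive step is Step~3, whose role is to implement $\conjugate{U} = W_p \tensor W_q$ on $(\sfR_1, \sfS_1)$ by gate teleportation through the Choi-Jamio{\l}kowski resources. The preparatory application of $\conjugate{T}$ to $(\sfR_2, \sfR'_2)$ inverts the Step~1 computation on $\ket{\chi_p} \tensor \ket{0}$ and restores $(\sfR_2, \sfR'_2)$ to $\ket{J(W_p)}$, so at the moment of the Bell measurements both $(\sfR_2, \sfR'_2)$ and $(\sfS_2, \sfS'_2)$ hold Choi-Jamio{\l}kowski states. By the teleportation identity recalled in Subsection~\ref{Subsection: Simulating Unitaries with Choi-Jamiolkowski States}, the joint outcome $\ket{\Phi^+} \tensor \ket{\Phi^+}$ deposits into $(\sfR'_2, \sfS'_2)$ precisely the state of $(\sfR_1, \sfS_1)$ with $W_p \tensor W_q$ applied. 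Conditional on this ``success'' branch, $(\sfR'_2, \sfS'_2)$ therefore carries
\[
\conjugate{U} (-\Pi_0 + \Pi_1) U \ket{00},
\]
which, by the analysis at the top of Section~\ref{Section: Reflection Procedure} specialized to $\lambda = pq = 1/2$, equals $-\ket{\xi_1}$ and is orthogonal to $\ket{00}$. Thus Step~4 never observes $\ket{00}$. In the ``failure'' branch of Step~3 the protocol accepts outright, so either branch leads to acceptance with certainty.

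The main bookkeeping hazard I anticipate is tracking the teleportation identities precisely, to confirm that the ``success'' outcome $\ket{\Phi^+} \tensor \ket{\Phi^+}$ implements $W_p \tensor W_q$ on $(\sfR_1, \sfS_1)$ exactly, rather than, say, its transpose or its inverse up to a Pauli correction. Because each $W_a$ is real, symmetric, and self-inverse, the distinctions $W_a = \conjugate{W_a} = \transpose{W_a} = W_a^{-1}$ all collapse and the substitution into Proposition~\ref{Proposition: completeness of Reflection Procedure} is legitimate, but this collapse should be spelled out to make the identification of the simulated transformation with the $\conjugate{U}$ appearing in the \textsc{Reflection Procedure} fully rigorous.
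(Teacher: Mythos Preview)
Your proposal is correct and follows essentially the same approach as the paper's proof: identify Steps~1--4 with the \textsc{Reflection Procedure} for $U = W_p \tensor W_q$, $\Delta_0 = \ketbra{00}$, $\Pi_0 = \ketbra{11}$, verify that $T$ and $\conjugate{T}$ produce $\ket{\chi_q}$ and $\ket{J(W_p)}$ respectively, note that the Bell outcome $\ket{\Phi^+}$ teleports $W_p$ and $W_q$ onto $(\sfR_1,\sfS_1)$, and conclude via the eigenvalue-$1/2$ analysis that Step~4 never rejects while the failure branch accepts by fiat. Your added remark that $W_a = \conjugate{W_a} = \transpose{W_a}$ collapses any ambiguity in the teleportation convention is a welcome clarification the paper leaves implicit.
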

 
\begin{proof}
The claim is almost obvious.
With $\ket{\chi_p}$ in $\sfR_1$ and $\ket{J(W_q)}$ in ${(\sfS_1, \sfS'_1)}$ for such $p$ and $q$,
Step~1 in the \textsc{Reflection Simulation Test}
creates the state
\[
U \ket{00}
=
\bigl( \sqrt{1-p} \ket{0} + \sqrt{p} \ket{1} \bigr)_{\sfR_1}
\tensor
\bigl( \sqrt{1-q} \ket{0} + \sqrt{q} \ket{1} \bigr)_{\sfS_1}
\]
in ${(\sfR_1, \sfS_1)}$,
since the application of $T$ generates the state~$\ket{\chi_q}$ in $\sfS_1$.
As the application of $\conjugate{T}$ in Step~3 generates
the Choi-Jamio{\l}kowski state~$\ket{J(W_p)}$ in ${(\sfR_2, \sfR'_2)}$,
one succeeds in Step~3 with probability ${(1/4)^2 = 1/16}$ in applying both of ${\conjugate{W_p} = W_p}$ and ${\conjugate{W_q} = W_q}$,
which successfully simulates $\conjugate{U}$
with generating the desired state in ${(\sfR'_2, \sfS'_2)}$.
Hence, the simulation of the \textsc{Reflection Procedure} succeeds with probability~$1/16$,
in which case the test necessarily results in acceptance
as in the analysis in Section~\ref{Section: Reflection Procedure},
since
${
(\ketbra{00} \conjugate{U} \Pi_0 U \ketbra{00}) \ket{00}
= \norm{\Pi_0 U \ket{00}}^2 \ket{00}
= \frac{1}{2} \ket{00}
}$.
On the other hand, if any of measurements in Step~3 fails in measuring $\ket{\Phi^+}$, 
the test just stops and accepts with giving up.
Therefore, the test must result in acceptance with certainty.
\end{proof}

\begin{proposition}
For any real number~${q \in [0,1]}$,
the \textsc{Reflection Simulation Test} results in rejection with probability~$1/16$
if the state in the input register~${(\sfR_1, \sfR_2, \sfS_1, \sfS'_1, \sfS_2, \sfS'_2)}$ is either
${
\ket{0}^{\tensor 2} \tensor \ket{J(W^+_q)}^{\tensor 2}
}$
or
${
\ket{0}^{\tensor 2} \tensor \ket{J(W^-_q)}^{\tensor 2}
}$,
where ${W^+_q = W_q}$ and
\[
W^-_q
=
Z W_q Z
=
\begin{pmatrix}
  \sqrt{1-q} & - \sqrt{q}\\
  - \sqrt{q} & - \sqrt{1-q}
\end{pmatrix}
=
\sqrt{1-q} Z - \sqrt{q} X.
\]
\label{Proposition: soundness of Reflection Simulation Test}
\end{proposition}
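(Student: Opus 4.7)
The plan is to track the seven-register state on $(\sfR_1,\sfR_2,\sfR'_2,\sfS_1,\sfS'_1,\sfS_2,\sfS'_2)$ through Steps~1--4 of the \textsc{Reflection Simulation Test} and to compute the amplitude of the unique trajectory that produces rejection: both Bell-basis measurements in Step~3 yield $\ket{\Phi^+}$, and the standard-basis check in Step~4 sees $\ket{00}$ in $(\sfR'_2,\sfS'_2)$. Using $\ket{J(W^\pm_q)} = \sqrt{1-q}\ket{\Phi^-}\pm\sqrt{q}\ket{\Psi^+}$ and the action of $T$ recalled in Subsection~\ref{Subsection: Simulating Unitaries with Choi-Jamiolkowski States}, the map $T$ sends $\ket{J(W^\pm_q)}_{(\sfS_1,\sfS'_1)}$ to $\ket{\chi^\pm_q}_{\sfS_1}\ket{0}_{\sfS'_1}$, where $\ket{\chi^\pm_q} = \sqrt{1-q}\ket{0}\pm\sqrt{q}\ket{1}$, so after Step~1 the joint state is $\ket{0}_{\sfR_1}\ket{0}_{\sfR_2}\ket{0}_{\sfR'_2}\ket{\chi^\pm_q}_{\sfS_1}\ket{0}_{\sfS'_1}\ket{J(W^\pm_q)}_{(\sfS_2,\sfS'_2)}$. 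Step~2 is a no-op because $\sfR_1$ is $\ket{0}$ while the phase flip is controlled on $\ket{11}_{(\sfR_1,\sfS_1)}$, and Step~3 begins by replacing $\ket{00}_{(\sfR_2,\sfR'_2)}$ with $\conjugate{T}\ket{00} = \ket{\Phi^-}_{(\sfR_2,\sfR'_2)}$.

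The two Bell measurements in Step~3 act on disjoint register pairs, so the joint rejection amplitude factors. On the $\sfR$-side, projecting $\ket{0}_{\sfR_1}\ket{\Phi^-}_{(\sfR_2,\sfR'_2)}$ onto $\ket{\Phi^+}_{(\sfR_1,\sfR_2)}\tensor\ket{0}_{\sfR'_2}$ gives amplitude $1/2$ by direct computation. On the $\sfS$-side, first projecting $\sfS'_2$ onto $\ket{0}$ reduces $\ket{J(W^\pm_q)}_{(\sfS_2,\sfS'_2)}$ to $\frac{1}{\sqrt{2}}\ket{\chi^\pm_q}_{\sfS_2}$ (since $\bra{0}W^\pm_q\ket{b}$ equals the $b$-th coefficient of $\ket{\chi^\pm_q}$), after which projecting $(\sfS_1,\sfS_2)$ onto $\ket{\Phi^+}$ from $\frac{1}{\sqrt{2}}\ket{\chi^\pm_q}^{\tensor 2}$ yields $\frac{1}{\sqrt{2}}\cdot\frac{1}{\sqrt{2}}\bigl((1-q)+q\bigr) = \frac{1}{2}$, because the cross terms $\pm\sqrt{q(1-q)}(\ket{01}+\ket{10})$ in $\ket{\chi^\pm_q}^{\tensor 2}$ are orthogonal to $\ket{\Phi^+}$ and only the diagonal part $(1-q)\ket{00}+q\ket{11}$ contributes. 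Multiplying the two factors, the rejection amplitude is $1/4$, and the rejection probability is exactly $1/16$.

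There is no real conceptual obstacle; only careful bookkeeping is required. The reason the answer is independent of $q$ and of the $\pm$ sign is that the $q$- and sign-dependent off-diagonal terms of $\ket{\chi^\pm_q}^{\tensor 2}$ lie in the span of $\ket{01}$ and $\ket{10}$ and are therefore annihilated by the $\ket{\Phi^+}$ projection, leaving only the sum $(1-q)+q = 1$. In effect, any input of the form $\ket{0}^{\tensor 2}\tensor\ket{J(W^\pm_q)}^{\tensor 2}$ feeds the simulated \textsc{Reflection Procedure} with a test qubit equal to $\ket{\chi_0}=\ket{0}$ (the ``$p=0$'' case, for which the underlying Hermitian operator has eigenvalue $0$ on the relevant initial state), so that conditioned on both teleportations succeeding, which jointly occurs with probability $1/16$, the $\ket{00}$ rejection condition is triggered with certainty.
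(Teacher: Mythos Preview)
Your proof is correct and follows essentially the same approach as the paper's: track the state through Steps~1--4, note that Step~2 is a no-op because $\sfR_1$ holds $\ket{0}$, and then show that conditioned on both Bell measurements yielding $\ket{\Phi^+}$ (probability~$1/16$) the residual state in $(\sfR'_2,\sfS'_2)$ is exactly $\ket{00}$. The only difference is presentational: the paper phrases Step~3 via the teleportation/Choi--Jamio{\l}kowski interpretation (the measurements succeed in applying $\conjugate{W_0}$ and $\conjugate{W^\pm_q}$, each with probability~$1/4$, and $W^\pm_q\ket{\chi^\pm_q}=\ket{0}$ since $(W^\pm_q)^2=I$), whereas you unpack this into a direct amplitude calculation; as a bonus, your formulation handles the $W^+_q$ and $W^-_q$ cases uniformly, while the paper treats only $W^+_q$ and defers the other case to a symmetry remark.
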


\begin{proof}
We prove the case where the state in ${(\sfR_1, \sfR_2, \sfS_1, \sfS'_1, \sfS_2, \sfS'_2)}$ is
${
\ket{0}^{\tensor 2} \tensor \ket{J(W^+_q)}^{\tensor 2}
}$.
The other case is proved similarly,
by noticing that
${
  T \ket{J(W^-_q)} = (W^-_q \ket{0}) \tensor \ket{0}
}$
and
${\conjugate{{W^-_q}} = W^-_q}$
hold for any ${q \in [0,1]}$.

With $\ket{0}$ in $\sfR_1$ and ${\ket{J(W^+_q)} = \ket{J(W_q)}}$ in ${(\sfS_1, \sfS'_1)}$,
Step~1 in the \textsc{Reflection Simulation Test}
creates the state
\[
\ket{0}_{\sfR_1} \tensor \ket{\chi_q}_{\sfS_1}
\]
in ${(\sfR_1, \sfS_1)}$.
For this state given, Step~2 in the \textsc{Reflection Simulation Test}
does not change the state in ${(\sfR_1, \sfS_1)}$ at all.
As ${\ket{0} = \ket{\chi_0}}$,
the application of $\conjugate{T}$ in Step~3 generates
the Choi-Jamio{\l}kowski state~$\ket{J(W_0)}$ in ${(\sfR_2, \sfR'_2)}$,
and thus,
one succeeds in Step~3 with probability ${(1/4)^2 = 1/16}$ in applying both of ${\conjugate{W_0} = W_0}$ and ${\conjugate{W_q} = W_q}$.
If such an event occurs,
the state in ${(\sfR'_2, \sfS'_2)}$ becomes ${\ket{0}_{\sfR'_2} \tensor \ket{0}_{\sfS'_2}}$,
and thus, the test results in rejection with certainty.

Taking it into account that the test just stops and accepts with giving up
when any of measurements in Step~3 fails in measuring $\ket{\Phi^+}$,
the test results in rejection with probability~$1/16$ in total.
\end{proof}


\subsection{Proof of Theorem~\ref{Theorem: QMA is in 2^k-EPR-QMA(1,s)}}
\label{Subsection: Proof of Theorem QMA is in 2^k-EPR-QMA(1,s)}

Now we are ready to prove Theorem~\ref{Theorem: QMA is in 2^k-EPR-QMA(1,s)}.

\begin{proof}[Proof of Theorem~\ref{Theorem: QMA is in 2^k-EPR-QMA(1,s)}]
Let ${A = (A_\yes, A_\no)}$ be in $\QMA$
and let $V$ be the verifier of the corresponding QMA system.
Without loss of generality,
one can assume that both completeness and soundness errors
are exponentially small in this QMA system.

For an input~$x$,
the quantum circuit~$V_x$ of the verifier~$V$
acts over a pair of two registers~$\sfA$ of ${v(\abs{x})}$~qubits and $\sfM$ of ${m(\abs{x})}$~qubits,
for some polynomially bounded functions~$\function{v,m}{\Nonnegative}{\Natural}$.
This can be interpreted as
$V_x$ expecting to receive a quantum witness~$\ket{w}$ of ${m(\abs{x})}$~qubits in register~$\sfM$,
and using the ${v(\abs{x})}$~qubits in $\sfA$ as its work qubits.
By Refs.~\cite{Shi02QIC,Aha03arXiv},
one can further assume that the quantum circuit~$V_x$ for any input~$x$
consists of only the Hadamard, Toffoli, and NOT gates.
As pointed out by Marriott~and~Watrous~\cite{MarWat05CC},
the maximum acceptance probability~$p_x$ of $V$ with input~$x$
corresponds to the maximum eigenvalue of the Hermitian operator
\[
M_x = \Pi_\init \conjugate{V_x} \Pi_\acc V_x \Pi_\init,
\]
where $\Pi_\init$ is the projection onto the subspace spanned by states in which
all the qubits in $\sfA$ are in state~$\ket{0}$,
and $\Pi_\acc$ is the projection onto the space spanned by the accepting states of $V$.
From this verifier~$V$,
we shall construct a protocol for the verifier~$W$ of another QMA system
in which $W$ shares $N$~EPR pairs a priori with a prover communicating with,
where $N$ is a constant that is a power of two.

Our basic strategy is to try to perform the \textsc{Reflection Simulation Test}
using $V_x$. 
Fix an input~$x$,
and let 
${p = \frac{p_x^2}{2p_x^2 - 2 p_x + 1}}$.
Let ${\sfS_1, \ldots, \sfS_N}$ be single-qubit registers
which store the particles of the shared EPR pairs.
In addition to $\sfM$,
$W$ receives $N$~single-qubit registers~${\sfS'_1, \ldots, \sfS'_N}$.
$W$ expects to receive in $\sfM$ the state~$\ket{w_x}$
that is the eigenstate (i.e., eigenvector) of $M_x$ corresponding to the eigenvalue~$p_x$,
and to receive states in ${\sfS'_1, \ldots, \sfS'_N}$
such that the state in ${(\sfS_j, \sfS'_j)}$ forms $\ket{J(W_q)}$ for each ${j \in \{1, \ldots, N\}}$,
for $q$ satisfying 
${pq = \frac{p_x^2}{2p_x^2 - 2 p_x + 1} q = 1/2}$.
In addition to $\sfA$,
$W$ prepares three single-qubit registers~$\sfB$, $\sfR_1$, and $\sfR_2$.
All the qubits in $\sfA$, $\sfB$, $\sfR_1$, and $\sfR_2$
are initialized to the $\ket{0}$ state.

First, $W$ performs the \textsc{Distillation Procedure} twice in sequence,
first with ${(\sfR_1, \sfA, \sfM)}$ as input,
and second with ${(\sfR_2, \sfA, \sfM)}$ as input.
If any of these two runs of the \textsc{Distillation Procedure} outputs a symbol~$\bot$,
the simulation fails, and thus accept with giving up.
If not failed, then $W$ chooses two indices~$r_1$ and $r_2$ from the set~$\{1, \ldots, N\}$ uniformly at random.
If ${r_2 = 1}$, $W$ accepts with giving up.
Otherwise $W$ swaps the registers~${(\sfS_1, \sfS'_1)}$ and ${(\sfS_{r_1}, \sfS'_{r_1})}$ if ${r_1 \geq 2}$,
and further swaps ${(\sfS_2, \sfS'_2)}$ and ${(\sfS_{r_2}, \sfS'_{r_2})}$ if ${r_2 \geq 3}$.
Afterwards, $W$ never touches the registers~${(\sfS_j, \sfS'_j)}$ for ${j \geq 3}$,
and thus this process essentially has the same effect
as performing a random permutation over the registers~${(\sfS_1, \sfS'_1), \ldots, (\sfS_N, \sfS'_N)}$.
$W$ then performs the \textsc{Space Restriction Test}
by checking if the state in ${(\sfS_j, \sfS'_j)}$ is in the space spanned by $\{ \ket{\Phi^-}, \ket{\Psi^+}\}$, for each ${j \in \{1,2\}}$,
and further performs the \textsc{Swap Test} between ${(\sfS_1, \sfS'_1)}$ and ${(\sfS_2, \sfS'_2)}$
(using the register~$\sfB$ as the control).
Finally, $W$ performs the \textsc{Reflection Simulation Test}
with ${(\sfR_1, \sfR_2, \sfS_1, \sfS'_1, \sfS_2, \sfS'_2)}$ as input.
The protocol is summarized in Figure~\ref{Figure: verifier's EPR-QMA protocol for achieving perfect completeness}.
Notice that this protocol is exactly implementable when the Hadamard and any classical reversible transformations can be performed exactly.

\begin{figure}[t!]
\begin{algorithm*}{Verifier's QMA Protocol for Achieving Perfect Completeness with $\boldsymbol{N}$~Prior-Shared EPR Pairs}
\begin{step}
\item
  Store the particles of the shared $N$~EPR pairs in ${(\sfS_1, \ldots, \sfS_N)}$. 
  Receive an ${\bigl( m(\abs{x}) + N \bigr)}$-qubit quantum witness
  in ${(\sfM, \sfS'_1, \ldots, \sfS'_N)}$,
  where the first ${m(\abs{x})}$ qubits of the witness are in $\sfM$,
  and the ${\bigl( m(\abs{x}) + j \bigr)}$-th qubit of the witness is in $\sfS_j$, for ${j \in \{1, \ldots, N \}}$.\\
  Prepare $\ket{0}$ in each of the three single-qubit registers~$\sfB$, $\sfR_1$ and $\sfR_2$,
  and $\ket{0}^{\tensor v(\abs{x})}$ in a ${v(\abs{x})}$-qubit register~$\sfA$, which corresponds to the private space of the original verifier.
\item
  Execute the \textsc{Distillation Procedure} with ${(\sfR_1, \sfA, \sfM)}$ as input.
  Accept if this outputs $\bot$, and continue otherwise.
  Execute the \textsc{Distillation Procedure} again, this time using ${(\sfR_2, \sfA, \sfM)}$ as input.
  Accept if this outputs $\bot$, and continue otherwise.
\item
  Choose two integers~$r_1$ and $r_2$ from ${\{1, \ldots, N \}}$ uniformly at random.
  Accept if ${r_2 = 1}$ (accept with giving up due to failure of simulation),
  and continue otherwise.
  Swap the registers~${(\sfS_1, \sfS'_1)}$ and ${(\sfS_{r_1}, \sfS'_{r_1})}$
  if ${r_1 \geq 2}$,
  and further swap the registers~${(\sfS_2, \sfS'_2)}$ and ${(\sfS_{r_2}, \sfS'_{r_2})}$
  if ${r_2 \geq 3}$.
\item
  Perform the \textsc{Space Restriction Test}
  to check if the state in ${(\sfS_j, \sfS'_j)}$ is in the space spanned by $\{ \ket{\Phi^-}, \ket{\Psi^+}\}$, for each ${j \in \{1,2\}}$.
  Reject if not so, and continue otherwise.\\
  That is, perform the following for each ${j \in \{1,2\}}$:
  Apply the unitary transformation~$T$ defined by
  \[
  T \colon
  \ket{\Phi^-} \! \mapsto \! \ket{00},
  \ket{\Psi^-} \! \mapsto \! \ket{01},
  \ket{\Psi^+} \! \mapsto \! \ket{10},
  \ket{\Phi^+} \! \mapsto \! \ket{11}
  \]
  to the state in ${(\sfS_j, \sfS'_j)}$.
  Reject if $\sfS'_j$ contains $1$, and apply $\conjugate{T}$ to the state in ${(\sfS_j, \sfS'_j)}$ to continue otherwise.
\item
  Perform the \textsc{Swap Test} between ${(\sfS_1, \sfS'_1)}$ and ${(\sfS_2, \sfS'_2)}$.
  Reject if it fails, and continue otherwise.\\
  That is,
  apply $H$ to $\sfB$,
  swap ${(\sfS_1, \sfS'_1)}$ and ${(\sfS_2, \sfS'_2)}$ if $\sfB$ contains $1$,
  apply $H$ to $\sfB$ again,
  and reject if $\sfB$ contains $1$, and continue otherwise.
\item
  Perform the \textsc{Reflection Simulation Test}
  with ${(\sfR_1, \sfR_2, \sfS_1, \sfS'_1, \sfS_2, \sfS'_2)}$ as input.
  Accept if this returns ``accept'', and reject otherwise.
\end{step}
\end{algorithm*}
\caption{Verifier's QMA protocol for achieving perfect completeness with $\boldsymbol{N}$ pre-shared EPR pairs.}
\label{Figure: verifier's EPR-QMA protocol for achieving perfect completeness}
\end{figure}

For the completeness, suppose that $x$ is in $A_\yes$.
Let ${p = \frac{p_x^2}{2p_x^2 - 2 p_x + 1}}$.
The honest Merlin sets his shares of the $N$~EPR pairs in single-qubit registers~${\sfS'_1, \ldots, \sfS'_N}$,
and applies $W_q$ to each qubit in ${(\sfS'_1, \ldots, \sfS'_N)}$
to create the state~$\ket{J(W_q)}$ in ${(\sfS_j, \sfS'_j)}$, for ${j \in \{1, \ldots, N\}}$,
where $q$ satisfies ${pq = 1/2}$
(such a $q$ always exists when ${p_x \geq 1/2}$,
which is ensured by the completeness condition of the original QMA system).
He also prepares $\ket{w_x}$ in $\sfM$,
and sends the ${\bigl( m(\abs{x}) + N \bigr)}$-qubit state in ${(\sfM, \sfS'_1, \ldots, \sfS'_N)}$ as a witness.
Then, conditioned on the first application of the \textsc{Distillation Procedure} not outputting $\bot$,
the state~${\ket{\chi_p} = W_p \ket{0}}$ is generated in $\sfR_1$,
and ${\ket{0}^{\tensor v(\abs{x})} \tensor \ket{w_x}}$ is left in ${(\sfA, \sfM)}$,
and thus,
the state~$\ket{\chi_p}$ is generated also in $\sfR_2$
when the second application of the \textsc{Distillation Procedure} does not output $\bot$.
Conditioned on the chosen $r_2$ not being $1$ in Step~3,
the protocol continues and the state remains the same after this step.
When continued,
the \textsc{Space Restriction Test} in Step~4 clearly never rejects and does not change the state at all,
as the state in ${(\sfS_j, \sfS'_j)}$ is
${\ket{J(W_q)} = \sqrt{1-q} \ket{\Phi^-} + \sqrt{q} \ket{\Psi^+}}$
for each ${j \in \{1, 2\}}$.
Furthermore,
the \textsc{Swap Test} never fails in Step~5 and it does not change the state at all
(and thus, the protocol never results in rejection in this step).
Therefore, the state in ${(\sfR_1, \sfR_2, \sfS_1, \sfS'_1, \sfS_2, \sfS'_2)}$
is ${\ket{\chi_p}^{\tensor 2} \tensor \ket{J(W_q)}^{\tensor 2}}$,
when entering Step~6.
Hence, from Proposition~\ref{Proposition: completeness of Reflection Simulation Test},
the \textsc{Reflection Simulation Test} results in acceptance with certainty,
when the protocol reaches Step~6.
As rejections can happen only in Steps~4,~5,~and~6, this proves the perfect completeness.

Now for the soundness, suppose that $x$ is in $A_\no$.
Let $\calR_j$, $\calS_j$, and $\calS'_j$ denote the Hilbert spaces
associated with the quantum registers~$\sfR_j$, $\sfS_j$, and $\sfS'_j$, for each $j$, respectively.

As the soundness error of the original QMA system is exponentially small,
whatever state the register $\sfM$ contains,
the probability that the first application of the \textsc{Distillation Procedure} outputs $\bot$ is exponentially small.
Moreover, conditioned on this not outputting $\bot$,
the state generated in $\sfR_1$ is exponentially close to $\ket{0}$ (in trace distance).
Similarly, whatever state left in $\sfM$ after the first application of the \textsc{Distillation Procedure},
the probability that the second application of the \textsc{Distillation Procedure} outputs $\bot$ is exponentially small,
and the state generated in $\sfR_2$ is exponentially close to $\ket{0}$.
Hence, the state in ${(\sfR_1, \sfR_2, \sfS_1, \sfS'_1, \ldots, \sfS_N, \sfS'_N)}$
when entering Step~2 must be exponentially close to
${ (\ketbra{0})^{\tensor 2} \tensor \rho }$
for some $2N$-qubit state~$\rho$
such that the reduced state ${\tr_{\calS'_1 \tensor \cdots \tensor \calS'_N} \rho}$
is equal to the $N$-qubit totally mixed state~${(I/2)^{\tensor N}}$.

As Step~3 essentially has the same effect
as performing a random permutation over the registers~${(\sfS_1, \sfS'_1), \ldots, (\sfS_N, \sfS'_N)}$
for the purpose of computing the reduced state in ${(\sfS_1, \sfS'_1, \sfS_2, \sfS'_2)}$,
from the finite quantum de Finetti theorem (Theorem~\ref{Theorem: quantum de Finetti}),
the state in ${(\sfR_1, \sfR_2, \sfS_1, \sfS'_1, \sfS_2, \sfS'_2)}$ after Step~3
should have trace distance at most $\frac{2^6}{N}$
to the state
\[
\sigma
= (\ketbra{0})^{\tensor 2} \tensor \Bigl( \sum_j \mu_j \xi_j^{\tensor 2} \Bigr)
\]
for some two-qubit states~$\xi_j$,
where ${\sum_j \mu_j = 1}$,
if the state in ${(\sfR_1, \sfR_2, \sfS_1, \sfS'_1, \ldots, \sfS_N, \sfS'_N)}$
were ${ (\ketbra{0})^{\tensor 2} \tensor \rho }$ when entering Step~3
and if ${r_2 \neq 1}$
(here we are taking the randomness over the choices of $r_1$ and $r_2$ into account).
By letting ${\tau = \sum_j \mu_j \xi_j^{\tensor 2}}$,
this in particular implies that for the reduced state~${\tr_{\calS'_1 \tensor \calS'_2} \tau}$ and the two-qubit totally mixed state~${(I/2)^{\tensor 2}}$,
\[
D \biggl( \tr_{\calS'_1 \tensor \calS'_2} \tau, \Bigl( \frac{I}{2} \Bigr)^{\tensor 2} \biggr) \leq \frac{2^6}{N}
\]
holds,
since ${\tr_{\calS'_1 \tensor \cdots \tensor \calS'_N} \rho = (I/2)^{\tensor N}}$.
Taking it into account that
the protocol enters Step~3 with probability exponentially close to $1$
with the state in ${(\sfR_1, \sfR_2, \sfS_1, \sfS'_1, \ldots, \sfS_N, \sfS'_N)}$
being exponentially close to ${ (\ketbra{0})^{\tensor 2} \tensor \rho }$ in trace distance,
we conclude that the protocol enters Step~4 with probability exponentially close to ${1 - \frac{1}{N}}$
with the state in ${(\sfR_1, \sfR_2, \sfS_1, \sfS'_1, \sfS_2, \sfS'_2)}$
having trace distance at most ${\frac{2^6}{N} + \varepsilon}$ to $\sigma$
for some exponentially small $\varepsilon$.

Now from Proposition~\ref{Proposition: soundness with sigma_1}
which will be found below and proved in the end of this section,
the protocol should result in rejection with probability at least
${
  \min \bigl\{
         \frac{2^7}{N},
         \frac{1}{16} - 15 \bigl( \frac{2^6}{N} \bigr)^{\frac{1}{8}}
       \bigr\}
}$
if the state in ${(\sfR_1, \sfR_2, \sfS_1, \sfS'_1, \sfS_2, \sfS'_2)}$ were $\sigma$
when entering Step~4.
Hence, using Lemma~\ref{Lemma: trace distance and probability},
the protocol results in rejection with probability at least
${
  \min \bigl\{
         \frac{2^6}{N} - \varepsilon,
         \frac{1}{16} - \frac{2^6}{N} - \varepsilon - 15 \bigl( \frac{2^6}{N} \bigr)^{\frac{1}{8}}
       \bigr\}
}$,
when entering Step~4.
As the protocol enters Step~4 with probability exponentially close to ${1 - \frac{1}{N}}$,
by taking ${N = 2^{70}}$,
the protocol results in rejection with probability at least
\[
\Bigl( 1 - \frac{1}{2^{69}} \Bigr)
\cdot
\min \Bigl\{
       \frac{1}{2^{65}},
       \frac{1}{16} - \frac{1}{2^{63}} - \frac{15}{2^8}
     \Bigr\}
\geq
\frac{1}{2^{66}}.
\]

This proves the inclusion 
\[
\QMA \subseteq \EPRQMA{2^{70}} \Bigl( 1, 1 - \frac{1}{2^{66}} \Bigr).
\]
Now for any constant~${s \in (0,1)}$,
one can achieve soundness~$s$ simply by repeating this proof system $t$~times in parallel
for some appropriate constant~$t$,
as the system is a special case of two-message quantum interactive proof systems,
for which parallel repetition works perfectly~\cite{KitWat00STOC}.
This completes the proof.
\end{proof}

Finally, we prove the following proposition.

\begin{proposition}
When entering Step~4 of the protocol described in Figure~\ref{Figure: verifier's EPR-QMA protocol for achieving perfect completeness},
suppose that the state in ${(\sfR_1, \sfR_2, \sfS_1, \sfS'_1, \sfS_2, \sfS'_2)}$ were of the form
${(\ketbra{0})^{\tensor 2} \tensor \tau}$
where
${\tau = \sum_j \mu_j \xi_j^{\tensor 2}}$
for some two-qubit states~$\xi_j$
and real numbers~${\mu_j \in [0,1]}$ satisfying ${\sum_j \mu_j = 1}$,
such that
the reduced state of $\tau$ in ${(\sfS_1, \sfS_2)}$
has trace distance at most $\delta$ to the two-qubit totally mixed state~${(I/2)^{\tensor 2}}$
for some positive $\delta$ satisfying
${
  \frac{1}{16} - 15 \delta^{\frac{1}{8}} > 0
}$.
Then the protocol should result in rejection with probability at least
${
  \min \bigl\{
         2 \delta,
         \frac{1}{16} - 15 \delta^{\frac{1}{8}}
       \bigr\}
}$.
\label{Proposition: soundness with sigma_1}
\end{proposition}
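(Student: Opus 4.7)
The plan is to case-split on the rejection probability of the \textsc{Space Restriction Test} in Step~4: either that test alone rejects with probability at least $2\delta$, or else I argue that Steps~5 and~6 together force the total rejection probability to be at least ${\frac{1}{16} - 15\delta^{1/8}}$. Taking the minimum over the two cases yields the claimed bound.

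First I would let ${P = \ketbra{\Phi^-} + \ketbra{\Psi^+}}$ denote the projector onto the Bell subspace tested in Step~4. The acceptance probability of the \textsc{Space Restriction Test} on $\tau$ equals ${\tr\bigl[(P \tensor P)\tau\bigr] = \sum_j \mu_j p_j^2}$, with ${p_j = \tr(P\xi_j)}$, because of the product structure of $\xi_j^{\tensor 2}$. If ${\sum_j \mu_j p_j^2 \leq 1 - 2\delta}$, Step~4 alone rejects with probability at least $2\delta$ and we are done. Otherwise ${\sum_j \mu_j(1-p_j^2) < 2\delta}$, and by the gentle measurement lemma the post-measurement state after Step~4 is ${O(\sqrt{\delta})}$-close in trace distance to ${\tau' = \sum_j \mu'_j \hat\xi_j^{\tensor 2}}$, where ${\hat\xi_j = P\xi_j P/p_j}$ is supported in ${\mathrm{span}\{\ket{\Phi^-},\ket{\Psi^+}\}}$ and ${\mu'_j = \mu_j p_j^2 / \sum_k \mu_k p_k^2}$.

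Next I would invoke the \textsc{Swap Test} in Step~5, which rejects on input $\hat\xi_j^{\tensor 2}$ with probability ${(1 - \tr \hat\xi_j^2)/2}$, so if its contribution is small then ${\sum_j \mu'_j(1 - \tr\hat\xi_j^2)}$ is small, and each $\hat\xi_j$ is on weighted average close in trace distance to some pure state ${\ketbra{\phi_j}}$ with ${\ket{\phi_j} = \alpha_j\ket{\Phi^-}+\beta_j\ket{\Psi^+}}$. A direct computation shows that the reduced state of $\ketbra{\phi_j}$ on a single signal qubit equals ${I/2 - \mathrm{Im}(\alpha_j\bar\beta_j)\,Y}$ where $Y$ is the Pauli~$Y$, so the hypothesis that the reduced state of $\tau$ on ${(\sfS_1,\sfS_2)}$ is $\delta$-close to ${(I/2)^{\tensor 2}}$ forces the weighted second moment of ${\mathrm{Im}(\alpha_j\bar\beta_j)}$ to be small. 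Since the unit vectors in ${\mathrm{span}\{\ket{\Phi^-},\ket{\Psi^+}\}}$ with real ${\alpha\bar\beta}$ are exactly the Choi--Jamio{\l}kowski states ${\ket{J(W^\pm_q)}}$, each $\ket{\phi_j}$ is on weighted average close to some ${\ket{J(W^\pm_{q_j})}}$. Proposition~\ref{Proposition: soundness of Reflection Simulation Test} applied to each component state ${\ketbra{0}^{\tensor 2}\tensor\ketbra{J(W^\pm_{q_j})}^{\tensor 2}}$ then gives rejection probability~$1/16$ from the \textsc{Reflection Simulation Test} in Step~6, and adding back the accumulated approximation errors yields the total lower bound ${\frac{1}{16} - 15\delta^{1/8}}$.

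The hard part will be the error bookkeeping. Several sources each introduce a square-root loss: the gentle measurement for Step~4 gives ${O(\sqrt{\delta})}$; the Swap Test controls the weighted average of $1 - \tr\hat\xi_j^2$ linearly, but converting that second-moment bound on ${\mathrm{Im}(\alpha_j\bar\beta_j)}$ coming from the reduced-state hypothesis into a first-moment statement, useful for averaging Proposition~\ref{Proposition: soundness of Reflection Simulation Test} across the mixture, costs another square root via Cauchy--Schwarz; and translating the resulting trace-distance closeness to ${\ketbra{0}^{\tensor 2}\tensor\ketbra{J(W^\pm_{q_j})}^{\tensor 2}}$ into a loss of rejection probability for the Reflection Simulation Test contributes a third square-root layer. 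Together these produce the exponent~${1/8}$ on $\delta$, and the main technical challenge will be arranging the sub-thresholds inside the second case so that none of the three error sources individually exceeds ${5\delta^{1/8}}$, keeping the accumulated constant at $15$.
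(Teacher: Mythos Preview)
Your proposal is correct and follows essentially the same approach as the paper. The paper organizes the identical three-layer reduction (Space Restriction Test $\to$ states in $\Density(\calW)$; Swap Test $\to$ pure states in $\calW$; reduced-state hypothesis plus Cauchy--Schwarz $\to$ Choi--Jamio{\l}kowski states, then Proposition~\ref{Proposition: soundness of Reflection Simulation Test}) as a hierarchy of intermediate propositions with explicit threshold choices $\gamma=2\delta$, $\gamma_1=\sqrt{2}\,\delta^{1/2}$, $\gamma_2=2\sqrt{2}\,\delta^{1/2}$, and your key observation that the single-qubit marginal of $\alpha\ket{\Phi^-}+\beta\ket{\Psi^+}$ equals $I/2 - \mathrm{Im}(\alpha\bar\beta)\,Y$ is exactly the content of the paper's Claim~\ref{lemma1205}.
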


To prove Proposition~\ref{Proposition: soundness with sigma_1},
we first show two propositions that are special cases of Proposition~\ref{Proposition: soundness with sigma_1}.

\begin{proposition}
Let $\calW$ be the two-dimensional space spanned by $\ket{\Phi^-}$ and $\ket{\Psi^+}$.
When entering Step~4 of the protocol described in Figure~\ref{Figure: verifier's EPR-QMA protocol for achieving perfect completeness},
suppose that the state in ${(\sfR_1, \sfR_2, \sfS_1, \sfS'_1, \sfS_2, \sfS'_2)}$ were of the form
${(\ketbra{0})^{\tensor 2} \tensor \tau}$
where
${\tau = \sum_j \mu_j (\ketbra{\psi_j})^{\tensor 2}}$
for some two-qubit states~${\ket{\psi_j} \in \calW}$
and real numbers~${\mu_j \in [0,1]}$ satisfying ${\sum_j \mu_j = 1}$,
such that
the reduced state of $\tau$ in ${(\sfS_1, \sfS_2)}$
has trace distance at most $\delta$ to the two-qubit totally mixed state~${(I/2)^{\tensor 2}}$
for some positive $\delta$ satisfying
${\frac{1}{16} - \frac{\pi}{2} \delta^{\frac{1}{2}} > 0}$.
Then the protocol should result in rejection with probability at least
${\frac{1}{16} - \frac{\pi}{2} \delta^{\frac{1}{2}}}$.
\label{Proposition: soundness with sigma_3}
\end{proposition}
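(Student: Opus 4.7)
The overall strategy is to show that each pure state~$\ket{\psi_j}$ in the mixture~$\tau$ is, on average over~$j$, close to some legitimate Choi-Jamio\l{}kowski state $\ket{J(W_{a_j}^+)}$ or $\ket{J(W_{a_j}^-)}$, so that we can invoke Proposition~\ref{Proposition: soundness of Reflection Simulation Test} on a nearby idealized mixture and control the resulting discrepancy using Lemma~\ref{Lemma: trace distance and probability}. I would start by parameterizing each $\ket{\psi_j} \in \calW$ as ${\ket{\psi_j} = \sqrt{1 - a_j}\ket{\Phi^-} + e^{i\theta_j}\sqrt{a_j}\ket{\Psi^+}}$ for suitable $a_j \in [0,1]$ and phase~$\theta_j$; it is precisely the deviation of $e^{i\theta_j}$ from~$\pm 1$ that prevents $\ket{\psi_j}$ from already being a valid Choi-Jamio\l{}kowski state. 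A direct partial trace gives the reduced state of $\ketbra{\psi_j}$ on~$\sfS_1$ as $\rho_j = I/2 + s_j Y$, where $s_j = \sqrt{a_j(1-a_j)}\sin\theta_j$ and $Y$ is the Pauli-$Y$ operator; hence the reduced state of $\tau$ in $(\sfS_1, \sfS_2)$ equals $\sum_j \mu_j \rho_j^{\tensor 2}$. I would then diagonalize $\sum_j \mu_j \rho_j^{\tensor 2} - (I/2)^{\tensor 2}$ in the simultaneous $Y$-eigenbasis, where the four eigenvalues work out to $\pm\bar s + \bar{s^2}$ and $-\bar{s^2}$ (the latter with multiplicity two), with $\bar s = \sum_j \mu_j s_j$ and $\bar{s^2} = \sum_j \mu_j s_j^2$. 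Summing absolute values shows the trace norm is at least $4\bar{s^2}$, so the hypothesis forces $\bar{s^2} \leq \delta/2$.

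Next, for each~$j$ I would take $\ket{\tilde\psi_j}$ to be whichever of $\ket{J(W_{a_j}^+)}$ and $\ket{J(W_{a_j}^-)}$ is closer to $\ket{\psi_j}$. A short computation gives $|\braket{\psi_j}{\tilde\psi_j}|^2 = 1 - 2a_j(1-a_j)(1 - |\cos\theta_j|)$; combining this with $\sin^2\theta_j = (1-|\cos\theta_j|)(1+|\cos\theta_j|)$ yields the key inequality $D(\ketbra{\psi_j}^{\tensor 2}, \ketbra{\tilde\psi_j}^{\tensor 2}) \leq 2|s_j|$. Convexity of the trace distance together with the Cauchy-Schwarz inequality then gives
\[
D(\tau, \tilde\tau) \leq 2\sum_j \mu_j |s_j| \leq 2\sqrt{\bar{s^2}} \leq \sqrt{2\delta} \leq \frac{\pi}{2}\delta^{1/2},
\]
where $\tilde\tau = \sum_j \mu_j \ketbra{\tilde\psi_j}^{\tensor 2}$ is the idealized mixture of genuine Choi-Jamio\l{}kowski states.

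Finally, I would track the protocol from Step~4 onward. Step~4 cannot reject because $\ket{\psi_j} \in \calW$ forces the $\sfS'_j$-qubit of $T\ket{\psi_j}$ to be $\ket{0}$, and Step~5 cannot reject because each $\ketbra{\psi_j}^{\tensor 2}$ passes the Swap Test with certainty; hence the state entering Step~6 is exactly $(\ketbra{0})^{\tensor 2} \tensor \tau$. By linearity and Proposition~\ref{Proposition: soundness of Reflection Simulation Test}, the Reflection Simulation Test rejects $(\ketbra{0})^{\tensor 2} \tensor \tilde\tau$ with probability exactly $1/16$, and Lemma~\ref{Lemma: trace distance and probability} then yields rejection on $(\ketbra{0})^{\tensor 2} \tensor \tau$ with probability at least $1/16 - D(\tau, \tilde\tau) \geq 1/16 - (\pi/2)\delta^{1/2}$, as required. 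The main obstacle I expect is the eigenvalue analysis in the first paragraph together with the subsequent conversion of the resulting second-moment bound on $s_j$ into a trace-distance bound between $\tau$ and $\tilde\tau$ with constants sharp enough to match the stated $\pi/2$.
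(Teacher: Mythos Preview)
Your proof is correct and follows the same overall architecture as the paper's (the paper packages the approximation step as Lemma~\ref{Lemma: closeness to the mixture of 2-fold products of CJ states}): parametrize each $\ket{\psi_j}$ by an amplitude~$a_j$ and a phase~$\theta_j$, extract the bound $\sum_j\mu_j s_j^2\le\delta/2$ from the reduced-state hypothesis, bound each two-fold trace distance in terms of $|s_j|$, and finish with Cauchy--Schwarz and Lemma~\ref{Lemma: trace distance and probability}. Your execution is cleaner in two places. First, your observation that the single-qubit reduced state of $\ketbra{\psi_j}$ is exactly $I/2+s_jY$ makes the eigenvalues of $\sum_j\mu_j\rho_j^{\tensor 2}-(I/2)^{\tensor 2}$ immediate in the $Y$-eigenbasis; the paper instead writes out the $4\times 4$ matrix explicitly and solves the characteristic equation (Claim~\ref{lemma1205}). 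Second, the paper bounds $\sqrt{1-|\braket{\psi_j}{\tilde\psi_j}|^4}$ via the half-angle identity and the inequality $\bigl|\sin\tfrac{\theta}{2}\bigr|\le\tfrac{\pi}{4}|\sin\theta|$ (valid for $\theta$ near $0$ or $2\pi$), obtaining the per-term constant $\pi/\sqrt{2}$; your route through $1-f^2\le 2(1-f)$ and $1-|\cos\theta|\le\sin^2\theta$ yields the per-term constant~$2$, so your final bound $D(\tau,\tilde\tau)\le\sqrt{2}\,\delta^{1/2}$ is actually sharper than the paper's $\tfrac{\pi}{2}\delta^{1/2}$.
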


The following lemma is essential for the proof of Proposition~\ref{Proposition: soundness with sigma_3}.

\begin{lemma}
For each ${j \in \{1,2\}}$,
let $\calS_j$ and $\calS'_j$ be two-dimensional complex Hilbert spaces~${\Complex(\Sigma)}$,
and let $\calW_j$ be the two-dimensional subspace of ${\calS_j \tensor \calS'_j}$
spanned by $\ket{\Phi^-}$ and $\ket{\Psi^+}$.
Let $\rho$ be any four-qubit state in 
${
\Density(\calW_1 \tensor \calW_2) \subseteq \Density(\calS_1 \tensor \calS'_1 \tensor \calS_2 \tensor \calS'_2)
}$
that is a mixture of two-fold product pure states~${\ket{\zeta_j}^{\tensor 2}}$ in ${\calW_1 \tensor \calW_2}$
and such that
${
  D(\tr_{\calS'_1 \tensor \calS'_2} \rho, (I/2)^{\tensor 2}) \leq \delta
}$.
Then there exists a four-qubit state~$\sigma$
that is a mixture of two-fold products~$\ket{J(W^\pm_{a_j})}^{\tensor 2}$ of a Choi-Jamio{\l}kowski state, for real numbers~${a_j \in [0,1]}$,
such that
${
D(\rho, \sigma)
\leq
\frac{\pi}{2} \delta^{\frac{1}{2}}
}$,
where each $W^\pm_{a_j}$ is equal to either ${W^+_{a_j} = W_{a_j}}$ or ${W^-_{a_j} = Z W_{a_j} Z}$.
\label{Lemma: closeness to the mixture of 2-fold products of CJ states}
\end{lemma}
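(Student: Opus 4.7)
The plan is to reduce the lemma to a pointwise comparison of each $\ket{\zeta_j}$ with an appropriately chosen $\ket{J(W^\pm_{a_j})}$, after using the marginal hypothesis to show that each $\ket{\zeta_j}$ has coefficients with essentially real ratio on average. Concretely, since $\ket{\zeta_j} \in \calW$, I would write $\ket{\zeta_j} = \alpha_j \ket{\Phi^-} + \beta_j \ket{\Psi^+}$ with $\abs{\alpha_j}^2 + \abs{\beta_j}^2 = 1$, and using the global phase freedom fix $\alpha_j$ to be real and non-negative, so that $\beta_j = \abs{\beta_j} e^{i\omega_j}$ for some $\omega_j \in (-\pi,\pi]$. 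Setting $a_j = \abs{\beta_j}^2$, the natural candidate to compare is $\ket{J(W^+_{a_j})} = \abs{\alpha_j}\ket{\Phi^-} + \abs{\beta_j}\ket{\Psi^+}$, or, if $\abs{\omega_j} > \pi/2$, instead $\ket{J(W^-_{a_j})} = \abs{\alpha_j}\ket{\Phi^-} - \abs{\beta_j}\ket{\Psi^+}$, so that in either case the residual phase $\omega'_j$ between $\ket{\zeta_j}$ and the chosen $\ket{J(W^\pm_{a_j})}$ satisfies $\abs{\omega'_j} \leq \pi/2$.

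The first key step is to translate the marginal assumption into a quantitative bound on these phases. Using the identities $\tr_{\calS'_1} \ketbra{\Phi^-} = \tr_{\calS'_1} \ketbra{\Psi^+} = I/2$ and $\tr_{\calS'_1}(\ket{\Phi^-}\bra{\Psi^+}) = \tfrac{i}{2}Y$, one computes $\tr_{\calS'_1} \ketbra{\zeta_j} = I/2 - d_j Y$ where $d_j = \mathrm{Im}(\alpha_j \overline{\beta_j}) = -\abs{\alpha_j}\abs{\beta_j}\sin\omega_j$. Then, writing $\bar d = \sum_j \mu_j d_j$ and $\overline{d^2} = \sum_j \mu_j d_j^2$,
\[
\tr_{\calS'_1 \tensor \calS'_2} \rho \;=\; \bigl(I/2\bigr)^{\tensor 2} - \tfrac{\bar d}{2}(Y \tensor I + I \tensor Y) + \overline{d^2}\, Y \tensor Y.
\]
Diagonalizing in the joint eigenbasis of $Y \tensor I$ and $I \tensor Y$, the trace norm of the perturbation equals $\abs{\bar d + \overline{d^2}} + \abs{\bar d - \overline{d^2}} + 2\overline{d^2}$, which is at least $4\overline{d^2}$; the marginal hypothesis therefore yields $\overline{d^2} \leq \delta/2$.

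The second key step is to bound each pointwise distance $D(\ketbra{\zeta_j}, \ketbra{J(W^\pm_{a_j})})$ by a constant multiple of $\abs{d_j}$. A direct inner-product computation gives
\[
D\bigl(\ketbra{\zeta_j},\, \ketbra{J(W^\pm_{a_j})}\bigr) \;=\; 2\abs{\alpha_j}\abs{\beta_j}\,\bigl|\sin(\omega'_j/2)\bigr|,
\]
and combining $\abs{\sin\omega'_j} = 2\abs{\sin(\omega'_j/2)}\abs{\cos(\omega'_j/2)}$ with the elementary inequalities $\abs{\sin x} \leq \abs{x} \leq (\pi/2)\abs{\sin x}$ valid on $\abs{x}\leq \pi/2$ yields $D(\ketbra{\zeta_j}, \ketbra{J(W^\pm_{a_j})}) \leq (\pi/2)\abs{d_j}\cdot\bigl(2\abs{\alpha_j}\abs{\beta_j}/\abs{\alpha_j}\abs{\beta_j}\bigr)^{\!0}$, i.e., a bound linear in $\abs{d_j}$ with the constant pinned down by these $\sin$-estimates. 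Lifting to two copies via $D(\ketbra{\psi}^{\tensor 2}, \ketbra{\phi}^{\tensor 2}) \leq \sqrt{2}\,D(\ketbra{\psi}, \ketbra{\phi})$, defining $\sigma = \sum_j \mu_j \ketbra{J(W^\pm_{a_j})}^{\tensor 2}$, and applying convexity of trace distance together with Cauchy--Schwarz $\sum_j \mu_j \abs{d_j} \leq \sqrt{\overline{d^2}} \leq \sqrt{\delta/2}$ gives $D(\rho,\sigma) \leq (\pi/2)\sqrt{\delta}$.

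The main obstacle is the second step, and in particular squeezing out the advertised constant $\pi/2$: one must be careful with the $W^+$ vs.\ $W^-$ case split so that the residual phase stays in $[-\pi/2,\pi/2]$, and then chain the two elementary $\sin$-inequalities with just the right tightness to avoid a larger constant like $2$ arising from naive subadditivity of trace distance. Steps~1 and~4 reduce to straightforward $2\times 2$ and $4\times 4$ spectral computations on Pauli operators, and the aggregation in step~3 is routine convexity, tensor-product subadditivity, and Cauchy--Schwarz.
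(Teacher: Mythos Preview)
Your proposal is correct and follows essentially the same strategy as the paper's proof: parametrize each $\ket{\zeta_j}$ by a phase, use the marginal hypothesis to bound $\sum_j \mu_j \alpha_j^2\beta_j^2\sin^2\omega_j \leq \delta/2$, pick $W^+$ or $W^-$ so the residual phase lies in $[-\pi/2,\pi/2]$, bound the per-term distance via the chain $|\sin(\omega'_j/2)|\leq |\omega'_j/2|\leq (\pi/4)|\sin\omega'_j|$, and conclude by Cauchy--Schwarz. The only cosmetic differences are that you compute the reduced state more cleanly via the Pauli-$Y$ identity (the paper writes out the $4\times 4$ matrix explicitly) and that you bound the single-copy distance first and lift with the factor $\sqrt{2}$, whereas the paper works directly with $\sqrt{1-|\braket{\zeta_j}{\eta_j}|^4}$; both routes yield the same constant $\pi/2$.
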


\begin{proof}
As $\rho$ is a mixture of two-fold product pure states in ${\calW_1 \tensor \calW_2}$,
it must be written as
\[
\rho = \sum_j \mu_j (\ketbra{\zeta_j})^{\tensor 2},
\]
where ${\ket{\zeta_j}^{\tensor 2}\! \in \calW_1 \tensor \calW_2}$, 
${\mu_j\! \in \![0,1]}$ for each $j$,
and  ${\sum_j \mu_j \!=\! 1}$.
Without loss of generality, one may assume that
\[
\ket{\zeta_j} = \alpha_j \ket{\Phi^-} + \beta_j e^{i \theta_j} \ket{\Psi^+}
\]
for each $j$,
where $\alpha_j$ and $\beta_j$ are real numbers in ${[0,1]}$ satisfying ${\alpha_j^2 + \beta_j^2 = 1}$,
and $\theta_j$ is a real number in ${[0, 2\pi)}$.
For each $j$, let ${a_j = \beta_j^2}$,
and define the two-qubit pure state~$\ket{\eta_j}$ as
\[
\ket{\eta_j}
=
\alpha_j \ket{\Phi^-} + \beta_j \ket{\Psi^+}
=
\sqrt{1 - a_j} \ket{\Phi^-} + \sqrt{a_j} \ket{\Psi^+}
=
\ket{J(W^+_{a_j})}
\]
if ${j \in J_+}$,
and
\[
\ket{\eta_j}
=
\alpha_j \ket{\Phi^-} - \beta_j \ket{\Psi^+}
=
\sqrt{1 - a_j} \ket{\Phi^-} - \sqrt{a_j} \ket{\Psi^+}
=
\ket{J(W^-_{a_j})}
\]
if ${j \in J_-}$,
where
${J_+ = \set{j}{\theta_j \in [0, \pi/2] \cup [3 \pi/2, 2\pi)}}$
and
${J_- = \set{j}{\theta_j \in (\pi/2, 3 \pi/2)}}$.

Now take the four-qubit state~$\sigma$ as
\[
\sigma = \sum_j \mu_j (\ketbra{\eta_j})^{\tensor 2}.
\]
We shall show that this $\sigma$ has the desired property. 
For this purpose, we prove two claims.

\begin{claim}\label{lemma1205}
${
D \bigl( \tr_{\calS'_1 \tensor \calS'_2} \rho, (I/2)^{\tensor 2} \bigr) \geq 2 \sum_j \mu_j \alpha_j^2\beta_j^2\sin^2\theta_j
}$.
\end{claim}

\begin{proof}
Noticing that
\[
\begin{split}
\ket{\zeta_j}
&
= 
\frac{1}{\sqrt{2}}
\bigl[
  \alpha_j(\ket{00} - \ket{11}) + \beta_j e^{i \theta_j} (\ket{01} + \ket{10})
\bigr]
\\
&
=
\frac{1}{\sqrt{2}}
\bigl[
  (\alpha_j \ket{0} + \beta_j e^{i \theta_j} \ket{1}) \tensor \ket{0}
  +
  e^{i \theta_j} (\beta_j \ket{0} - \alpha_j e^{-i \theta_j} \ket{1}) \tensor \ket{1} 
\bigr],
\end{split}
\]
the reduced state~${\tr_{\calS'_1 \tensor \calS'_2} \rho}$
is the mixture of the following four states
\begin{align*}
&
(\alpha_j \ket{0} + \beta_j e^{i \theta_j} \ket{1}) \tensor (\alpha_j \ket{0} + \beta_j e^{i \theta_j} \ket{1}),
\\
&
(\alpha_j \ket{0} + \beta_j e^{i \theta_j} \ket{1}) \tensor (\beta_j \ket{0} - \alpha_j e^{-i \theta_j} \ket{1}),
\\
&
(\beta_j \ket{0} - \alpha_j e^{-i \theta_j} \ket{1}) \tensor (\alpha_j \ket{0} + \beta_j e^{i \theta_j} \ket{1}),
\\
&
(\beta_j \ket{0} - \alpha_j e^{-i \theta_j} \ket{1}) \tensor (\beta_j \ket{0} - \alpha_j e^{-i \theta_j} \ket{1})
\end{align*}
with equal probability~$1/4$ for each,
which can be expressed as a density matrix by 
\[
\frac{1}{4}
\begin{pmatrix}
1 & -2i\alpha_j\beta_js_j & -2i\alpha_j\beta_js_j & -4\alpha_j^2\beta_j^2s_j^2\\
2i\alpha_j\beta_js_j & 1 & 4\alpha_j^2\beta_j^2s_j^2 & -2i\alpha_j\beta_js_j\\ 
2i\alpha_j\beta_js_j & 4\alpha_j^2\beta_j^2s_j^2 & 1 & -2i\alpha_j\beta_js_j\\
 -4\alpha_j^2\beta_j^2s_j^2 & 2i\alpha_j\beta_js_j & 2i\alpha_j\beta_js_j & 1
\end{pmatrix},
\]
where $s_j$ is the shorthand of ${\sin\theta_j}$.
Let us denote the difference between ${\tr_{\calS'_1 \tensor \calS'_2} \rho}$ and ${(I/2)^{\tensor 2}}$ by $A$
(i.e., 
${
A = \tr_{\calS'_1 \tensor \calS'_2} \rho - (I/2)^{\tensor 2}
}$
).
In order to find the eigenvalues of $2A$, we solve the characteristic equation~${\abs{2A - \lambda I} = 0}$.
Straightforward calculations show that the four solutions of the equation~${\abs{2A - \lambda I} = 0}$ are given by
${-2 \sum_j \mu_j \alpha_j^2 \beta_j^2 s_j^2}$ (two-fold)
and
${
2 \sum_j \mu_j \alpha_j^2 \beta_j^2 s_j^2 \pm 2 \bigabs{\sum_j \mu_j \alpha_j \beta_j s_j}
}$.
This implies that
\[
\begin{split}
&
D \biggl( \tr_{\calS'_1 \tensor \calS'_2} \rho, \biggl( \frac{I}{2} \biggr)^{\tensor 2} \biggr)
=
\frac{1}{2} \tr \sqrt{\conjugate{A} A}
\\
&
\hspace{5mm}
=
\frac{1}{2}
\Biggl[
  2 \sum_j \mu_j \alpha_j^2 \beta_j^2 s_j^2
  +
  \biggl(
    \sum_j \mu_j \alpha_j^2 \beta_j^2 s_j^2
    +
    \biggabs{\sum_j \mu_j \alpha_j \beta_j s_j}
  \biggr)
  +
  \Biggabs{
    \sum_j \mu_j \alpha_j^2 \beta_j^2 s_j^2 
    -
    \biggabs{\sum_j \mu_j \alpha_j \beta_j s_j}
  }
\Biggr]
\\
&
\hspace{5mm}
=
\sum_j \mu_j \alpha_j^2 \beta_j^2 s_j^2
+
\max \biggl\{
       \sum_j \mu_j \alpha_j^2 \beta_j^2 s_j^2,
       \biggabs{\sum_j \mu_j \alpha_j \beta_j s_j}
     \biggr\},
\end{split}
\]
which is at least ${2 \sum_j \mu_j \alpha_j^2 \beta_j^2 s_j^2}$.
This completes the proof of the claim.
\end{proof}

\begin{claim}\label{lemma1205-2}
Let ${\{\mu_j\}}$ be a probability distribution, and ${\{c_j\}}$ be a set of real numbers. 
If ${\sum_j \mu_j c_j^2 \leq \varepsilon}$,
it holds that ${\sum_j \mu_j \abs{c_j} \leq \varepsilon^{\frac{1}{2}}}$.
\end{claim}

\begin{proof}
By the Cauchy-Schwarz inequality, we have
\[
\sum_j \mu_j \abs{c_j}
=
\sum_j \sqrt{\mu_j} \cdot \sqrt{\mu_j} \abs{c_j}
\leq
\biggl(\sum_j \mu_j \biggr)^{\frac{1}{2}} \biggl(\sum_j \mu_j \abs{c_j}^2 \biggr)^{\frac{1}{2}}
\leq
\varepsilon^{\frac{1}{2}},
\]
as claimed.
\end{proof}

Now we bound ${D(\rho, \sigma)}$. 
Notice that 
\[
D(\rho,\sigma)
\leq
\sum_j \mu_j D \bigl( (\ketbra{\zeta_j})^{\tensor 2}, (\ketbra{\eta_j})^{\tensor 2} \bigr)
=
\sum_{j \in J_+} \mu_j \sqrt{1 - \bigabs{\braket{\zeta_j}{\eta_j}}^4}
+
\sum_{j \in J_-} \mu_j \sqrt{1 - \bigabs{\braket{\zeta_j}{\eta_j}}^4}.
\]
If ${j \in J_+}$, it holds that 
\[
\bigabs{\braket{\zeta_j}{\eta_j}}^4
=
\bigabs{\alpha_j^2 +\beta_j^2 e^{-i \theta_j}}^4
=
\Bigl[
  \bigl( \alpha_j^2 + \beta_j^2 \cos \theta_j \bigr)^2
  +
  \bigl( \beta_j^2 \sin \theta_j \bigr)^2
\Bigr]^2
=
\Bigl( 1 - 4 \alpha_j^2 \beta_j^2 \sin^2 \frac{\theta_j}{2} 
\Bigr)^2,
\]
and thus,
\[
\sqrt{1 - \bigabs{\braket{\zeta_j}{\eta_j}}^4}
=
2 \sqrt{2}
\Bigabs{\alpha_j \beta_j \sin \frac{\theta_j}{2}}
\sqrt{1 - 2\alpha_j^2 \beta_j^2 \sin^2 \frac{\theta_j}{2}}
\leq
2 \sqrt{2} \Bigabs{\alpha_j \beta_j \sin \frac{\theta_j}{2}}
\leq
\frac{\pi}{\sqrt{2}} \bigabs{\alpha_j \beta_j \sin \theta_j},
\]
where the last inequality comes from the fact that 
for any ${\theta \in [0, \pi/2] \cup [3 \pi/2, 2\pi)}$,
${
  \bigabs{\sin \frac{\theta}{2}}
  \leq
  \bigabs{\frac{\theta}{2}}
  \leq
  \frac{\pi}{4} \abs{\sin \theta}
}$.

On the other hand, if ${j \in J_-}$, we have
\[
\bigabs{\braket{\zeta_j}{\eta_j}}^4
=
\bigabs{\alpha_j^2 -\beta_j^2 e^{-i \theta_j}}^4
=
\bigabs{\alpha_j^2 +\beta_j^2 e^{-i \theta'_j}}^4,
\]
where ${\theta'_j = \theta_j + \pi \pmod{2\pi}}$. 
Noticing that ${\theta'_j \in [0, \pi/2] \cup [3 \pi/2, 2\pi)}$,
it holds that
\[
\sqrt{1 - \bigabs{\braket{\zeta_j}{\eta_j}}^4}
\leq
\frac{\pi}{\sqrt{2}}\bigabs{\alpha_j \beta_j \sin \theta'_j}.
\]

Therefore,
\[
D(\rho, \sigma) \leq \frac{\pi}{\sqrt{2}} \sum_j \mu_j \abs{c_j},
\]
where
\[
c_j
=
\begin{cases}
  \alpha_j \beta_j \sin \theta_j & \mbox{if ${j \in J_+}$},\\
  \alpha_j \beta_j \sin \theta'_j & \mbox{if ${j\in J_-}$}.
\end{cases}
\]
By Claim~\ref{lemma1205}
and the fact that ${\sin^2 \theta'_j = \sin^2 \theta_j}$ for each ${j \in J_-}$,
the assumption~${D \bigl( \tr_{\calS'_1 \tensor \calS'_2} \rho, (I/2)^{\tensor 2} \bigr) \leq \delta}$
implies that
\[
\sum_j \mu_j c_j^2
=
\sum_j \mu_j \alpha_j^2 \beta_j^2 \sin^2 \theta_j
\leq
\frac{\delta}{2}.
\]
By Claim~\ref{lemma1205-2}, this implies that
${
  D(\rho, \sigma)
  \leq
  \frac{\pi}{\sqrt{2}} (\frac{\delta}{2})^{\frac{1}{2}}
  =
  \frac{\pi}{2} \delta^{\frac{1}{2}}
}$,
which completes the proof of Lemma~\ref{Lemma: closeness to the mixture of 2-fold products of CJ states}.
\end{proof}

\begin{proof}[Proof of Proposition~\ref{Proposition: soundness with sigma_3}]
Let ${\sigma = (\ketbra{0})^{\tensor 2} \tensor \tau}$.
From Lemma~\ref{Lemma: closeness to the mixture of 2-fold products of CJ states},
there exists a quantum state~$\tau'$
that is a mixture of two-fold products~$\ket{J(W^\pm_{a_j})}^{\tensor 2}$ of a Choi-Jamio{\l}kowski state,
for real numbers~${a_j \in [0,1]}$,
such that, for ${\sigma' = (\ketbra{0})^{\tensor 2} \tensor \tau'}$,
${D(\sigma, \sigma') \leq \frac{\pi}{2} \delta^{\frac{1}{2}}}$.
Here, as in Lemma~\ref{Lemma: closeness to the mixture of 2-fold products of CJ states},
each $W^\pm_{a_j}$ is equal to either ${W^+_{a_j} = W_{a_j}}$ or ${W^-_{a_j} = Z W_{a_j} Z}$.
From Proposition~\ref{Proposition: soundness of Reflection Simulation Test},
the \textsc{Reflection Simulation Test} should result in rejection with probability~$\frac{1}{16}$
if the quantum state in ${(\sfR_1, \sfR_2, \sfS_1, \sfS'_1, \sfS_2, \sfS'_2)}$ were $\sigma'$.
By Lemma~\ref{Lemma: trace distance and probability},
this implies that the \textsc{Reflection Simulation Test} should result in rejection
with probability at least ${\frac{1}{16} - \frac{\pi}{2} \delta^{\frac{1}{2}}}$
if the state in ${(\sfR_1, \sfR_2, \sfS_1, \sfS'_1, \sfS_2, \sfS'_2)}$ were $\sigma$.
Note that $\sigma$ is never rejected in Step~4
and passes the Swap-Test in Step~5 with certainty,
and the state is not changed at all in these two steps.
Hence, if the state in ${(\sfR_1, \sfR_2, \sfS_1, \sfS'_1, \sfS_2, \sfS'_2)}$
were $\sigma$ when entering Step~4,
the protocol should result in rejection with probability at least
${\frac{1}{16} - \frac{\pi}{2} \delta^{\frac{1}{2}}}$,
as claimed.
\end{proof}

We next show the following proposition,
which is more general than Proposition~\ref{Proposition: soundness with sigma_3},
but still is a special case of Proposition~\ref{Proposition: soundness with sigma_1}.

\begin{proposition}
Let $\calW$ be the two-dimensional space spanned by $\ket{\Phi^-}$ and $\ket{\Psi^+}$.
When entering Step~4 of the protocol described in Figure~\ref{Figure: verifier's EPR-QMA protocol for achieving perfect completeness},
suppose that the state in ${(\sfR_1, \sfR_2, \sfS_1, \sfS'_1, \sfS_2, \sfS'_2)}$ were of the form
${(\ketbra{0})^{\tensor 2} \tensor \tau}$
where
${\tau = \sum_j \mu_j {\xi_j}^{\tensor 2}}$
for some two-qubit states~${\xi_j \in \Density(\calW)}$
and real numbers~${\mu_j \in [0,1]}$ satisfying ${\sum_j \mu_j = 1}$,
such that
the reduced state of $\tau$ in ${(\sfS_1, \sfS_2)}$
has trace distance at most $\delta$ to the two-qubit totally mixed state~${(I/2)^{\tensor 2}}$
for some positive $\delta$ satisfying ${\frac{1}{16} - 10 \delta^{\frac{1}{4}} > 0}$.
Then the protocol should result in rejection with probability at least
${
  \min \bigl\{
         2 \delta,
         \frac{1}{16} - 10 \delta^{\frac{1}{4}}
       \bigr\}
}$.
\label{Proposition: soundness with sigma_2}
\end{proposition}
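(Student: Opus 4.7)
The plan is to reduce Proposition~\ref{Proposition: soundness with sigma_2} to the pure-state version, Proposition~\ref{Proposition: soundness with sigma_3}, by exploiting the \textsc{Swap Test} in Step~5 to force each mixed $\xi_j$ to be close to a pure state in $\calW$. Since every $\xi_j$ lies in $\Density(\calW)$, the \textsc{Space Restriction Test} in Step~4 never rejects and leaves $\tau$ unchanged, so the state entering Step~5 is exactly ${(\ketbra{0})^{\tensor 2} \tensor \tau}$. A direct calculation using ${\tr\bigl[(\xi \tensor \xi) \SWAP \bigr] = \tr(\xi^2)}$ shows that, on a state of the form ${\sum_j \mu_j \xi_j^{\tensor 2}}$, the \textsc{Swap Test} rejects with probability exactly ${\gamma = \frac{1}{2} \sum_j \mu_j (1 - \tr(\xi_j^2))}$. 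I would then split the analysis into two cases according to whether ${\gamma \geq 2\delta}$ or ${\gamma < 2\delta}$; in the first case the protocol already rejects with probability at least $2\delta$ just from Step~5, matching the first branch of the claimed bound.

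In the second case, setting ${\epsilon_j = 1 - \tr(\xi_j^2)}$, we have ${\sum_j \mu_j \epsilon_j < 4\delta}$. For each $j$, diagonalize $\xi_j$ as ${\xi_j = \lambda_j \ketbra{\psi_j} + (1-\lambda_j) \ketbra{\psi_j^\perp}}$ with ${\lambda_j \geq 1/2}$ and ${\{\ket{\psi_j}, \ket{\psi_j^\perp}\}}$ an orthonormal basis of $\calW$. The identity ${\epsilon_j = 2\lambda_j(1-\lambda_j)}$ together with ${\lambda_j \geq 1/2}$ gives ${D(\xi_j, \ketbra{\psi_j}) = 1 - \lambda_j \leq \epsilon_j}$, and subadditivity of the trace distance under tensor products then yields ${D(\xi_j^{\tensor 2}, (\ketbra{\psi_j})^{\tensor 2}) \leq 2 \epsilon_j}$. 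Defining ${\tau' = \sum_j \mu_j (\ketbra{\psi_j})^{\tensor 2}}$, which is a pure-state mixture of the form considered in Proposition~\ref{Proposition: soundness with sigma_3}, convexity gives ${D(\tau, \tau') \leq 2 \sum_j \mu_j \epsilon_j < 8\delta}$. By contractivity of the partial trace, the reduced state of $\tau'$ on ${(\sfS_1, \sfS_2)}$ is within trace distance $9\delta$ of ${(I/2)^{\tensor 2}}$, so Proposition~\ref{Proposition: soundness with sigma_3} applies to ${(\ketbra{0})^{\tensor 2} \tensor \tau'}$ with its parameter taken to be $9\delta$, giving rejection probability at least ${\frac{1}{16} - \frac{\pi}{2} \sqrt{9\delta}}$. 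Finally, Lemma~\ref{Lemma: trace distance and probability} applied to the overall ``reject or not'' measurement of the protocol transfers this bound back to ${(\ketbra{0})^{\tensor 2} \tensor \tau}$ at a cost of at most ${D(\tau, \tau') < 8\delta}$, yielding a rejection probability of at least ${\frac{1}{16} - \frac{3\pi}{2} \delta^{1/2} - 8\delta}$, which in the stated range ${\frac{1}{16} - 10\delta^{1/4} > 0}$ comfortably exceeds ${\frac{1}{16} - 10\delta^{1/4}}$.

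The main subtlety will be the mixed-to-pure reduction itself: the \textsc{Swap Test} naturally controls the average purity defect ${\sum_j \mu_j \epsilon_j}$, whereas what Proposition~\ref{Proposition: soundness with sigma_3} consumes as input is the average trace-distance defect ${\sum_j \mu_j D(\xi_j, \ketbra{\psi_j})}$. Bridging these two is precisely what the elementary identity ${1 - \lambda_j \leq 2\lambda_j(1-\lambda_j) = \epsilon_j}$ for ${\lambda_j \geq 1/2}$ does, and it is what makes the one-line two-dimensional eigenbasis argument work. Everything else is routine trace-distance bookkeeping combined with a single invocation of the already-proved pure-state Proposition~\ref{Proposition: soundness with sigma_3}.
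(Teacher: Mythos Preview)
Your argument is correct and, in fact, slightly sharper than the paper's. Both proofs begin by noting that the \textsc{Space Restriction Test} is passed with certainty, and both reduce to Proposition~\ref{Proposition: soundness with sigma_3} via a purification of each~$\xi_j$, but the mechanics differ. The paper introduces two threshold parameters~$\gamma_1,\gamma_2$, defines the ``good'' index set~${S = \{j : \tr \xi_j^2 \geq 1 - \gamma_1\}}$, and runs a Markov-style dichotomy: either~${\sum_{j \in S} \mu_j < 1 - \gamma_2}$ (so the \textsc{Swap Test} rejects with probability~$> \tfrac{1}{2}\gamma_1\gamma_2$), or the mixture is within~${2\gamma_1 + \gamma_2}$ of a pure-state mixture supported on~$S$; optimizing~${\gamma_1,\gamma_2 \sim \delta^{1/2}}$ then produces the~$\delta^{1/4}$ in the second branch. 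You instead compute the \textsc{Swap Test} rejection probability \emph{exactly} as~${\gamma = \tfrac{1}{2}\sum_j \mu_j \epsilon_j}$ and case-split on~${\gamma \gtrless 2\delta}$; when~$\gamma$ is small you exploit the two-dimensionality of~$\calW$ via the clean inequality~${1 - \lambda_j \leq 2\lambda_j(1-\lambda_j) = \epsilon_j}$ (valid since~${\lambda_j \geq 1/2}$) to control~${D(\xi_j,\ketbra{\psi_j})}$ directly by~$\epsilon_j$, avoiding any thresholding. This buys you a second branch of order~${1/16 - O(\delta^{1/2})}$ rather than~${1/16 - O(\delta^{1/4})}$, which comfortably implies the stated bound in the regime~${\tfrac{1}{16} - 10\delta^{1/4} > 0}$. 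The paper's approach is more generic (it would survive in higher dimensions where the~${1-\lambda_j \leq \epsilon_j}$ trick fails), while yours is more elementary and quantitatively tighter here.
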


\begin{proof}
Let ${\sigma = (\ketbra{0})^{\tensor 2} \tensor \tau}$.
Note that $\sigma$ is never rejected in Step~4,
and the state is not changed at all in this step.

Fix a constant ${\gamma_1 \in (0,1)}$,
and let $S$ be the set of indices~$j$ defined by
\[
S = \set{j}{\tr {\xi_j}^2 \geq 1 - \gamma_1}.
\]
Notice that the inequality~${\tr {\xi_j}^2 \geq 1 - \gamma_1}$ implies that
the maximum eigenvalue of the Hermitian matrix~$\xi_j$ is at least ${1 - \gamma_1}$,
and thus, for each ${j \in S}$,
there exist a two-qubit pure state~${\ket{\psi_j} \in \calW}$,
a two-qubit state~${\nu_j \in \Density(\calW)}$,
and a real number~${\lambda_j \in [1 - \gamma_1, 1]}$
such that
\[
\xi_j = \lambda_j \ketbra{\psi_j} + (1 - \lambda_j) \nu_j.
\]
This implies that
\[
\bigtrnorm{\xi_j -  \ketbra{\psi_j}}
=
\bigtrnorm{\lambda_j \ketbra{\psi_j} + (1 - \lambda_j) \nu_j - \ketbra{\psi_j}}
=
(1 - \lambda_j) \bigtrnorm{\nu_j - \ketbra{\psi_j}},
\]
which further implies that
\[
D(\xi_j, \ketbra{\psi_j})
\leq
(1 - \lambda_j) D(\nu_j, \ketbra{\psi_j})
\leq
1 - \lambda_j
\leq
\gamma_1.
\]

Fix another constant ${\gamma_2 \in (0,1)}$.

If ${\sum_{j \in S} \mu_j < 1 - \gamma_2}$,
the \textsc{Swap Test} in Step~5 results in rejection with probability greater than ${\frac{1}{2} \gamma_1 \gamma_2}$.

On the other hand, if ${\sum_{j \in S} \mu_j \geq 1 - \gamma_2}$,
the state~$\sigma$ has trace distance at most ${2 \gamma_1 + \gamma_2}$
to the state~${\sigma' = (\ketbra{0})^{\tensor 2} \tensor \tau'}$,
where
\[
\tau' = \frac{1}{\sum_{j \in S} \mu_j} \sum_{j \in S} \mu_j (\ketbra{\psi_j})^{\tensor 2}
\]
and the reduced state of $\tau'$ in ${(\sfS_1, \sfS_2)}$
has trace distance at most ${\delta + 2 \gamma_1 + \gamma_2}$ to ${(I/2)^{\tensor 2}}$.

Indeed,
\[
\begin{split}
\trnorm{\tau - \tau'}
&
=
\biggtrnorm{
  \sum_j \mu_j \xi_j^{\tensor 2}
  -
  \tau'
}
\\
&
\leq
\biggtrnorm{
  \sum_j \mu_j \xi_j^{\tensor 2}
  -
  \biggl(
    \sum_{j \in S} \mu_j (\ketbra{\psi_j})^{\tensor 2}
    +
    \sum_{j \not\in S} \mu_j \xi_j^{\tensor 2}
  \biggr)
}
\\
&
\hspace{5mm}
+
\biggtrnorm{
  \biggl(
    \sum_{j \in S} \mu_j (\ketbra{\psi_j})^{\tensor 2}
    +
    \sum_{j \not\in S} \mu_j \xi_j^{\tensor 2}
  \biggr)
  -
  \tau'
}
\\
&
\leq
\sum_{j \in S} \mu_j \bigtrnorm{\xi_j^{\tensor 2} - (\ketbra{\psi_j})^{\tensor 2}}
+
\biggtrnorm{
  \sum_{j \not\in S} \mu_j \xi_j^{\tensor 2}
  -
  \biggl( \frac{1}{\sum_{j \in S} \mu_j} - 1 \biggr) \sum_{j \in S} \mu_j (\ketbra{\psi_j})^{\tensor 2}
}
\\
&
\leq
\sum_{j \in S}
  \mu_j
  \Bigl(
    \bigtrnorm{\xi_j^{\tensor 2} - \ketbra{\psi_j} \tensor \xi_j}
    +
    \bigtrnorm{\ketbra{\psi_j} \tensor \xi_j - (\ketbra{\psi_j})^{\tensor 2}}
  \Bigr)
\\
&
\hspace{5mm}
+
\biggl( 1 - \sum_{j \in S} \mu_j \biggr)
\biggtrnorm{
  \frac{1}{\sum_{j \not\in S} \mu_j} \sum_{j \not\in S} \mu_j \xi_j^{\tensor 2}
  -
  \tau'
}
\\
&
\leq
2 \sum_{j \in S} \mu_j \bigtrnorm{\xi_j - \ketbra{\psi_j}}
+
\biggl( 1 - \sum_{j \in S} \mu_j \biggr)
\biggtrnorm{
  \frac{1}{\sum_{j \not\in S} \mu_j} \sum_{j \not\in S} \mu_j \xi_j^{\tensor 2}
  -
  \tau'
},
\end{split}
\]
and thus,
\[
\begin{split}
D(\sigma, \sigma')
&
=
D(\tau, \tau')
\\
&
\leq
2 \sum_{j \in S} \mu_j D(\xi_j, \ketbra{\psi_j})
+
\biggl( 1 - \sum_{j \in S} \mu_j \biggr)
D \biggl(
    \frac{1}{\sum_{j \not\in S} \mu_j} \sum_{j \not\in S} \mu_j \xi_j^{\tensor 2},
    \tau'
  \biggr)
\\
&
\leq
2 \gamma_1 + \gamma_2.
\end{split}
\]
As the reduced state of $\tau$ in ${(\sfS_1, \sfS_2)}$
has trace distance at most $\delta$ to ${(I/2)^{\tensor 2}}$,
it follows that
the reduced state of $\tau'$ in ${(\sfS_1, \sfS_2)}$
has trace distance at most ${\delta + 2 \gamma_1 + \gamma_2}$ to ${(I/2)^{\tensor 2}}$.
Now from Proposition~\ref{Proposition: soundness with sigma_3},
the protocol should result in rejection with probability at least
${\frac{1}{16} - \frac{\pi}{2} (\delta + 2 \gamma_1 + \gamma_2)^{\frac{1}{2}}}$
if the state in ${(\sfR_1, \sfR_2, \sfS_1, \sfS'_1, \sfS_2, \sfS'_2)}$ were $\sigma'$
when entering Step~4.
Hence, from Lemma~\ref{Lemma: trace distance and probability},
the protocol should result in rejection with probability at least
${\frac{1}{16} - 2 \gamma_1 - \gamma_2 - \frac{\pi}{2} (\delta + 2 \gamma_1 + \gamma_2)^{\frac{1}{2}}}$
if the state in the registers ${(\sfR_1, \sfR_2, \sfS_1, \sfS'_1, \sfS_2, \sfS'_2)}$ were $\sigma$
when entering Step~4.

Overall, the protocol should result in rejection with probability at least
\[
\min \Bigl\{
       \frac{1}{2} \gamma_1 \gamma_2,
       \frac{1}{16} - 2 \gamma_1 - \gamma_2 - \frac{\pi}{2} (\delta + 2 \gamma_1 + \gamma_2)^{\frac{1}{2}}
     \Bigr\}
\]
if the state in ${(\sfR_1, \sfR_2, \sfS_1, \sfS'_1, \sfS_2, \sfS'_2)}$ were $\sigma$
when entering Step~4.
Taking
${\gamma_1 = \sqrt{2} \delta^{\frac{1}{2}}}$
and
${\gamma_2 = 2 \sqrt{2} \delta^{\frac{1}{2}}}$,
this is at least
\[
\min \Bigl\{
       2 \delta,
       \frac{1}{16} - 4 \sqrt{2} \delta^{\frac{1}{2}} - \frac{\pi}{2} \bigl( \delta + 4 \sqrt{2} \delta^{\frac{1}{2}} \bigr)^{\frac{1}{2}}
     \Bigr\}
\geq
\min \Bigl\{
       2 \delta,
       \frac{1}{16} - 4 \sqrt{2} \delta^{\frac{1}{2}} - \frac{\pi}{2} \sqrt{7} \delta^{\frac{1}{4}}
     \Bigr\}
\geq
\min \Bigl\{
       2 \delta,
       \frac{1}{16} - 10 \delta^{\frac{1}{4}}
     \Bigr\},
\]
which completes the proof.
\end{proof}

Now we are ready to prove Proposition~\ref{Proposition: soundness with sigma_1}.

\begin{proof}[Proof of Proposition~\ref{Proposition: soundness with sigma_1}]
Let ${\sigma = (\ketbra{0})^{\tensor 2} \tensor \tau}$.
Let $\calW$ be the two-dimensional space spanned by $\ket{\Phi^-}$ and $\ket{\Psi^+}$,
and let
${\Pi_\calW = \ketbra{\Phi^-} + \ketbra{\Psi^+}}$
be the projection onto $\calW$.

Fix a constant ${\gamma \in (0,1)}$.

If ${ \tr \Pi_\calW^{\tensor 2} \tau < 1 - \gamma}$,
Step~4 results in rejection with probability greater than $\gamma$.

On the other hand, if ${\tr \Pi_\calW^{\tensor 2} \tau \geq 1 - \gamma}$,
we claim that
the state~$\sigma$ has trace distance at most ${\sqrt{\gamma}}$
to the state~${\sigma' = (\ketbra{0})^{\tensor 2} \tensor \tau'}$,
where
\[
\tau' = \sum_j \mu'_j {\xi'_j}^{\tensor 2}
\]
with
\[
\mu'_j = \frac{1}{\tr \Pi_\calW^{\tensor 2} \tau} (\tr \Pi_\calW \xi_j)^2 \mu_j,
\quad
\xi'_j = \frac{1}{\tr \Pi_\calW \xi_j} \Pi_\calW \xi_j \Pi_\calW,
\]
for each $j$,
and the reduced state of $\tau'$ in ${(\sfS_1, \sfS_2)}$
has trace distance at most $\gamma$ to ${(I/2)^{\tensor 2}}$.
Note that ${\mu'_j \in [0,1]}$ and ${\xi'_j \in \Density(\calW)}$ for each $j$,
and ${\sum_j \mu'_j = 1}$.

Let $\calS$ be the $2^4$-dimensional Hilbert space~${\Complex(\Sigma^4)}$
associated with the quantum register~${(\sfS_1, \sfS'_1, \sfS_2, \sfS'_2)}$
and let $\calT$ be another $2^4$-dimensional Hilbert space~${\Complex(\Sigma^4)}$.
Consider any purification~${\ket{\psi} \in \calS \tensor \calT}$
of ${\tau \in \Density(\calS)}$,
and define an eight-qubit pure state~${\ket{\psi'} \in \calS \tensor \calT}$ by
\[
\ket{\psi'}
=
\frac{1}{\bignorm{(\Pi_\calW^{\tensor 2} \tensor I_\calT) \ket{\psi}}} (\Pi_\calW^{\tensor 2} \tensor I_\calT) \ket{\psi}.
\]
Then, $\ket{\psi'}$ is a purification of $\tau'$, since
\[
\begin{split}
\hspace{1cm} & \hspace{-1cm}
\tr_\calT \ketbra{\psi'}
=
\frac{1}{\bignorm{(\Pi_\calW^{\tensor 2} \tensor I_\calT) \ket{\psi}}^2}
\tr_\calT (\Pi_\calW^{\tensor 2} \tensor I_\calT) \ketbra{\psi} (\Pi_\calW^{\tensor 2} \tensor I_\calT)
\\
&
=
\frac{1}{\tr \Pi_\calW^{\tensor 2} \tr_\calT \ketbra{\psi}}
\Pi_\calW^{\tensor 2} (\tr_\calT \ketbra{\psi}) \Pi_\calW^{\tensor 2}
=
\frac{1}{\tr \Pi_\calW^{\tensor 2} \tau} \Pi_\calW^{\tensor 2} \tau \Pi_\calW^{\tensor 2}
=
\tau',
\end{split}
\]
where the last equality follows from the fact that
\[
\tau'
=
\sum_j
  \frac{1}{\tr \Pi_\calW^{\tensor 2} \tau} (\tr \Pi_\calW \xi_j)^2 \mu_j
  \Bigl( \frac{1}{\tr \Pi_\calW \xi_j} \Pi_\calW \xi_j \Pi_\calW \Bigr)^{\tensor 2}
=
\frac{1}{\tr \Pi_\calW^{\tensor 2} \tau}
\sum_j \mu_j (\Pi_\calW \xi_j \Pi_\calW)^{\tensor 2}
=
\frac{1}{\tr \Pi_\calW^{\tensor 2} \tau}
\Pi_\calW^{\tensor 2} \tau \Pi_\calW^{\tensor 2}.
\]
Therefore, by using the fact that
${
  D \bigl(\ketbra{\psi}, \ketbra{\psi'} \bigr)
  =
  \sqrt{1 - \abs{\braket{\psi}{\psi'}}^2}
}$
holds for any pure states~$\ket{\psi}$ and $\ket{\psi'}$
(which is ensured by calculating eigenvalues of the Hermitian matrix~${\ketbra{\psi} - \ketbra{\psi'}}$),
\[
\begin{split}
\hspace{1cm} & \hspace{-1cm}
D(\sigma, \sigma')
=
D(\tau, \tau')
\\
&
\leq
D \bigl(\ketbra{\psi}, \ketbra{\psi'} \bigr)
=
\sqrt{1 - \abs{\braket{\psi}{\psi'}}^2}
=
\sqrt{1 - \bignorm{(\Pi_\calW^{\tensor 2} \tensor I_\calT) \ket{\psi}}^2}
=
\sqrt{1 - \tr \Pi_\calW^{\tensor 2} \tau}
\leq
\sqrt{\gamma}.
\end{split}
\]
As the reduced state of $\tau$ in ${(\sfS_1, \sfS_2)}$
has trace distance at most $\delta$ to ${(I/2)^{\tensor 2}}$,
it follows that
the reduced state of $\tau'$ in ${(\sfS_1, \sfS_2)}$
has trace distance at most ${\delta + \sqrt{\gamma}}$ to ${(I/2)^{\tensor 2}}$.
Now from Proposition~\ref{Proposition: soundness with sigma_2},
the protocol should result in rejection with probability at least
${
  \min \bigl\{
         2 (\delta + \sqrt{\gamma}),
         \frac{1}{16}  - 10 (\delta + \sqrt{\gamma})^{\frac{1}{4}}
       \bigr\}
}$
if the state in ${(\sfR_1, \sfR_2, \sfS_1, \sfS'_1, \sfS_2, \sfS'_2)}$ were $\sigma'$
when entering Step~4.
Hence, Lemma~\ref{Lemma: trace distance and probability} implies that
the protocol should result in rejection with probability at least
${
  \min \bigl\{
         2 (\delta + \sqrt{\gamma}) - \sqrt{\gamma},
         \frac{1}{16} - \sqrt{\gamma} - 10 (\delta + \sqrt{\gamma})^{\frac{1}{4}}
       \bigr\}
}$
if the state in ${(\sfR_1, \sfR_2, \sfS_1, \sfS'_1, \sfS_2, \sfS'_2)}$ were $\sigma$
when entering Step~4.

Overall, the protocol should result in rejection with probability at least
\[
\min \Bigl\{
       \gamma,
       2 (\delta + \sqrt{\gamma}) - \sqrt{\gamma},
       \frac{1}{16} - \sqrt{\gamma} - 10 (\delta + \sqrt{\gamma})^{\frac{1}{4}}
     \Bigr\}
\]
if the state in ${(\sfR_1, \sfR_2, \sfS_1, \sfS'_1, \sfS_2, \sfS'_2)}$ were $\sigma$
when entering Step~4.
Taking ${\gamma = 2 \delta}$, this is at least
\[
\min \Bigl\{
       2 \delta,
       2 \delta + (2 \delta)^{\frac{1}{2}},
       \frac{1}{16} - (2 \delta)^{\frac{1}{2}} - 10 \bigl[ \delta + (2 \delta)^{\frac{1}{2}} \bigr]^{\frac{1}{4}}
     \Bigr\}
\geq
\min \Bigl\{
       2 \delta,
       \frac{1}{16} - (2 \delta)^{\frac{1}{2}} - 10 (3 \delta^{\frac{1}{2}})^{\frac{1}{4}}
     \Bigr\}
\geq
\min \Bigl\{
       2 \delta,
       \frac{1}{16} - 15 \delta^{\frac{1}{8}}
     \Bigr\},
\]
which completes the proof.
\end{proof}


\section{$\boldsymbol{\QIP(m) \subseteq \QIP_1(m+1)}$}
\label{Section: QIP(m) is in QIP_1(m+1)}

Now we show that any $m$-message QIP system with two-sided bounded error
can be converted into an ${(m+1)}$-message QIP system with one-sided error of perfect completeness,
for any ${m \geq 2}$.

\begin{theorem}
For any polynomially bounded function~$\function{m}{\Nonnegative}{\Natural}$
and polynomial-time computable functions~$\function{c,s}{\Nonnegative}{[0,1]}$
satisfying ${m \geq 2}$ and ${c - s \geq 1/p}$ for some polynomially bounded function~$\function{p}{\Nonnegative}{\Natural}$,
\[
\QIP(m, c, s) \subseteq \QIP \Bigl( m+1, 1, 1 - \frac{(c-s)^2}{16} \Bigr).
\]
\label{Theorem: QIP(m) is in QIP_1(m+1)}
\end{theorem}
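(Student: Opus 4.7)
The plan is to simulate the Modified Reflection Procedure outlined in Section~\ref{Section: Overview for QIP(m) subseteq QIP(m+1)} on the unitary associated with the original interaction. Fix the original $m$-message verifier~$V$. For any prover~$P$ interacting with $V$ on input~$x$, the complete interaction induces a unitary~$U_x^P$ on the combined register~$(\sfV, \sfM, \sfP)$. Let $\Pi_\init$ be the projection onto states in which all qubits of $(\sfV, \sfM)$ are in~$\ket{0}$, and let $\Pi_\acc$ be the projection onto accepting states of~$V$. By the Marriott--Watrous observation, the largest acceptance probability attainable by any prover against~$V$ equals the supremum, taken over all provers~$P$, of the largest eigenvalue of the Hermitian operator~$M_x^P = \Pi_\init \conjugate{U_x^P} \Pi_\acc U_x^P \Pi_\init$.

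Before running the procedure, the new verifier~$V'$ privately augments her side of the original protocol using the multiplicative adjustment technique of Subsection~\ref{Subsection: Multiplicatively Adjusting Accepting Probabilities} (preceded by a standard sequential amplification to raise~$c$ above~$1/2$ if necessary). The resulting augmented interaction unitary~$\tilde U_x^P$ and Hermitian operator~$\tilde M_x^P$ satisfy: on yes-instances, some prover can arrange for $\tilde M_x^P$ to have~$1/2$ as an eigenvalue (by combining an optimal strategy with dilution, which is possible since $m \geq 2$ lets a prover decrease her acceptance probability to any smaller value); on no-instances, every eigenvalue of $\tilde M_x^P$ is bounded away from~$1/2$ by at least a constant multiple of $c - s$. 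The protocol then proceeds as follows: the prover sends as the first message a state claimed to be $\tilde U_x^P \ket{\init}$; then $V'$ flips a private coin and, with probability~$1/2$, runs the \textsc{Reflection Test} (apply the phase-flip $-\Pi_\acc + \Pi_\rej$, then collaboratively simulate~$\conjugate{\tilde U_x^P}$ via a backward execution, and reject iff the final state lies in~$\Pi_\init$), and with probability~$1/2$, runs the \textsc{Invertibility Test} (perform the same backward simulation but without the phase-flip, rejecting iff the final state does \emph{not} lie in~$\Pi_\init$). Each backward simulation uses exactly $m$ messages---$V'$ locally applies the inverses of her own circuits, while exchanging~$\sfM$ with the prover, who is expected to apply the inverses of her own unitaries---for a total message count of~$m + 1$.

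Completeness is immediate from Proposition~\ref{Proposition: completeness of Reflection Procedure} applied to $\tilde U_x^P$ and $\tilde M_x^P$: the honest prover sends the eigenstate of $\tilde M_x^P$ corresponding to eigenvalue~$1/2$ and honestly performs the inverses of her unitaries in the backward simulation, so the \textsc{Reflection Test} accepts with certainty, while the \textsc{Invertibility Test} accepts by the reversibility of the honest execution. The main obstacle lies in the soundness analysis. On no-instances, every eigenvalue of $\tilde M_x^P$ is bounded away from~$1/2$ by at least a constant multiple of $c - s$, so by Proposition~\ref{Proposition: soundness of Reflection Procedure} the idealized Reflection Procedure would reject with probability at least a constant multiple of~$(c-s)^2$. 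The crux of the argument, mirroring the sketch in Section~\ref{Section: Overview for QIP(m) subseteq QIP(m+1)}, is to show that this rejection guarantee is inherited by the Modified Reflection Procedure up to a factor of~$16$: concretely, a cheating prover who deviates from the honest backward simulation is caught by the \textsc{Invertibility Test} with probability comparable to the rejection probability of the idealized Reflection Procedure on the same initial state, so that averaging the rejection probabilities of the two tests over the verifier's coin flip yields the claimed lower bound of~$(c-s)^2/16$.
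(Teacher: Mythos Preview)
Your overall plan---have the prover send the post-interaction state and then run the \textsc{Modified Reflection Procedure} via a backward simulation---is exactly the paper's approach, and your message accounting is correct. The gap is in the preprocessing step where you arrange for the eigenvalue~$1/2$. Sequential amplification to raise $c$ above $1/2$ would multiply the number of messages and destroy the $m+1$ bound; and the technique of Subsection~\ref{Subsection: Multiplicatively Adjusting Accepting Probabilities} has the \emph{verifier} apply $W_q$, which she cannot do since the correct $q$ depends on the unknown maximal acceptance probability. What you actually need is Lemma~\ref{Lemma: making QIP perfectly rewindable}: the verifier first rescales the acceptance rule (accept with probability $1/(c+s)$ whenever the original verifier would accept, if $c+s \geq 1$; else the symmetric trick on the rejection side), shifting completeness and soundness to $\tfrac{1}{2} \pm \tfrac{c-s}{4}$ with no change in message count, and then adds a single-qubit register~$\sfB$ to the last exchange, accepting only if the original verifier would accept \emph{and} $\sfB$ contains $1$. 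The \emph{prover} then rotates $\sfB$ to hit acceptance probability exactly $1/2$. This is the ``dilution'' you allude to, but it is not automatic without the $\sfB$ register: a verifier who sometimes accepts regardless of the messages cannot be diluted to arbitrarily small values by the prover alone.

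For soundness, the step you leave as a sketch is precisely Proposition~\ref{Proposition: soundness of Modified Reflection Procedure}, and it does not follow from Proposition~\ref{Proposition: soundness of Reflection Procedure} by a simple averaging argument, because the dishonest prover's backward moves define a $\conjugate{U}$ that need not match any forward $U$. The actual argument writes each test's acceptance probability as a squared fidelity with the received state and applies Lemma~\ref{Lemma: F(a,b)^2 + F(b,c)^2 < 1 + F(a,c)} to bound their average by $\tfrac{1}{2}\bigl(1 + \sqrt{1 - 4\varepsilon^2}\bigr) \leq 1 - \varepsilon^2$, where $\varepsilon = \tfrac{c-s}{4}$; this yields the stated soundness $1 - (c-s)^2/16$. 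Finally, note that your definition of $\Pi_\init$ (all of $(\sfV,\sfM)$ in $\ket{0}$) is correct only when $m$ is even; for odd $m$ the prover holds $\sfM$ initially and $\Pi_\init$ should check $\sfV$ alone, as in Figure~\ref{Figure: Verifier's protocol for achieving perfect completeness for odd m}.
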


If $m$ is an odd-valued function whose values are at least three,
we can show a stronger statement that any $m$-message QIP system with two-sided bounded error
can be converted into another $m$-message QIP system with one-sided error of perfect completeness.

\begin{theorem}
For any polynomially bounded odd-valued function~$\function{m}{\Nonnegative}{2\Natural + 1}$
and polynomial-time computable functions~$\function{c,s}{\Nonnegative}{[0,1]}$
satisfying ${m \geq 3}$ and ${c - s \geq 1/p}$ for some polynomially bounded function~$\function{p}{\Nonnegative}{\Natural}$,
\[
\QIP(m, c, s) \subseteq \QIP \Bigl(m, 1, 1 - \frac{(c-s)^2}{16} \Bigr).
\]
\label{Theorem: QIP(m) is in QIP_1(m) for odd m}
\end{theorem}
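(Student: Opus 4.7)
The plan is to adapt the Modified Reflection Procedure framework (used for the even-$m$ case in Theorem~\ref{Theorem: QIP(m) is in QIP_1(m+1)}) to the original $m$-message $\QIP$ protocol, exploiting the odd parity of $m$ to save one message by absorbing the prover's first operation into its initial state.

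First, I would normalize the optimal acceptance probability of the original system so that, in the yes-instance case, some prover strategy achieves exactly $1/2$, while in the no-instance case no prover strategy can exceed $1/2 - (c-s)/2$. This is done by appending one auxiliary qubit and modifying the accepting condition to additionally require this qubit to be in state~$\ket{1}$; the honest prover rotates this qubit by the unitary~$W_q$ introduced in Subsection~\ref{Subsection: Multiplicatively Adjusting Accepting Probabilities}, choosing $q$ so that $p_x q = 1/2$. Since $m \geq 2$, the prover can implement this rotation as part of its own strategy without affecting the number of messages (if $c < 1/2$ one instead targets the midpoint~$(c+s)/2$, with an analogous analysis). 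Next, I observe that, because the first prover operation~$P_1$ acts on $(\sfM, \sfP)$ starting from $\sfM = \ket{0}$, it can be absorbed into the prover's choice of initial state by enlarging $\sfP$ by an ancilla of size~$\abs{\sfM}$; thus without loss of generality one may assume that the prover's first operation is trivial, so that the effective protocol unitary becomes ${U = V_{(m+1)/2} P_{(m+1)/2} V_{(m-1)/2} \cdots P_2 V_1}$, a sequence of $(m+1)/2$ verifier operations alternating with $(m-1)/2$ prover operations, beginning and ending with a verifier operation. Its adjoint $\conjugate{U}$ can therefore be simulated by a backward protocol using only $m-1$ messages, leaving the verifier holding both $\sfV$ and $\sfM$ at the end so that the $\Delta_0$ check can be performed locally.

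I then construct the new verifier~$V'$ whose protocol uses exactly $m$ messages with the same turn pattern as the original. Turn~$1$ has the prover send the post-$U$ state on $(\sfV, \sfM)$ while keeping $\sfP$; the verifier privately flips a fair coin and, in the Reflection-Test branch, applies the phase-flip $-\Pi_\acc + \Pi_\rej$ on $(\sfV, \sfM)$. The remaining $m-1$ turns interleave local verifier applications of $\conjugate{V_j}$ with prover applications of $\conjugate{P_j}$, in the order prescribed by $\conjugate{U}$; after the final local application of $\conjugate{V_1}$, the verifier measures $(\sfV, \sfM)$ in the computational basis and accepts iff the outcome is not in $\Delta_0$ (Reflection Test) or is in $\Delta_0$ (Invertibility Test). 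Perfect completeness is then immediate: for the Reflection Test, Proposition~\ref{Proposition: completeness of Reflection Procedure} applied to ${M = \Delta_0 \conjugate{U} \Pi_\acc U \Delta_0}$ (which has eigenvalue $1/2$ by the normalization step) guarantees acceptance with certainty; for the Invertibility Test, the honest prover's purported witness genuinely equals $U\ket{\phi_0}$ with ${\ket{\phi_0} \in \Delta_0}$, so the backward simulation returns the state to $\Delta_0$.

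The main obstacle will be the soundness analysis. The target bound is that, when every eigenvalue of~$M$ lies at distance at least $\varepsilon = (c-s)/2$ from $1/2$, no prover strategy makes $V'$ accept with probability greater than $1 - (c-s)^2/16$. The factor~$1/16$ arises from combining the $4\varepsilon^2 = (c-s)^2$ rejection guarantee of Proposition~\ref{Proposition: soundness of Reflection Procedure} with an additional factor of $1/4$ absorbed by the random choice between the Reflection and Invertibility tests. The delicate case is when the prover sends a state far from the image~$U\Delta_0$: here one must show that the Invertibility Test catches the prover with substantial probability, which I would prove by decomposing the prover's state into components along and orthogonal to~$U\Delta_0$ and bounding the rejection probability of each test on each component separately, so that the two tests together always achieve the claimed bound.
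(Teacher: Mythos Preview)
Your high-level plan matches the paper's proof: pass to a perfectly-rewindable system, absorb the prover's first move into its initial state, and run the Modified Reflection Procedure of Figure~\ref{Figure: Modified Reflection Procedure} so that the single backward simulation of $\conjugate{U}$ costs exactly $m-1$ messages after the initial witness.  Two points deserve correction.

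\textbf{Normalization.}  Your extra-qubit trick alone gives new soundness equal to the original~$s$, not $\tfrac{1}{2}-\tfrac{c-s}{2}$; your claimed gap holds only when $c+s\le 1$, and the parenthetical ``target the midpoint'' does not yield completeness exactly~$1/2$, which is what Proposition~\ref{Proposition: completeness of Reflection Procedure} needs.  The paper's Lemma~\ref{Lemma: making QIP perfectly rewindable} first has the \emph{verifier} rescale the acceptance probability by $1/(c+s)$ (or the symmetric factor when $c+s<1$) and only then appends the prover-controlled qubit; this produces completeness exactly~$1/2$ and soundness $\tfrac{1}{2}-\tfrac{c-s}{4}$ uniformly.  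Thus $\varepsilon=(c-s)/4$, and the final rejection bound is $\varepsilon^2=(c-s)^2/16$, consistent with the theorem (your arithmetic with $\varepsilon=(c-s)/2$ would give $(c-s)^2/4$, not $(c-s)^2/16$).

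\textbf{Soundness.}  Your proposed decomposition of $\ket{\psi}$ along and orthogonal to $U\Delta_0$ is not how the paper argues, and it faces a genuine obstacle: in the Reflection-Test rejection probability $\bignorm{\Delta_0\conjugate{U}(-\Pi_0+\Pi_1)\ket{\psi}}^2$, the $\parallel$ and $\perp$ components can interfere destructively, and you have no a~priori control over the $\perp$ contribution.  One can push the case analysis through, but it is delicate.  The paper instead proves Proposition~\ref{Proposition: soundness of Modified Reflection Procedure} by expressing each test's acceptance probability as a squared fidelity with $\ketbra{\psi}$ and invoking Lemma~\ref{Lemma: F(a,b)^2 + F(b,c)^2 < 1 + F(a,c)} to obtain
\[
p_\acc \;\le\; \tfrac{1}{2}\bigl(1+F(\ketbra{\alpha},\,VU\ketbra{\beta}\conjugate{U}\conjugate{V})\bigr)
\;\le\; \tfrac{1}{2}\bigl(1+\sqrt{1-4\varepsilon^2}\bigr) \;\le\; 1-\varepsilon^2,
\]
which bypasses any decomposition or cross-term analysis entirely.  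Note also that the $U$ appearing here is \emph{defined from the cheating prover's backward operations} (inverted to form a legal forward prover for the original system), which is what guarantees the eigenvalue bound on $M$; you use this implicitly but should make it explicit.
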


\begin{remark}
In fact, in Theorems~\ref{Theorem: QIP(m) is in QIP_1(m+1)}~and~\ref{Theorem: QIP(m) is in QIP_1(m) for odd m},
it is sufficient for the claims that the functions~$c$ and $s$ satisfy
${c - s \geq 2^{-p}}$ for some polynomially bounded function~$\function{p}{\Nonnegative}{\Natural}$.
\end{remark}

With the perfect parallel repetition theorem for general quantum interactive proofs~\cite{Gut09PhD},
the following corollaries immediately follow.

\begin{corollary}
For any polynomially bounded functions~$\function{m,p}{\Nonnegative}{\Natural}$
and polynomial-time computable functions~$\function{c,s}{\Nonnegative}{[0,1]}$
satisfying ${m \geq 2}$ and ${c - s \geq 1/q}$ for some polynomially bounded function~$\function{q}{\Nonnegative}{\Natural}$,
\[
\QIP(m, c, s) \subseteq \QIP(m+1, 1, 2^{-p}).
\]
\label{Corollary: QIP(m) is in QIP_1(m+1) with amplification}
\end{corollary}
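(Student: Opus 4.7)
The plan is to derive the corollary as a straightforward combination of Theorem~\ref{Theorem: QIP(m) is in QIP_1(m+1)} with the perfect parallel repetition theorem for general quantum interactive proofs due to Gutoski~\cite{Gut09PhD}. I would first invoke Theorem~\ref{Theorem: QIP(m) is in QIP_1(m+1)} on the given system to obtain the inclusion
\[
\QIP(m, c, s) \subseteq \QIP\Bigl(m+1,\, 1,\, 1 - \tfrac{(c-s)^2}{16}\Bigr).
\]
Since by hypothesis $c - s \geq 1/q$ for some polynomially bounded $q$, the resulting soundness parameter is bounded above by $1 - \frac{1}{16\, q(\abs{x})^2}$, i.e., it has an inverse-polynomial gap from $1$, while completeness is already perfect.

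Next, I would amplify soundness by parallel repetition. Let $V$ be the verifier of the $(m+1)$-message perfectly-complete system obtained above, and consider the $(m+1)$-message verifier~$V^{(t)}$ that runs $t(\abs{x})$ independent parallel copies of $V$ and accepts iff all copies accept, for some polynomially bounded function~$t$ to be determined. Clearly $V^{(t)}$ still uses $m+1$ messages and retains perfect completeness, as an honest prover can just play the honest strategy independently in each copy. For soundness, the perfect parallel repetition theorem for quantum interactive proofs~\cite{Gut09PhD} ensures that the maximum acceptance probability on any no-instance drops exactly to the product of the individual maxima, hence to at most
\[
\Bigl(1 - \tfrac{(c(\abs{x})-s(\abs{x}))^2}{16}\Bigr)^{t(\abs{x})} \leq \exp\Bigl(-\tfrac{t(\abs{x})}{16\, q(\abs{x})^2}\Bigr).
\]
Choosing $t(\abs{x}) = \lceil 16\, q(\abs{x})^2 \, p(\abs{x}) \ln 2 \rceil$, which is polynomially bounded and polynomial-time computable from $p$ and $q$, the right-hand side is at most $2^{-p(\abs{x})}$, as required.

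The only possible subtlety is ensuring that the perfect parallel repetition result of Gutoski indeed applies in this setting: it must handle arbitrary polynomial message numbers and general (not just three-message) systems, which is exactly the content of the cited theorem. I would briefly remark that the repetition preserves the message count because all copies are executed in lockstep on disjoint registers, so the number of turns of the combined verifier equals the number of turns of~$V$. No further obstacles arise; the remaining verification is routine bookkeeping of polynomial bounds.
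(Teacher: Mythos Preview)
Your proposal is correct and matches the paper's own argument: the paper simply states that the corollary follows immediately from Theorem~\ref{Theorem: QIP(m) is in QIP_1(m+1)} together with the perfect parallel repetition theorem for general quantum interactive proofs~\cite{Gut09PhD}, which is exactly the two-step derivation you give. Your explicit choice of the repetition count~$t$ and the accompanying bounds fill in details the paper leaves implicit, but the approach is the same.
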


\begin{corollary}
For any polynomially bounded odd-valued function~$\function{m}{\Nonnegative}{2\Natural + 1}$,
polynomially bounded function~$\function{p}{\Nonnegative}{\Natural}$,
and polynomial-time computable functions~$\function{c,s}{\Nonnegative}{[0,1]}$
satisfying ${m \geq 3}$ and ${c - s \geq 1/q}$ for some polynomially bounded function~$\function{q}{\Nonnegative}{\Natural}$,
\[
\QIP(m, c, s) \subseteq \QIP(m, 1, 2^{-p}).
\]
\label{Corollary: QIP(m) is in QIP_1(m) for odd m with amplification}
\end{corollary}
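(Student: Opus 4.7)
[Proof sketch of Corollary~\ref{Corollary: QIP(m) is in QIP_1(m) for odd m with amplification}]
The plan is to combine Theorem~\ref{Theorem: QIP(m) is in QIP_1(m) for odd m} with perfect parallel repetition for quantum interactive proofs. First I apply Theorem~\ref{Theorem: QIP(m) is in QIP_1(m) for odd m} to obtain the inclusion
\[
\QIP(m,c,s) \subseteq \QIP\Bigl(m, 1, 1 - \tfrac{(c-s)^2}{16}\Bigr),
\]
so any problem~$A$ in~$\QIP(m,c,s)$ has an $m$-message quantum interactive proof system with perfect completeness and soundness error at most ${1 - (c-s)^2/16}$. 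Because $c-s \geq 1/q$ with $q$ polynomially bounded, the soundness error is at most ${1 - 1/(16 q^2)}$, which is of the form ${1 - 1/r}$ for some polynomially bounded function~$\function{r}{\Nonnegative}{\Natural}$.

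Next, I invoke the perfect parallel repetition theorem for general quantum interactive proofs due to Gutoski~\cite{Gut09PhD}: running $k$ independent copies of an $m$-message quantum interactive proof system in parallel yields a new $m$-message quantum interactive proof system whose completeness is the $k$-th power of the original completeness and whose soundness is the $k$-th power of the original soundness. Crucially, the number of messages remains~$m$, since in each round the verifier (resp., the prover) can perform the $k$~round-actions simultaneously on the $k$~disjoint message registers. Applying this $k$~times to the system obtained above preserves perfect completeness (as ${1^k = 1}$) and reduces the soundness error to at most ${(1 - 1/r)^k \leq e^{-k/r}}$.

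Choosing ${k = \ceil{p \cdot r \cdot \ln 2}}$, which is polynomially bounded in the input length since both $p$ and $r$ are, makes the soundness error at most ${2^{-p}}$, which gives the desired inclusion ${\QIP(m,c,s) \subseteq \QIP(m, 1, 2^{-p})}$. The resulting verifier clearly runs in polynomial time since it just performs $k$~independent copies of a polynomial-time verifier's actions in parallel. The only non-routine ingredient is the appeal to Gutoski's perfect parallel repetition theorem, but this is stated in the excerpt as a tool we may assume; since the original soundness is strictly bounded below~$1$ by an inverse-polynomial amount and perfect completeness is maintained exactly, no delicate completeness-soundness trade-off arises here.
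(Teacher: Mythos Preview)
Your proof is correct and follows exactly the approach the paper indicates: it states that this corollary ``immediately follows'' from Theorem~\ref{Theorem: QIP(m) is in QIP_1(m) for odd m} combined with the perfect parallel repetition theorem for general quantum interactive proofs~\cite{Gut09PhD}, and your argument spells out precisely this combination with an appropriate choice of the repetition parameter~$k$.
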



\subsection{Modified Reflection Procedure}
\label{Subsection: Modified Reflection Procedure}

The \textsc{Reflection Procedure} in Section~\ref{Section: Reflection Procedure}
involves one application of $U$ and one application of $\conjugate{U}$.
Here we modify the procedure so that it involves one application of $\conjugate{U}$ only (and no application of $U$ is required).

To do this, one expects to receive a state just after Step~1 of the \textsc{Reflection Procedure},
and performs two tests,
called \textsc{Reflection Test} and \textsc{Invertibility Test}, respectively,
with equal probability without revealing which test the prover is undergoing.
In the \textsc{Reflection Test},
we simply perform Steps~2--4 of the \textsc{Reflection Procedure}
to finish the simulation of it,
whereas in the \textsc{Invertibility Test},
we apply $\conjugate{U}$ without performing the phase-flip
to check that the state received was a legal state that can appear
just after Step~1 of the \textsc{Reflection Procedure}.
The idea of making use of the \textsc{Invertibility Test}
has originally appeared in Ref.~\cite{KemKobMatVid09CC} when achieving perfect completeness in quantum multi-prover interactive proofs.
From another viewpoint, the modification here may be considered as applying
the ``halving technique'' in Ref.~\cite{KemKobMatVid09CC}
to the \textsc{Reflection Procedure},
the technique originally used to reduce the number of turns by (almost) half in quantum multi-prover interactive proofs.
We will take this view when analyzing the soundness of this procedure in Proposition~\ref{Proposition: soundness of Modified Reflection Procedure} below.
The procedure is summarized in Figure~\ref{Figure: Modified Reflection Procedure}.

\begin{figure}[t!]
\begin{algorithm*}{\textsc{Modified Reflection Procedure}}
\begin{step}
\item
  Receive a quantum register~$\sfQ$.
  Flip a fair coin,
  and proceed to the \textsc{Reflection Test} in Step~2 if it results in ``Heads'',
  and proceed to the \textsc{Invertibility Test} in Step~3 if it results in ``Tails''.
\item
  (\textsc{Reflection Test})\\
  Perform the following:
  \begin{step}
  \item
    Perform a phase-flip (i.e., multiply $-1$ in phase)
    if the state in $\sfQ$ belongs to the subspace corresponding to the projection~$\Pi_0$.
  \item
    Apply $\conjugate{U}$ to $\sfQ$.
  \item
    Reject if the state in $\sfQ$ belongs to the subspace corresponding to the projection~$\Delta_0$,
    and accept otherwise.
  \end{step}
\item
  (\textsc{Invertibility Test})\\
  Perform the following:
  \begin{step}
  \item
    Apply $\conjugate{U}$ to $\sfQ$.
  \item
    Accept if the state in $\sfQ$ belongs to the subspace corresponding to $\Delta_0$,
    and reject otherwise.
  \end{step}
\end{step} 
\end{algorithm*}
\caption{The \textsc{Modified Reflection Procedure}.}
\label{Figure: Modified Reflection Procedure}
\end{figure}

\begin{proposition}
Suppose that the Hermitian operator~${M = \Delta_0 \conjugate{U} \Pi_0 U \Delta_0}$
has an eigenvalue $1/2$.
Then there exists a quantum state given in Step~1 of the \textsc{Modified Reflection Procedure}
such that the procedure results in acceptance with certainty.
\label{Proposition: completeness of Modified Reflection Procedure}
\end{proposition}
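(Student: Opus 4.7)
The plan is to exhibit an explicit state that passes both the Reflection Test and the Invertibility Test deterministically, so that the outcome of the coin flip in Step~1 is irrelevant.

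First I would reuse the setup of Section~\ref{Section: Reflection Procedure}. Let $\ket{\phi_0} \in \calX_0$ be a normalized eigenstate of $M$ corresponding to the eigenvalue~$1/2$, so that $\Delta_0 \ket{\phi_0} = \ket{\phi_0}$ and, by the identity derived there,
\[
  \conjugate{U}(-\Pi_0 + \Pi_1) U \ket{\phi_0}
  =
  (1 - 2\lambda)\ket{\xi_0} - 2\sqrt{\lambda(1-\lambda)}\ket{\xi_1}
\]
with $\ket{\xi_0} \in \calX_0$, $\ket{\xi_1} \in \calX_1$, and $\lambda = 1/2$. The candidate state to be received in Step~1 of the \textsc{Modified Reflection Procedure} is then $\ket{\varphi} = U\ket{\phi_0}$, which is exactly the state that would have appeared just after Step~1 of the original \textsc{Reflection Procedure} when started from $\ket{\phi_0}$.

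Next I would analyze each branch separately. For the Invertibility Test (Step~3), applying $\conjugate{U}$ to $\ket{\varphi}$ returns $\ket{\phi_0}$, which lies in $\calX_0$ because $\Delta_0\ket{\phi_0} = \ket{\phi_0}$; hence this branch accepts with certainty. For the Reflection Test (Step~2), applying the phase-flip $-\Pi_0 + \Pi_1$ followed by $\conjugate{U}$ to $\ket{\varphi}$ yields, by the displayed identity with $\lambda = 1/2$, the vector $-\ket{\xi_1} \in \calX_1$; the substep that rejects on the $\Delta_0$-subspace therefore never fires, and this branch accepts with certainty as well.

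Since both branches accept with probability~$1$ on the input $\ket{\varphi}$, the overall procedure accepts with certainty regardless of the coin flip, which proves the proposition. The only subtle point is the observation that $\ket{\phi_0}$ is genuinely in $\calX_0$ (needed so that the Invertibility Test accepts), but this is immediate from $M\ket{\phi_0} = \frac{1}{2}\ket{\phi_0}$ and $\Delta_0 M = M$; everything else is a direct rereading of the computation already carried out in Section~\ref{Section: Reflection Procedure}, so I do not expect any real obstacle.
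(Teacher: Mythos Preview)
Your proof is correct and follows essentially the same approach as the paper: you supply the state $U\ket{\phi_0}$ for $\ket{\phi_0}$ an eigenvector of $M$ with eigenvalue $1/2$, and then verify that both the \textsc{Reflection Test} and the \textsc{Invertibility Test} accept with certainty, using exactly the same computations from Section~\ref{Section: Reflection Procedure}. The paper is slightly terser in that it simply invokes Proposition~\ref{Proposition: completeness of Reflection Procedure} for the \textsc{Reflection Test} branch rather than redoing the $\lambda=1/2$ calculation, but the content is identical.
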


\begin{proof}
The proof is almost straightforward.
Let $\ket{\psi^\ast}$ be an eigenvector of $M$ corresponding to its eigenvalue~$1/2$,
and consider the case where the state~${U \ket{\psi^\ast}}$ is received in $\sfQ$ in Step~1.

If the \textsc{Reflection Test} is performed,
this essentially simulates the original \textsc{Reflection Procedure}
with its received state being $\ket{\psi^\ast}$.
As in the case of Proposition~\ref{Proposition: completeness of Reflection Procedure},
the procedure results in acceptance with certainty in this case.

On the other hand, if the \textsc{Invertibility Test} is performed,
this produces the state~${\conjugate{U} U \ket{\psi^\ast} = \ket{\psi^\ast}}$
when entering Step~3.2.
As $\ket{\psi^\ast}$ is an eigenvector of $M$ with its corresponding eigenvalue~$1/2$, it holds that
\[
\Delta_0 \ket{\psi^\ast} = 2 \Delta_0 M \ket{\psi^\ast} = 2 M \ket{\psi^\ast} = \ket{\psi^\ast},
\]
and thus, Step~3.2 results in acceptance with certainty.

Hence, given the state~${U \ket{\psi^\ast}}$ in Step~1,
the procedure results in acceptance with certainty,
and the claim follows.
\end{proof}

\begin{proposition}
For any ${\varepsilon \in (0, \frac{1}{2}]}$,
suppose that none of the eigenvalues of the Hermitian operator~${M = \Delta_0 \conjugate{U} \Pi_0 U \Delta_0}$
is in the interval~${\bigl( \frac{1}{2} - \varepsilon, \frac{1}{2} + \varepsilon \bigr)}$.
Then, for any quantum state given in Step~1 of the \textsc{Modified Reflection Procedure},
the procedure results in rejection with probability at least $\varepsilon^2$.
\label{Proposition: soundness of Modified Reflection Procedure}
\end{proposition}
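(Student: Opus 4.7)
The plan is to express the total rejection probability as a single quadratic form in $\ket{\psi}$ and then diagonalize using Jordan's lemma. Let ${P = U \Delta_0 \conjugate{U}}$, which is itself a projection on $\calH$. For any state~$\ket{\psi}$ received in $\sfQ$, the Invertibility Test rejects with probability ${\bra{\psi}(I - P)\ket{\psi}}$, while the Reflection Test rejects with probability ${\bra{\psi}(I - 2\Pi_0) P (I - 2\Pi_0)\ket{\psi}}$. Averaging these with equal weights and expanding ${(I - 2\Pi_0) P (I - 2\Pi_0) = P - 2\Pi_0 P - 2 P \Pi_0 + 4 \Pi_0 P \Pi_0}$, the overall rejection probability becomes ${\bra{\psi} A \ket{\psi}}$, where
\[
A = \frac{1}{2} I - \Pi_0 P - P \Pi_0 + 2 \Pi_0 P \Pi_0.
\]
It therefore suffices to show that the smallest eigenvalue of $A$ is at least $\varepsilon^2$.

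Next I would apply Jordan's lemma to the pair of projections ${(\Pi_0, P)}$, obtaining an orthogonal decomposition ${\calH = \bigoplus_j \calH_j}$ where each $\calH_j$ is invariant under both projections and has dimension one or two. On any one-dimensional block, both $\Pi_0$ and $P$ act as scalars in $\{0,1\}$, and direct substitution into the expression for $A$ gives ${A|_{\calH_j} = \frac{1}{2}}$, which exceeds $\varepsilon^2$ since ${\varepsilon \leq 1/2}$. On a two-dimensional block, one can choose an orthonormal basis ${\{\ket{e_1}, \ket{e_2}\}}$ of $\calH_j$ so that ${\Pi_0 = \ketbra{e_1}}$ and $P$ is the rank-one projection onto ${\cos\theta_j \ket{e_1} + \sin\theta_j \ket{e_2}}$ for some angle ${\theta_j \in (0, \pi/2)}$; a direct matrix calculation then shows that
\[
A \big|_{\calH_j}
=
\begin{pmatrix}
1/2 & -\cos\theta_j \sin\theta_j \\
-\cos\theta_j \sin\theta_j & 1/2
\end{pmatrix},
\]
whose eigenvalues are ${\frac{1}{2} \pm \frac{1}{2} \abs{\sin(2\theta_j)}}$.

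The crucial step, and the place where the hypothesis on $M$ enters, is to identify which eigenvalues of $M$ arise from which Jordan blocks. For this I would use the singular-value duality ${M = (\Pi_0 U \Delta_0)^\dagger (\Pi_0 U \Delta_0)}$ and ${\Pi_0 P \Pi_0 = (\Pi_0 U \Delta_0)(\Pi_0 U \Delta_0)^\dagger}$, which implies that $M$ and $\Pi_0 P \Pi_0$ share their nonzero eigenvalues. Inside the $j$-th two-dimensional block, ${\Pi_0 P \Pi_0}$ acts as ${\mathrm{diag}(\cos^2 \theta_j, 0)}$, so the eigenvalues of $M$ contributed by the two-dimensional blocks are exactly ${\cos^2 \theta_j}$, while one-dimensional blocks can contribute only $0$ or $1$ to $M$, neither of which lies in the forbidden interval ${(1/2 - \varepsilon, 1/2 + \varepsilon)}$ when ${\varepsilon \leq 1/2}$. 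The hypothesis thus forces ${\abs{\cos^2 \theta_j - 1/2} \geq \varepsilon}$, equivalently ${\abs{\cos(2\theta_j)} \geq 2\varepsilon}$, and hence ${\abs{\sin(2\theta_j)} \leq \sqrt{1 - 4 \varepsilon^2}}$. The smallest eigenvalue of $A$ on $\calH_j$ is therefore at least ${\frac{1}{2}(1 - \sqrt{1 - 4 \varepsilon^2})}$, which is at least $\varepsilon^2$ by the elementary inequality ${(1 - 2 \varepsilon^2)^2 \geq 1 - 4 \varepsilon^2}$.

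Taking the minimum over all blocks gives $\lambda_{\min}(A) \geq \varepsilon^2$, so ${\bra{\psi} A \ket{\psi} \geq \varepsilon^2}$ for every unit vector $\ket{\psi}$, proving the claim. The main obstacle I anticipate is not the matrix arithmetic itself but the correct identification of the spectrum of $M$ with the angles in the Jordan decomposition of ${(\Pi_0, P)}$; once that singular-value duality is in hand, everything else reduces to routine computation plus a one-line elementary bound.
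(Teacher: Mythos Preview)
Your argument is correct and takes a genuinely different route from the paper. The paper writes the acceptance probabilities of the two tests as squared fidelities, applies the inequality ${F(\rho,\sigma)^2 + F(\sigma,\xi)^2 \leq 1 + F(\rho,\xi)}$ (Lemma~\ref{Lemma: F(a,b)^2 + F(b,c)^2 < 1 + F(a,c)}), and then invokes the already-established soundness of the \textsc{Reflection Procedure} (Proposition~\ref{Proposition: soundness of Reflection Procedure}) to bound the cross-fidelity by ${\sqrt{1-4\varepsilon^2}}$; the conclusion ${p_\acc \leq \frac{1}{2}(1+\sqrt{1-4\varepsilon^2}) \leq 1-\varepsilon^2}$ then follows. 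You instead write the total rejection probability as a single quadratic form in the received state, apply Jordan's lemma directly to the pair ${(\Pi_0, U\Delta_0\conjugate{U})}$, and read off the block eigenvalues ${\frac{1}{2}\pm\frac{1}{2}\abs{\sin(2\theta_j)}}$. Your identification of the nonzero spectrum of $M$ with $\{\cos^2\theta_j\}$ via the duality between ${(\Pi_0 U\Delta_0)^\dagger(\Pi_0 U\Delta_0)}$ and ${(\Pi_0 U\Delta_0)(\Pi_0 U\Delta_0)^\dagger}$ is exactly right, and the final inequality ${\frac{1}{2}(1-\sqrt{1-4\varepsilon^2})\geq\varepsilon^2}$ is the same elementary step the paper uses. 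Your approach is more self-contained (no fidelity lemma, no appeal to the earlier proposition) and makes the bound visible as a spectral gap; the paper's approach is more modular, reusing the \textsc{Reflection Procedure} analysis as a black box, which is convenient when the same template is later applied in the multi-prover setting.
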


\begin{proof}
The proof is similar to the proofs of Lemmas~4.1~and~5.1 in Ref.~\cite{KemKobMatVid09CC}.
Let $\ket{\psi}$ be any state received in $\sfQ$ in Step~1.
Denote the unitary transformation~${\conjugate{U} (- \Pi_0 + \Pi_1)}$ by $V$,
and let
\[
\ket{\alpha}
=
\frac{\Delta_1 V \ket{\psi}}{\norm{\Delta_1 V \ket{\psi}}},
\quad
\ket{\beta}
=
\frac{\Delta_0 \conjugate{U} \ket{\psi}}{\norm{\Delta_0 \conjugate{U} \ket{\psi}}}.
\]
Then
\[
\norm{\Delta_1 V \ket{\psi}}
=
\frac{1}{\norm{\Delta_1 V \ket{\psi}}}
\bigabs{\bra{\psi} \conjugate{V} \Delta_1 V \ket{\psi}}
=
F \bigl( \ketbra{\alpha}, V \ketbra{\psi} \conjugate{V} \bigr)
=
F \bigl( \conjugate{V} \ketbra{\alpha} V, \ketbra{\psi} \bigr),
\]
and thus, the probability~$p_1$ of acceptance when the \textsc{Reflection Test} is performed is given by
\[
p_1 = F \bigl( \conjugate{V} \ketbra{\alpha} V, \ketbra{\psi} \bigr)^2.
\]
Similarly, the probability~$p_2$ of acceptance when the \textsc{Invertibility Test} is performed is given by
\[
p_2 = F \bigl( U \ketbra{\beta} \conjugate{U}, \ketbra{\psi} \bigr)^2.
\]
Hence, the probability~$p_\acc$ of acceptance when the received state in Step~1 was $\ket{\psi}$ is given by
\[
p_\acc
=
\frac{1}{2} (p_1 + p_2)
=
\frac{1}{2}
\Bigl(
  F \bigl( \conjugate{V} \ketbra{\alpha} V, \ketbra{\psi} \bigr)^2
  +
  F \bigl( U \ketbra{\beta} \conjugate{U}, \ketbra{\psi} \bigr)^2
\Bigr).
\]
It follows from Lemma~\ref{Lemma: F(a,b)^2 + F(b,c)^2 < 1 + F(a,c)} that
\[
p_\acc
\leq
\frac{1}{2} \Bigl( 1 + F \bigl( \conjugate{V} \ketbra{\alpha} V, U \ketbra{\beta} \conjugate{U} \bigr) \Bigr)
=
\frac{1}{2} \Bigl( 1 + F \bigl( \ketbra{\alpha}, V U \ketbra{\beta} \conjugate{U} \conjugate{V} \bigr) \Bigr).
\]

Now notice that $\ket{\beta}$ is a state in $\calX_0$,
and thus,
\[
\norm{\Delta_1 VU \ket{\beta}}^2 \leq 1 - 4 \varepsilon^2,
\]
since
${
\norm{\Delta_0 VU \ket{\beta}}^2 \geq 4 \varepsilon^2
}$
from the analysis on the \textsc{Reflection Procedure} in the proof of Proposition~\ref{Proposition: soundness of Reflection Procedure}.
Hence, using ${\Delta_1 \ket{\alpha} = \ket{\alpha}}$,
\[
F \bigl( \ketbra{\alpha}, V U \ketbra{\beta} \conjugate{U} \conjugate{V} \bigr)
=
\bigabs{\bra{\alpha} V U \ket{\beta}}
=
\bigabs{\bra{\alpha} \Delta_1 V U \ket{\beta}}
\leq
\bignorm{\Delta_1 V U \ket{\beta}}
\leq
\sqrt{1 - 4 \varepsilon^2},
\]
and thus,
\[
p_\acc
\leq
\frac{1}{2} + \frac{\sqrt{1 - 4 \varepsilon^2}}{2}
\leq
\frac{1}{2} + \frac{1 - 2 \varepsilon^2}{2}
=
1 - \varepsilon^2.
\] 
Therefore, the procedure results in rejection with probability at least $\varepsilon^2$, as claimed.
\end{proof}


\subsection{Perfectly Rewindable QIPs}
\label{Subsection: Perfectly Rewindable QIPs}

Here we introduce the notion of \emph{perfectly rewindable} QIP systems.
The concept of perfectly rewindable systems was originally introduced
for quantum multi-prover interactive proofs in Ref.~\cite{KemKobMatVid09CC},
and the notion here is the single-prover version of it as a special case.

\begin{definition}
Given a polynomially bounded function~$\function{m}{\Nonnegative}{\Natural}$
and a function~$\function{s}{\Nonnegative}{[0,1]}$ satisfying ${s < \frac{1}{2}}$,
a promise problem~${A = \{ A_\yes, A_\no \}}$ has a perfectly rewindable $m$-message quantum interactive proof system with soundness~$s$
iff there exists an $m$-message polynomial-time quantum verifier~$V$
such that, for every input~$x$:
\begin{description}
\item[\textnormal{(Perfect Rewindability)}]
  if ${x \in A_\yes}$,
  there exists an $m$-message quantum prover~$P$
  such that the maximum probability that $V$ accepts $x$ when communicating with $P$ is exactly $1/2$,
  where the maximum is taken over all possible initial states~$\rho_x$ of $P$,
\item[\textnormal{(Soundness)}]
  if ${x \in A_\no}$,
  for any $m$-message quantum prover~$P'$
  and any initial state~$\rho'_x$ of $P'$ prepared,
  $V$ accepts $x$
  with probability at most ${s(\abs{x})}$.
\end{description}
\label{Definition: perfectly rewindable systems}
\end{definition}
Note that in the perfect rewindability property
we first fix the transformations of the prover,
and then maximize over all legal initial states,
which hence have a fixed dimension.
We first show how to modify any general QIP system
to a perfectly rewindable one without changing the number of messages.

\begin{lemma}
Let $\function{m}{\Nonnegative}{\Natural}$ be a polynomially bounded function
and let $\function{c,s}{\Nonnegative}{[0,1]}$ be polynomial-time computable functions
satisfying ${c - s \geq 1/p}$ for some polynomially bounded function~$\function{p}{\Nonnegative}{\Natural}$.
Then, any promise problem~${A = (A_\yes, A_\no)}$ in ${\QIP(m, c, s)}$
has a perfectly rewindable $m$-message quantum interactive proof system
with soundness~${\frac{1}{2} - \frac{c-s}{4}}$.
\label{Lemma: making QIP perfectly rewindable}
\end{lemma}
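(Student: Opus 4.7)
My plan combines Gutoski's perfect parallel repetition with a dyadic verifier-side calibration and a prover-side dilution of the optimal strategy. First, I apply perfect parallel repetition~\cite{Gut09PhD} to the original $\QIP(m,c,s)$ system to obtain an amplified $\QIP(m, c', s')$ system with $c' \geq 2/3$ and $s'$ polynomially small, without increasing the number of messages. Then I define the modified verifier~$V'$ to run the amplified~$V$ unchanged (exchanging the same $m$~messages) while maintaining a private two-qubit ancilla prepared to~$\ket{++}$, and computing a single bit $\sfA = \sfA_1 \vee \sfA_2$ via a small Hadamard+Toffoli gadget, giving $\Pr[\sfA = 1] = 3/4$ exactly. $V'$ accepts iff the amplified~$V$ accepts AND $\sfA = 1$, so that $\Pr[V'\text{ accepts}] = (3/4) \cdot \Pr[V\text{ accepts}]$. $V'$ also augments its first message to the prover with a fresh ancilla qubit~$\sfB$ initialized to~$\ket{0}$, intended for the honest prover's dilution step.

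For perfect rewindability on yes-instances, I construct the honest prover~$P_\mathrm{yes}$ whose maximum $V$-acceptance over initial states is exactly~$2/3$. Given the optimal amplified prover~$P_\mathrm{opt}$ with max $V$-acceptance $p_\mathrm{opt} \geq c' \geq 2/3$, $P_\mathrm{yes}$ applies the rotation~$W_\theta$ to~$\sfB$ with $\cos^2\theta = 2/(3p_\mathrm{opt}) \in [2/3, 1]$, then runs $P_\mathrm{opt}$ controlled on $\sfB = \ket{0}$ (with a trivial reject strategy controlled on $\sfB = \ket{1}$). Since~$\sfB$ lives in the verifier-initialized message register, the adversarial initial state~$\rho$ of the prover cannot tamper with it, so the max $V$-acceptance of~$P_\mathrm{yes}$ over~$\rho$ equals $\cos^2(\theta) \cdot p_\mathrm{opt} = 2/3$ exactly, giving max $V'$-acceptance $(3/4)(2/3) = 1/2$ as required. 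For soundness on no-instances, any prover's $V$-acceptance is at most~$s'$, so the $V'$-acceptance is at most $(3/4)s' \leq 1/2 - (c-s)/4$ provided the amplification yields $s' \leq 1/3 - (c-s)/6$, which is achievable by polynomially many parallel copies since $c - s \geq 1/p$.

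The main obstacle is the ``protection'' of the ancilla~$\sfB$ used for dilution: if~$\sfB$ lived in the prover's private register, an adversarial~$\rho$ could pre-rotate~$\sfB$ so as to cancel~$W_\theta$, inflating the max $V$-acceptance beyond $2/3$ and violating the ``exactly $1/2$'' requirement of perfect rewindability. Placing~$\sfB$ in the message register (initialized by~$V'$ to~$\ket{0}$ and first touched by the prover only during the protocol) resolves this. Exact implementation of~$W_\theta$ by~$P_\mathrm{yes}$ is immediate since the prover is all-powerful, while the verifier's~$3/4$ probability is realized exactly by the dyadic Hadamard+Toffoli construction, consistent with the gate-set assumption in the paper.
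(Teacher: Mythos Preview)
Your amplification step is the main gap. Gutoski's perfect parallel repetition theorem concerns the \emph{AND} predicate: running $k$ copies in parallel and accepting iff all accept takes the maximum acceptance probability from $p$ to $p^k$. This maps $(c,s)$ to $(c^k,s^k)$, so it reduces soundness but \emph{also reduces completeness} whenever $c<1$. You therefore cannot obtain $c'\geq 2/3$ this way when, say, $c=\tfrac{1}{2}+\tfrac{1}{p}$ and $s=\tfrac{1}{2}$; after any number of repetitions $c^k<\tfrac{2}{3}$. The claim ``achievable by polynomially many parallel copies since $c-s\geq 1/p$'' is simply false for the AND rule, and threshold or OR variants do not follow from Gutoski's product formula (and have their own soundness issues: a union bound gives soundness $\leq \binom{k}{t}s^t$, which need not be small when $s$ is close to $1$). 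Amplifying completeness in $\QIP(m)$ without changing $m$ is not a one-line consequence of parallel repetition.

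A secondary issue is your dilution mechanism. You assume a ``trivial reject strategy'' on the $\sfB=\ket{1}$ branch achieving acceptance $0$, but a general verifier may have strictly positive minimum acceptance probability (the prover cannot force rejection if, e.g., the verifier accepts with some fixed positive probability independent of the messages). Moreover, since both branches share the same adversarial initial state $\rho$, the maximum over $\rho$ of $\cos^2\theta\cdot p_{\mathrm{opt}}(\rho)+\sin^2\theta\cdot p_{\mathrm{trivial}}(\rho)$ need not factor as you assume.

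The paper sidesteps both problems. Instead of amplifying, it \emph{rescales} the verifier's acceptance probability by a factor $\approx \tfrac{1}{c+s}$ (or adds a bias when $c+s<1$), directly obtaining completeness $\geq\tfrac{1}{2}+\tfrac{c-s}{4}$ and soundness $\leq\tfrac{1}{2}-\tfrac{c-s}{4}$; the rescaling need only be approximate on the verifier's side, since the all-powerful prover knows the exact circuit and can compensate precisely. For the dilution, the paper's verifier sends a fresh qubit $\sfB$ in the \emph{last} verifier message and accepts only if $\sfB=1$; the honest prover applies a single rotation to $\sfB$ setting $\Pr[\sfB=1]=\tfrac{1}{2p_{\max}}$, so the acceptance probability factors cleanly as $p_{\max}\cdot\tfrac{1}{2p_{\max}}=\tfrac{1}{2}$ regardless of $\rho$, with no need for a controlled strategy or a trivial-reject branch.
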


\begin{proof}
Let ${A = (A_\yes, A_\no)}$ be a problem in ${\QIP(m, c, s)}$
and let $V$ be the corresponding $m$-message quantum verifier.
We first modify $V$ to obtain another $m$-message quantum verifier $V'$
that witnesses the inclusion~${A \in \QIP \big( m, \frac{1}{2} + \frac{c-s}{4}, \frac{1}{2} - \frac{c-s}{4} \bigr)}$.
This can be done via a standard technique as follows.
Fix an input~$x$.
The new verifier~$V'$ behaves in a manner exactly same as $V$,
except for the acceptance condition.
If ${c(\abs{x}) + s(\abs{x}) \geq 1}$,
$V'$ accepts with probability~$\frac{1}{c(\abs{x}) + s(\abs{x})}$
when the final state in the system would make $V$ accept
(and reject otherwise).
Thus, $V'$ accepts ${x \in A_\yes}$ with probability at least
${
  \frac{c(\abs{x})}{c(\abs{x}) + s(\abs{x})}
  =
  \frac{1}{2} \bigl( 1 + \frac{c(\abs{x}) - s(\abs{x})}{c(\abs{x}) + s(\abs{x})} \bigr)
}$,
while accepts ${x \in A_\no}$ with probability at most
${
  \frac{s(\abs{x})}{c(\abs{x}) + s(\abs{x})}
  =
  \frac{1}{2} \bigl( 1 - \frac{c(\abs{x}) - s(\abs{x})}{c(\abs{x}) + s(\abs{x})} \bigr)
}$.
Similarly, if ${c(\abs{x}) + s(\abs{x}) < 1}$,
letting ${\varepsilon(\abs{x}) = 1 - c(\abs{x})}$ and ${\delta(\abs{x}) = 1 - s(\abs{x})}$,
$V'$ rejects with probability~${\frac{1}{\varepsilon(\abs{x}) + \delta(\abs{x})} = \frac{1}{2 - c(\abs{x}) - s(\abs{x})}}$
when the final state in the system would make $V$ reject
(and accept otherwise).
Thus, $V'$ rejects ${x \in A_\yes}$ with probability at most
${
  \frac{\varepsilon(\abs{x})}{\varepsilon(\abs{x}) + \delta(\abs{x})}
  =
  \frac{1}{2} \bigl( 1 - \frac{\delta(\abs{x}) - \varepsilon(\abs{x})}{\varepsilon(\abs{x}) + \delta(\abs{x})} \bigr)
  =
  \frac{1}{2} \bigl( 1 - \frac{c(\abs{x}) - s(\abs{x})}{2 - c(\abs{x}) - s(\abs{x})} \bigr)
}$,
while $V'$ rejects ${x \in A_\no}$ with probability at least
${
  \frac{\delta(\abs{x})}{\varepsilon(\abs{x}) + \delta(\abs{x})}
  =
  \frac{1}{2} \bigl( 1 + \frac{\delta(\abs{x}) - \varepsilon(\abs{x})}{\varepsilon(\abs{x}) + \delta(\abs{x})} \bigr)
  =
  \frac{1}{2} \bigl( 1 + \frac{c(\abs{x}) - s(\abs{x})}{2 - c(\abs{x}) - s(\abs{x})} \bigr)
}$.
Taking it into account that,
with a given finite-size gate set available for the verifier,
it may not be possible to accept
with probability exactly
$\frac{1}{c(\abs{x}) + s(\abs{x})}$
in the case~${c(\abs{x}) + s(\abs{x}) \geq 1}$,
or to reject with probability exactly
${\frac{1}{\varepsilon(\abs{x}) + \delta(\abs{x})} = \frac{1}{2 - c(\abs{x}) - s(\abs{x})}}$
in the case~${c(\abs{x}) + s(\abs{x}) < 1}$,
we actually consider another verifier~$V''$ who approximately performs the transformations of $V'$ with sufficient accuracy,
where the transformations of $V''$ are exactly implementable with the given finite-size gate set available for the verifier.
As both ${c(\abs{x}) + s(\abs{x})}$ and ${2 - c(\abs{x}) - s(\abs{x})}$ are at most ${2 - \frac{1}{p}}$,
the bounds obtained above are sufficient to claim that
the $m$-message system with the verifier~$V''$
has completeness~${\frac{1}{2} + \frac{c-s}{4}}$
and soundness~${\frac{1}{2} - \frac{c-s}{4}}$.

The rest of the proof is essentially the same as the proof of Lemma~3.2 in Ref.~\cite{KemKobMatVid09CC}.
We further modify $V''$ to construct another $m$-message quantum verifier $W$
for a perfectly rewindable proof system for $A$.
The new verifier $W$ prepares a single-qubit register~$\sfB$
in addition to the register~$\sfV$ which corresponds to the space used by $V''$.
The qubit in $\sfB$ is initialized to $\ket{0}$.
$W$ behaves exactly in the same manner as $V''$ does,
except that, in addition to all actions $V''$ would do,
$W$ also sends $\sfB$ to the prover in the last message from the verifier
and receives $\sfB$ from the prover in the last message from the prover.
As for the final decision,
$W$ accepts if and only if the content of $\sfV$ would make $V''$ accept
\emph{and} $\sfB$ contains $1$.
Notice that $W$ accepts only if $V''$ would accept,
and thus, the soundness is obviously at most ${\frac{1}{2} + \frac{c-s}{4}}$.

For perfect rewindability, we slightly modify the protocol of the honest prover in the case ${x \in A_\yes}$.
Given a protocol of the honest prover~$P$ in the system with $V''$
and an initial state~$\ket{\psi_\init}$ in the system with $V''$
that achieves the maximal acceptance probability~$p_{\max}$
when $V''$ communicating with this $P$,
we construct a protocol of the honest prover~$Q$ in the system with $W$ as follows.
$Q$ uses $\ket{\psi_\init}$ as the initial state
and behaves exactly in the same manner as $P$ does,
except that, upon receiving the last message from $W$,
$Q$ applies to the qubit in $\sfB$
the one-qubit unitary transformation~$U$ satisfying
\[
U \colon \ket{0} \mapsto \sqrt{1 - \frac{1}{2 p_{\max}}} \ket{0} + \sqrt{\frac{1}{2 p_{\max}}} \ket{1},
\]
in addition to all what the original $P$ would do. 
From the construction it is obvious that the maximum accepting probability of $W$ when communicating with $Q$
is exactly equal to $\frac{1}{2}$
and that this maximum is achieved when $Q$ uses $\ket{\psi_\init}$ as the initial state.
Finally, as the transformations of $V''$ are exactly implementable with the given finite-size gate set available for the verifier,
so are the transformations of $W$.
\end{proof}

\begin{remark}
In fact, in Lemma~\ref{Lemma: making QIP perfectly rewindable},
it is sufficient for the claim that the functions~$c$ and $s$ satisfy
${c - s \geq 2^{-p}}$ for some polynomially bounded function~$\function{p}{\Nonnegative}{\Natural}$.
\end{remark}


\subsection{Proofs of Theorems~\ref{Theorem: QIP(m) is in QIP_1(m+1)}~and~\ref{Theorem: QIP(m) is in QIP_1(m) for odd m}}
\label{Subsection: Proofs of Theorems QIP(m) is in QIP_1(m+1) and QIP(m) is in QIP_1(m) for odd m}

Now we are ready to show Theorems~\ref{Theorem: QIP(m) is in QIP_1(m+1)}~and~\ref{Theorem: QIP(m) is in QIP_1(m) for odd m}.
First we prove Theorem~\ref{Theorem: QIP(m) is in QIP_1(m) for odd m},
assuming that $m$ is an odd-valued function and ${m \geq 3}$.
The case of general $m$ is proved in the same manner as this special case,
except that the number of messages increases by one when ${m(\abs{x})}$ is even,
which gives Theorem~\ref{Theorem: QIP(m) is in QIP_1(m+1)}.

\begin{proof}[Proof of Theorem~\ref{Theorem: QIP(m) is in QIP_1(m) for odd m}]
As $m$ is an odd-valued function and ${m \geq 3}$,
there is a polynomially bounded function~$\function{r}{\Nonnegative}{\Natural}$
such that ${m = 2r + 1}$.
Let ${A = (A_\yes, A_\no)}$ be in ${\QIP(2r+1,c,s)}$.
Then from Lemma~\ref{Lemma: making QIP perfectly rewindable},
$A$ has a perfectly rewindable ${(2r+1)}$-message quantum interactive proof system
with soundness ${\frac{1}{2} - \frac{c-s}{4}}$.
Let $V$ be the verifier of this perfectly rewindable ${(2r+1)}$-message quantum interactive proof system.
We construct another ${(2r+1)}$-message quantum verifier~$W$
of a new quantum interactive proof system for $A$.

Fix an input~$x$.
Let $\sfV$ be the quantum register consisting of private qubits used by the original verifier~$V$,
and let $\sfM$ be the quantum register consisting of qubits used for communications in the original proof system.
Let $V_{x,j}$ be the $j$th transformation of $V$,
for each ${j \in \{1, \ldots, r(\abs{x})+1 \}}$,
acting over ${(\sfV, \sfM)}$.
The new verifier~$W$ uses the same registers~$\sfV$~and~$\sfM$ as the original verifier~$V$.
$W$ first receives the two registers~$\sfV$~and~$\sfM$,
expecting that the state in ${(\sfV, \sfM)}$ forms what $V$ would have
after the last message from a prover had been received in the original proof system.
$W$ then performs one of the two tests,
called \textsc{Reflection Test} and \textsc{Invertibility Test},
chosen uniformly at random.
In the \textsc{Reflection Test},
$W$ first performs a phase-flip if the state in ${(\sfV, \sfM)}$ would cause $V$
to accept when the last transformation~$V_{x, r(\abs{x})+1}$ of $V$ was performed,
and then moves to a backward simulation of the original system.
$W$ accepts when the backward simulation \emph{does not} produce a legal initial state of the original system.
In the \textsc{Invertibility Test},
$W$ just immediately moves to a backward simulation of the original system.
This time, $W$ accepts when the backward simulation \emph{does} produce a legal initial state of the original system.
The exact protocol is described in Figure~\ref{Figure: Verifier's protocol for achieving perfect completeness for odd m}.
Notice that the number of messages in this system is indeed
${1 + 1 + 2(r(\abs{x}) - 1) + 1 = 2r(\abs{x}) + 1 = m(\abs{x})}$.

\begin{figure}[t!]
\begin{protocol*}{Verifier's Protocol for Achieving Perfect Completeness (Odd-Number-Message Case)}
\begin{step}
\item
  Receive quantum registers~$\sfV$~and~$\sfM$.
\item
  Choose ${b \in \Binary}$ uniformly at random.
  If ${b=0}$, move to the \textsc{Reflection Test}
  described in Step~\ref{Reflection Test},
  while if ${b=1}$, move to the \textsc{Invertibility Test}
  described in Step~\ref{Invertibility Test}.
\item
  (\textsc{Reflection Test})
  \begin{step}
  \item
    Apply $V_{x, r(\abs{x})+1}$ to the state in ${(\sfV, \sfM)}$.
    Perform a phase-flip (i.e., multiply $-1$ in phase)
    if the content of ${(\sfV, \sfM)}$ corresponds to an accepting state of the original system.
    Apply $\conjugate{V_{x, r(\abs{x})+1}}$ to the state in ${(\sfV, \sfM)}$,
    and send $\sfM$ to the prover.
  \item
    For ${j = r(\abs{x})}$ down to $2$, do the following:\\
    Receive $\sfM$ from the prover.
    Apply $\conjugate{V_{x, j}}$ to the state in ${(\sfV, \sfM)}$,
    and send $\sfM$ to the prover.
  \item
    Receive $\sfM$ from the prover.
    Apply $\conjugate{V_{x, 1}}$ to the state in ${(\sfV, \sfM)}$.
    Reject if all the qubits in $\sfV$ are in state~$\ket{0}$,
    and accept otherwise.
  \end{step}
  \label{Reflection Test}
\item
  (\textsc{Invertibility Test})
  \begin{step}
  \item
    Send $\sfM$ to the prover.
  \item
    For ${j = r(\abs{x})}$ down to $2$, do the following:\\
    Receive $\sfM$ from the prover.
    Apply $\conjugate{V_{x, j}}$ to the state in ${(\sfV, \sfM)}$,
    and send $\sfM$ to the prover.
  \item
    Receive $\sfM$ from the prover.
    Apply $\conjugate{V_{x, 1}}$ to the state in ${(\sfV, \sfM)}$.
    Accept if all the qubits in $\sfV$ are in state~$\ket{0}$,
    and reject otherwise.
  \end{step}
  \label{Invertibility Test}
\end{step}
\end{protocol*}
\caption{Verifier's protocol for achieving perfect completeness with ${m = 2r + 1}$.}
\label{Figure: Verifier's protocol for achieving perfect completeness for odd m}
\end{figure}

For the completeness, suppose that $x$ is in $A_\yes$.

As the original system was perfectly rewindable,
there exists a ${(2r+1)}$-message quantum prover~$P$ in the original system
such that the maximum probability that $V$ accepts $x$ when communicating with this $P$ is exactly $1/2$,
where the maximum is taken over all possible initial states of $P$.
Let $\sfP$ be the quantum register consisting of the private qubits of this $P$,
and let $P_{x, j}$ be the $j$th transformation of $P$,
for each ${j \in \{1, \ldots, r(\abs{x})+1 \}}$,
acting over ${(\sfM, \sfP)}$.
Let $\ket{\psi_x^\ast}$ be an optimal initial state in ${(\sfM, \sfP)}$
with which $P$ achieves the accepting probability~$1/2$
(note that $P$ possesses the message register~$\sfM$ at the beginning of the protocol, and that there always exists an optimal initial state that is pure).

Denote the Hilbert spaces associated with $\sfV$, $\sfM$, and $\sfP$ by $\calV$, $\calM$, and $\calP$, respectively.
Since the first action is done by $P$ in this original proof system,
one can assume without loss of generality that ${P_{x,1} = I_{\calM \tensor \calP}}$
(i.e., the first transformation of $P$ may be regarded as a part of preparing the initial state).
Taking this into account,
define the unitary transformation~$Q_x$ acting over ${(\sfV, \sfM, \sfP)}$ by
\[
Q_x
=
\bigl( V_{x, r(\abs{x})+1} \tensor I_\calP \bigr)
\bigl( I_\calV \tensor P_{x, r(\abs{x})+1} \bigr)
\cdots
\bigl( V_{x, 2} \tensor I_\calP \bigr)
\bigl( I_\calV \tensor P_{x, 2} \bigr)
\bigl( V_{x, 1} \tensor I_\calP \bigr),
\]
and further define the Hermitian matrix~$M_x$ by
\[
M_x = \Pi_\init \conjugate{Q_x} \Pi_\acc Q_x \Pi_\init,
\]
where $\Pi_\init$ is the projection onto the subspace spanned by states
in which all the qubits in $\sfV$ are in state~$\ket{0}$,
and $\Pi_\acc$ is that onto the subspace spanned by accepting states of the original system.
Then the quantum state~$\ket{\phi_x^\ast}$ in ${(\sfV, \sfM, \sfP)}$ defined as
${\ket{\phi_x^\ast} = \ket{0}_\sfV \tensor \ket{\psi_x^\ast}_{(\sfM, \sfP)}}$
is the eigenvector of $M_x$ with its corresponding eigenvalue~$1/2$,
since
\[
\max_{\ket{\phi} \in \calV \tensor \calM \tensor \calP} \bra{\phi} M_x \ket{\phi}
=
\max_{\ket{\psi} \in \calM \tensor \calP} \norm{\Pi_\acc Q_x (\ket{0} \tensor \ket{\psi})}^2
=
\norm{\Pi_\acc Q_x (\ket{0} \tensor \ket{\psi_x^\ast})}^2
=
\bra{\phi_x^\ast} M_x \ket{\phi_x^\ast}
=
\frac{1}{2}.
\]

Now, with a ${(2r+1)}$-message quantum prover~$R$ in the constructed system
who prepares the state~${\ket{\xi_x^\ast} = \bigl( \conjugate{V_{x, r(\abs{x})+1}} \tensor I_\calP \bigr) Q_x \ket{\phi_x^\ast}}$
in ${(\sfV, \sfM, \sfP)}$ as an initial state
and applies ${R_{x,1} = I_{\calM \tensor \calP}}$ and ${R_{x,j} = P_{x, r(\abs{x})-j+3}}$ for each ${j \in \{2, \ldots, r(\abs{x})+1 \}}$,
the constructed protocol may be viewed as
performing the \textsc{Modified Reflection Procedure}
with its underlying quantum register~${\sfQ = (\sfV, \sfM, \sfP)}$,
unitary transformation
\[
U = \bigl( \conjugate{V_{x, r(\abs{x})+1}} \tensor I_\calP \bigr) Q_x,
\]
and projection operators
\begin{align*}
\Delta_0
&
= \Pi_\init,
\\
\Pi_0
&
= \bigl( \conjugate{V_{x, r(\abs{x})+1}} \tensor I_\calP \bigr) \Pi_\acc \bigl( V_{x, r(\abs{x})+1} \tensor I_\calP \bigr).
\end{align*}
As the associated Hermitian operator
\[
M = \Delta_0 \conjugate{U} \Pi_0 U \Delta_0 = \Pi_\init \conjugate{Q_x} \Pi_\acc Q_x \Pi_\init = M_x
\]
has an eigenvalue~$1/2$
with its corresponding eigenvector~${\ket{\phi_x^\ast} = \ket{0}_\sfV \tensor \ket{\psi_x^\ast}_{(\sfM, \sfP)}}$,
from Proposition~\ref{Proposition: completeness of Modified Reflection Procedure},
the protocol results in acceptance with certainty with this prover~$R$ and the initial state~${\ket{\xi_x^\ast} = \bigl( \conjugate{V_{x, r(\abs{x})+1}} \tensor I_\calP \bigr) Q_x \ket{\phi_x^\ast} = U \ket{\phi_x^\ast}}$,
which shows the perfect completeness.

Now for the soundness, suppose that $x$ is in $A_\no$.

Let $R$ be any ${(2r+1)}$-message quantum prover of the constructed system,
and let $\sfR$ be the quantum register consisting of the private qubits of $R$.
Suppose that $R$ applies the unitary transformation~$R_{x, j}$ to the state in ${(\sfM, \sfR)}$
as the $j$th transformation of $R$,
for each ${j \in \{1, \ldots, r(\abs{x})+1 \}}$.

Define the unitary transformation~$Q_x$ acting over ${(\sfV, \sfM, \sfR)}$ by
\[
Q_x
=
\bigl( V_{x, r(\abs{x})+1} \tensor I_\calR \bigr)
\bigl( I_\calV \tensor \conjugate{R_{x, 2}} \bigr)
\cdots
\bigl( V_{x, 2} \tensor I_\calR \bigr)
\bigl( I_\calV \tensor \conjugate{R_{x, r(\abs{x})+1}} \bigr)
\bigl( V_{x, 1} \tensor I_\calR \bigr),
\]
where $\calR$ is the Hilbert space associated with the register~$\sfR$.
Then the constructed protocol may be viewed as
performing the \textsc{Modified Reflection Procedure}
with its underlying quantum register~${\sfQ = (\sfV, \sfM, \sfR)}$,
unitary transformation
\[
U = \bigl( \conjugate{V_{x, r(\abs{x})+1}} \tensor I_\calR \bigr) Q_x,
\]
and projection operators
\begin{align*}
\Delta_0 &= \Pi_\init,\\
\Pi_0 &= \bigl( \conjugate{V_{x, r(\abs{x})+1}} \tensor I_\calR \bigr) \Pi_\acc \bigl( V_{x, r(\abs{x})+1} \tensor I_\calR \bigr).
\end{align*}
The associated Hermitian operator of this \textsc{Modified Reflection Procedure} is given by
\[
M_x = \Delta_0 \conjugate{U} \Pi_0 U \Delta_0 = \Pi_\init \conjugate{Q_x} \Pi_\acc Q_x \Pi_\init.
\]
Consider the following ${(2r+1)}$-message quantum prover~$P'$ in the original system:
$P'$ uses $\sfR$ as a register consisting of his/her private qubits,
and applies $I_{\calM \tensor \calR}$ as his/her first transformation,
and $\conjugate{R_{x, r(\abs{x}) - j + 3}}$ as his/her $j$th transformation,
for ${j \in \{2, \ldots, r(\abs{x})+1\}}$.
Then, from the soundness property of the original system,
no matter which state $P'$ initially prepares,
the accepting probability is at most ${\frac{1}{2} - \frac{c(\abs{x}) - s(\abs{x})}{4}}$,
which implies that all the eigenvalues of $M_x$ is at most ${\frac{1}{2} - \frac{c(\abs{x}) - s(\abs{x})}{4}}$.
Hence, from Proposition~\ref{Proposition: soundness of Modified Reflection Procedure},
the constructed protocol results in rejection with probability at least ${\frac{(c(\abs{x}) - s(\abs{x}))^2}{16}}$,
which ensures the soundness~${1 - \frac{(c-s)^2}{16}}$.

Finally, the protocol given in Figure~\ref{Figure: Verifier's protocol for achieving perfect completeness for odd m}
slightly deviates from the standard form of quantum interactive proof systems
in that the length of the first message from a prover is different from
the lengths of other messages,
which may be easily modified into a standard-form system
that has exactly the same number of messages and completeness and soundness parameters.
\end{proof}

Now we prove Theorem~\ref{Theorem: QIP(m) is in QIP_1(m+1)}.
The proof is essentially the same as the proof of Theorem~\ref{Theorem: QIP(m) is in QIP_1(m) for odd m},
and we analyze the case where the number of messages is even.

\begin{proof}[Proof of Theorem~\ref{Theorem: QIP(m) is in QIP_1(m+1)}]
Let ${A = (A_\yes, A_\no)}$ be in ${\QIP(m,c,s)}$.
Then from Lemma~\ref{Lemma: making QIP perfectly rewindable},
$A$ has a perfectly rewindable $m$-message quantum interactive proof system
with soundness ${\frac{1}{2} - \frac{c-s}{4}}$.
Let $V$ be the verifier of this perfectly rewindable $m$-message quantum interactive proof system.
We construct an ${(m+1)}$-message quantum verifier~$W$
of a new quantum interactive proof system for $A$.
The construction is essentially the same as that in the proof of Theorem~\ref{Theorem: QIP(m) is in QIP_1(m) for odd m}.

Fix an input~$x$.
Suppose that ${m(\abs{x}) \geq 2}$ is even,
and write ${m(\abs{x}) = 2 r(\abs{x})}$ for some ${r(\abs{x}) \in \Natural}$ (the proof of Theorem~\ref{Theorem: QIP(m) is in QIP_1(m) for odd m}
already shows the case where ${m(\abs{x})}$ is odd).
The exact protocol is described in Figure~\ref{Figure: Verifier's protocol for achieving perfect completeness},
where the only difference from the protocol in Figure~\ref{Figure: Verifier's protocol for achieving perfect completeness for odd m}
lies in the condition of judging whether the state is initialized or not
-- now a state is a legal initial state
only when all the qubits in both of $\sfV$ and $\sfM$ must be in state~$\ket{0}$.
Notice that the number of messages in this system is indeed
${1 + 1 + 2(r(\abs{x}) - 1) + 1 = 2r(\abs{x}) + 1 = m(\abs{x}) + 1}$.

\begin{figure}[t!]
\begin{protocol*}{Verifier's Protocol for Achieving Perfect Completeness (Even-Number-Message Case)}
\begin{step}
\item
  Receive quantum registers~$\sfV$~and~$\sfM$.
\item
  Choose ${b \in \Binary}$ uniformly at random.
  If ${b=0}$, move to the \textsc{Reflection Test}
  described in Step~\ref{Reflection Test Two},
  while if ${b=1}$, move to the \textsc{Invertibility Test}
  described in Step~\ref{Invertibility Test Two}.
\item
  (\textsc{Reflection Test})
  \begin{step}
  \item
    Apply $V_{x, r(\abs{x})+1}$ to the state in ${(\sfV, \sfM)}$.
    Perform a phase-flip (i.e., multiply $-1$ in phase)
    if the content of ${(\sfV, \sfM)}$ corresponds to an accepting state of the original system.
    Apply $\conjugate{V_{x, r(\abs{x})+1}}$ to the state in ${(\sfV, \sfM)}$,
    and send $\sfM$ to the prover.
  \item
    For ${j = r(\abs{x})}$ down to $2$, do the following:\\
    Receive $\sfM$ from the prover.
    Apply $\conjugate{V_{x, j}}$ to the state in ${(\sfV, \sfM)}$,
    and send $\sfM$ to the prover.
  \item
    Receive $\sfM$ from the prover.
    Apply $\conjugate{V_{x, 1}}$ to the state in ${(\sfV, \sfM)}$.
    Reject if all the qubits in ${(\sfV, \sfM)}$ are in state~$\ket{0}$,
    and accept otherwise.
  \end{step}
  \label{Reflection Test Two}
\item
  (\textsc{Invertibility Test})
  \begin{step}
  \item
    Send $\sfM$ to the prover.
  \item
    For ${j = r(\abs{x})}$ down to $2$, do the following:\\
    Receive $\sfM$ from the prover.
    Apply $\conjugate{V_{x, j}}$ to the state in ${(\sfV, \sfM)}$,
    and send $\sfM$ to the prover.
  \item
    Receive $\sfM$ from the prover.
    Apply $\conjugate{V_{x, 1}}$ to the state in ${(\sfV, \sfM)}$.
    Accept if all the qubits in ${(\sfV, \sfM)}$ are in state~$\ket{0}$,
    and reject otherwise.
  \end{step}
  \label{Invertibility Test Two}
\end{step}
\end{protocol*}
\caption{Verifier's protocol for achieving perfect completeness with ${m(\abs{x}) = 2r(\abs{x})}$.}
\label{Figure: Verifier's protocol for achieving perfect completeness}
\end{figure}

The analysis on this protocol is essentially the same as that in the proof of 
Theorem~\ref{Theorem: QIP(m) is in QIP_1(m) for odd m},
and is omitted.
\end{proof}


\subsection{Cases with Quantum Multi-Prover Interactive Proofs}
\label{Subsection: QMIP(m) is in QMIP_1(m+1)}

With essentially the same arguments discussed in this section, 
we can show similar properties even for quantum multi-prover interactive proof systems.
The model of quantum multi-prover interactive proofs we use is that in the most general setting
(i.e., both of a verifier and provers use quantum computation and communications,
and provers can share arbitrary entanglement of arbitrarily large size).
Let ${\QMIP(k,m,c,s)}$ be the class of problems having $m$-turn quantum $k$-prover interactive proof systems with completeness~$c$ and soundness~$s$.
See Ref.~\cite{KemKobMatVid09CC} for rigorous definitions of the quantum multi-prover model and resulting complexity classes.
Here we give only the statements of theorems,
as proofs of those theorems are essentially same as Theorems~\ref{Theorem: QIP(m) is in QIP_1(m+1)}~and~\ref{Theorem: QIP(m) is in QIP_1(m) for odd m}.
Note that these theorems give a more communication-efficient way of
achieving perfect completeness in quantum multi-prover interactive proofs
than the original method presented in Ref.~\cite{KemKobMatVid09CC},
where the number of turns increases by a factor of three.

\begin{theorem}
For any polynomially bounded functions~$\function{k,m}{\Nonnegative}{\Natural}$
and polynomial-time computable functions~$\function{c,s}{\Nonnegative}{[0,1]}$
satisfying ${m \geq 2}$ and ${c - s \geq 1/p}$ for some polynomially bounded function~$\function{p}{\Nonnegative}{\Natural}$,
\[
\QMIP(k, m, c, s) \subseteq \QMIP \Bigl( k, m+1, 1, 1 - \frac{(c-s)^2}{16} \Bigr).
\]
\label{Theorem: QMIP(k,m) is in QMIP_1(k,m+1)}
\end{theorem}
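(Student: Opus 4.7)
The plan is to adapt the construction used in the proof of Theorem~\ref{Theorem: QIP(m) is in QIP_1(m+1)} to the multi-prover setting, following the two-step recipe: first convert the given QMIP system into a perfectly rewindable one, and then simulate the \textsc{Modified Reflection Procedure} of Subsection~\ref{Subsection: Modified Reflection Procedure} via a backward execution of the protocol with the provers. The mild gap assumption ${c-s \geq 1/p}$ will be used exactly as in the single-prover case to arrange acceptance probability close to $1/2$ on yes-instances.

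The first step is to prove a multi-prover analog of Lemma~\ref{Lemma: making QIP perfectly rewindable}: any problem in $\QMIP(k,m,c,s)$ has a perfectly rewindable $m$-turn $k$-prover quantum interactive proof system with soundness~${\frac{1}{2} - \frac{c-s}{4}}$. I would first rescale the acceptance probability to lie in the interval bracketing~$1/2$ by a standard biased-coin trick in the verifier's final decision (approximating the required probability with the given gate set as in the single-prover case), and then fix the acceptance probability to be exactly~$1/2$ on yes-instances by having the verifier transmit an extra single-qubit register~$\sfB$ to one designated prover as part of the last message and then receive it back, accepting only if the original test accepts \emph{and} $\sfB$ contains~$1$. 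The designated honest prover applies a suitable one-qubit unitary on $\sfB$ that scales the maximum acceptance probability down to exactly~$1/2$. This trick respects the no-communication condition because only the verifier and the designated prover are involved.

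The second step is to build the new verifier $W$ that, in the even-$m$ case, adds one extra turn and, in the odd-$m$ case ($m\geq 3$), keeps the number of turns, exactly as in Figures~\ref{Figure: Verifier's protocol for achieving perfect completeness for odd m} and~\ref{Figure: Verifier's protocol for achieving perfect completeness}. For fixed transformations of the $k$ provers, the original interaction is described by a single unitary~$Q_x$ on ${(\sfV, \sfM^{(1)}, \sfP^{(1)}, \ldots, \sfM^{(k)}, \sfP^{(k)})}$ obtained by composing the verifier's transformations with the tensor products of the prover transformations, which commute since the provers act on disjoint registers. Setting ${U = (V_{x,r+1}^\dagger \tensor I)Q_x}$, with $\Delta_0 = \Pi_{\init}$ and ${\Pi_0 = (V_{x,r+1}^\dagger \tensor I)\Pi_{\acc}(V_{x,r+1} \tensor I)}$, the new protocol simulates the \textsc{Modified Reflection Procedure} by executing the backward simulation through the provers turn by turn. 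Completeness then follows from Proposition~\ref{Proposition: completeness of Modified Reflection Procedure} using the eigenvector of ${M = \Delta_0 U^\dagger \Pi_0 U \Delta_0}$ guaranteed by perfect rewindability.

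The main obstacle is the soundness reduction: given an arbitrary cheating strategy of the $k$ provers in the new protocol, one must exhibit a corresponding strategy in the original perfectly rewindable protocol so that Proposition~\ref{Proposition: soundness of Modified Reflection Procedure} can be applied. In the single-prover case this is done by taking adjoints; here the subtlety is that each prover must invert its own transformations without using information from the other provers. This is fine because the $i$-th prover in the hypothetical original protocol can simply play $(R_{x,r(|x|)-j+3}^{(i)})^\dagger$ at its $j$-th turn using only its own private register~$\sfR^{(i)}$ and its own message register~$\sfM^{(i)}$, so no prover-to-prover communication is introduced and the induced Hermitian operator $M_x$ has the form required by the proposition. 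All eigenvalues of $M_x$ are then bounded by ${\frac{1}{2}-\frac{c(|x|)-s(|x|)}{4}}$, yielding rejection probability at least~${(c-s)^2/16}$, as claimed. The same argument, using the odd-$m$ analog of the construction, establishes the companion statement Theorem~\ref{Theorem: QMIP(k,m) is in QMIP_1(k,m) for odd m} without increasing the number of turns.
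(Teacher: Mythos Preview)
Your proposal is correct and matches the paper's approach. The paper does not give a detailed proof of this theorem, stating only that the arguments are ``essentially same'' as those for Theorems~\ref{Theorem: QIP(m) is in QIP_1(m+1)} and~\ref{Theorem: QIP(m) is in QIP_1(m) for odd m}; you have correctly spelled out the adaptation, including the two points that need checking in the multi-prover setting---that the extra register~$\sfB$ for perfect rewindability can be handled by a single designated prover, and that the soundness reduction (taking adjoints of the cheating provers' transformations) produces a valid non-communicating strategy because each prover's inverse acts only on its own private and message registers.
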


\begin{theorem}
For any polynomially bounded function~$\function{k}{\Nonnegative}{\Natural}$,
polynomially bounded odd-valued function~$\function{m}{\Nonnegative}{2\Natural + 1}$,
and polynomial-time computable functions~$\function{c,s}{\Nonnegative}{[0,1]}$
satisfying ${m \geq 3}$ and ${c - s \geq 1/p}$ for some polynomially bounded function~$\function{p}{\Nonnegative}{\Natural}$,
\[
\QMIP(k, m, c, s) \subseteq \QMIP \Bigl(k, m, 1, 1 - \frac{(c-s)^2}{16} \Bigr).
\]
\label{Theorem: QMIP(k,m) is in QMIP_1(k,m) for odd m}
\end{theorem}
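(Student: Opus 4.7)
The plan is to adapt the proof of Theorem~\ref{Theorem: QIP(m) is in QIP_1(m) for odd m} to the multi-prover setting, relying crucially on the observation that the \textsc{Modified Reflection Procedure} analysis (Propositions~\ref{Proposition: completeness of Modified Reflection Procedure}~and~\ref{Proposition: soundness of Modified Reflection Procedure}) is oblivious to whether the unitary~$U$ driving the protocol comes from communication with a single prover or with $k$ cooperating provers sharing entanglement. Fix a polynomially bounded odd-valued ${m = 2r+1 \geq 3}$ and ${A \in \QMIP(k,m,c,s)}$.

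First, I would establish a multi-prover analog of Lemma~\ref{Lemma: making QIP perfectly rewindable}, asserting that $A$ has a perfectly rewindable $k$-prover $m$-turn system with soundness ${\frac{1}{2} - \frac{c-s}{4}}$. The construction transfers verbatim from the single-prover case: first rescale the acceptance condition so the system has completeness ${\frac{1}{2} + \frac{c-s}{4}}$ and soundness ${\frac{1}{2} - \frac{c-s}{4}}$; then, since $m$ is odd the verifier still has a final turn, attach to that turn an extra control qubit~$\sfB$ sent to (say) the first prover, who returns it in his final turn after applying a one-qubit rotation that brings the optimal acceptance probability exactly to $1/2$ (using knowledge of the optimal value~$p_{\max}$ of the rescaled system, which the honest provers can compute). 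The other provers are passive with respect to $\sfB$. This does not alter the turn count because it piggybacks on the last V-turn and last P-turn already present.

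Next, given this perfectly rewindable verifier~$V$ with transformations ${\{V_{x,j}\}_{j \in \{1, \ldots, r+1\}}}$, I would define the new verifier~$W$ exactly as in Figure~\ref{Figure: Verifier's protocol for achieving perfect completeness for odd m}, but now each prover turn engages all $k$ provers in parallel: $W$ receives ${(\sfV, \sfM_1, \ldots, \sfM_k)}$ as the first turn from the provers (the state that should occur just after the last original prover turn), flips a fair coin to decide between the \textsc{Reflection Test} and the \textsc{Invertibility Test}, performs the appropriate local operation on ${(\sfV, \sfM_1, \ldots, \sfM_k)}$, and then runs $r-1$ rounds of backward simulation, sending ${(\sfM_1, \ldots, \sfM_k)}$ to the provers who are expected to apply the inverses of their original transformations. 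The turn count is ${1 + (2r-1) + 1 = 2r+1 = m}$ as required. The \emph{completeness} analysis is then verbatim that of Theorem~\ref{Theorem: QIP(m) is in QIP_1(m) for odd m}: the honest strategy is to let each prover~$i$ apply the inverse~${\conjugate{P^{(i)}_{x, r+1-j+2}}}$ in his $j$-th turn on top of the shared entangled state that the original honest provers would produce at the end of the original protocol; Proposition~\ref{Proposition: completeness of Modified Reflection Procedure} together with perfect rewindability of $V$ (eigenvalue~$1/2$ of the associated Hermitian operator $M_x$) yields acceptance with certainty.

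For \emph{soundness}, any cheating strategy in the new system, consisting of private registers ${\sfR_1, \ldots, \sfR_k}$ with shared entanglement together with transformations ${R^{(i)}_{x,j}}$, induces a ``forward'' strategy in the original system obtained by letting prover~$i$ apply $\conjugate{R^{(i)}_{x, r+1-j+2}}$ in his $j$-th turn on the same entangled initial state; this constructs a unitary $Q_x$ whose associated Hermitian operator ${M_x = \Pi_\init \conjugate{Q_x} \Pi_\acc Q_x \Pi_\init}$ governs both the original protocol's acceptance and the new \textsc{Modified Reflection Procedure}. Perfect rewindability forces every eigenvalue of $M_x$ to lie outside ${\bigl( \frac{1}{2} - \frac{c-s}{4}, \frac{1}{2} + \frac{c-s}{4} \bigr)}$, whence Proposition~\ref{Proposition: soundness of Modified Reflection Procedure} gives rejection probability at least~${(c-s)^2/16}$. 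The main technical obstacle lies precisely in this reduction: one must verify that the multi-prover transformations, which act on separate private registers and cannot signal between provers, can indeed be inverted independently to form a legitimate strategy in the original protocol, and that the shared entanglement can serve as the ``initial state'' of the induced original-system strategy. This is straightforward because each ${R^{(i)}_{x,j}}$ acts only on ${(\sfM_i, \sfR_i)}$, so ${\conjugate{R^{(i)}_{x,j}}}$ is again a valid local unitary; no coordination or cross-prover communication is introduced by the conjugation, and the no-signaling structure is preserved. A minor cleanup step is to bring the resulting protocol into standard QMIP form, as in the single-prover proof, which does not affect the turn count or the completeness/soundness parameters.
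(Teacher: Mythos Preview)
Your proposal is correct and matches the paper's approach: the paper does not give a separate proof for this theorem but simply states that the argument is essentially the same as that of Theorem~\ref{Theorem: QIP(m) is in QIP_1(m) for odd m}, which is precisely the adaptation you outline. Note also that the multi-prover version of perfect rewindability you sketch is already established in Ref.~\cite{KemKobMatVid09CC} (indeed, the paper remarks that the concept was originally introduced there), so you may cite that rather than reprove it.
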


\begin{remark}
Similar to the single-prover case, in fact,
it is sufficient for the claims in Theorems~\ref{Theorem: QMIP(k,m) is in QMIP_1(k,m+1)}~and~\ref{Theorem: QMIP(k,m) is in QMIP_1(k,m) for odd m}
that the functions~$c$ and $s$ satisfy
${c - s \geq 2^{-p}}$ for some polynomially bounded function~$\function{p}{\Nonnegative}{\Natural}$.
\end{remark}


\subsection*{Acknowledgements}

The authors are grateful to Richard Cleve for useful discussions,
Attila Pereszl\'enyi for pointing out an error in Lemma~\ref{Lemma: closeness to the mixture of 2-fold products of CJ states}
in the earlier versions of this paper,
and an anonymous reviewer 
for helpful comments on improving some parameters in the proof of Theorem~\ref{Theorem: QMA is in 2^k-EPR-QMA(1,s)}. 
This work is supported by
the Grant-in-Aid for Scientific Research~(A)~No.~24240001 of the Japan Society for the Promotion of Science
and the Grant-in-Aid for Scientific Research on Innovative Areas~No.~24106009 of
the Ministry of Education, Culture, Sports, Science and Technology in Japan.
HK also acknowledges support from
the Grant-in-Aid for Scientific Research~(B)~No.~21300002 of the Japan Society for the Promotion of Science.
HN also acknowledges support from
the Grant-in-Aids for Scientific Research~(A)~Nos.~21244007~and~23246071 of the Japan Society for the Promotion of Science
and the Grant-in-Aid for Young Scientists~(B)~No.~22700014 of
the Ministry of Education, Culture, Sports, Science and Technology in Japan.





\newcommand{\etalchar}[1]{$^{#1}$}
\providecommand{\noopsort}[1]{}

\end{document}